\newtheorem{theorem}{\bf Theorem} \newtheorem{definition}[theorem]{\bf Definition} 
\newtheorem{lemma}[theorem]{\bf Lemma} \newtheorem{remark}[theorem]{\bf Remark}
\newtheorem{corollary}[theorem]{\bf Corollary}  
\newtheorem{assumption}[theorem]{\bf Assumption}  
\newtheorem{problem}{\bf Problem}
\newcommand{\I}{\mathbb{I}}
\newcommand{\R}{\mathbb{R}}
\newcommand{\N}{\mathbb{N}}
\colorlet{istorange}{orange}
\colorlet{istgreen}{green!50!black}
\colorlet{istblue}{blue} 
\colorlet{istred}{red!90!black}
\title{Data-driven analysis and controller design for discrete-time systems under aperiodic sampling}
\author{Stefan Wildhagen, Julian Berberich, Michael Hertneck and Frank Allg\"ower
	\thanks{This work was funded by the Deutsche Forschungsgemeinschaft (DFG, German Research Foundation) under 285825138 and under Germany’s Excellence Strategy - EXC 2075 - 390740016. J. Berberich thanks the International Max Planck Research School for Intelligent Systems (IMPRS-IS) for supporting him. The authors are with the University of Stuttgart, Institute for Systems Theory and Automatic Control, 70569 Stuttgart, Germany (email:
		{\{wildhagen}{,berberich}{,hertneck}{,allgower\}}{@ist.uni-stuttgart.de}).}}
\begin{document}

\maketitle

\begin{abstract}
	This article is concerned with data-driven analysis of discrete-time systems under aperiodic sampling, and in particular with a data-driven estimation of the maximum sampling interval (MSI). The MSI is relevant for analysis of and controller design for cyber-physical, embedded and networked systems, since it gives a limit on the time span between sampling instants such that stability is guaranteed. We propose tools to compute the MSI for a given controller and to design a controller with a preferably large MSI, both directly from a finite-length, noise-corrupted state-input trajectory of the system. We follow two distinct approaches for stability analysis, one taking a robust control perspective and the other a switched systems perspective on the aperiodically sampled system. In a numerical example and a subsequent discussion, we demonstrate the efficacy of our developed tools and compare the two approaches.
\end{abstract}

\begin{IEEEkeywords}
	Data-driven control, robust control, switched systems, sampled-data control.
\end{IEEEkeywords}

\section{Introduction} \label{sec:introduction}
\IEEEPARstart{T}{he} widespread and ever-increasing prevalence of cyber-physical systems (CPSs) and of embedded and networked control systems (NCSs) has sparked a large interest in the study of sampled-data control systems in the recent decades \cite{chen1995optimal,hetel2017recent}. Especially \textit{aperiodically} sampled systems emerged as a tool to abstract and analyze many real-world scenarios where the sampling time points are not guaranteed to be, or are even designed \textit{not} to be, equidistant: NCSs with packet dropouts, delays and aperiodic scheduling strategies, or event- and self-triggered approaches, to name a few examples. A fundamental concept for systems under aperiodic sampling is the maximum sampling interval (MSI), i.e., the largest time span between two sampling instants such that stability is preserved. Knowledge or at least a good approximation of the MSI is often required for analysis and design of aperiodic sampling strategies. A multitude of different methods for analyzing aperiodically sampled systems and estimating the MSI have been discussed in the literature, such as the time-delay approach \cite{fridman2004robust,fridman2010refined,seuret2018wirtinger}, the impulsive/hybrid systems approach \cite{nagshtabrizi2008sampled,carnevale2007lyapunov}, the switched systems approach \cite{hetel2011discrete,xiong2007packet_loss,yu2004dropout} or the robust input/output approach \cite{mirkin2007some,fujioka2009IQC}.

All of the methods mentioned above require an accurate model to be applicable, although obtaining such a model via first principles can be a challenging task. Measured trajectories of a system, by contrast, can typically be obtained easily. This fact has been leveraged to estimate a model based on given data in the field of system identification~\cite{ljung1987system}, and more recently, to perform system analysis and controller design directly via measured data \cite{hou2013model}.
An interesting stream of research, which is closely related to the results in this paper, has been sparked by the finding that the entire behavior of a linear time-invariant (LTI) system can be described by a single trajectory~\cite{willems2005note}. A number of different works considered topics such as verifying dissipativity properties \cite{maupong2017lyapunov,koch2021provably}, model predictive control \cite{coulson2019deepc,berberich2021guarantees} or state-feedback design \cite{persis2020formulas,berberich2020design,waarde2020from}, all based on measured data, as well as combining data with prior knowledge for controller design \cite{berberich2020combining}.

The main contribution of this work is to investigate data-driven analysis of discrete-time systems under aperiodic sampling. In contrast to existing model-based sampled-data control approaches which usually handle continuous-time systems, we deal with discrete-time systems, since, in any practical scenario, data can only be measured at discrete time instants. In addition, the discrete-time perspective is justified by the fact that CPSs, embedded control systems and NCSs involve digital devices, which dictate a clock rate for the entire control system. In particular, we develop techniques a) to estimate the MSI for a given controller and b) to design controllers with a preferably high MSI bound, both directly from a noise-corrupted, finite-length state-input trajectory of the unknown system. We present two different approaches to achieve these goals: The first, referred to as robust input/output approach, is based on splitting the aperiodically sampled system into an LTI system and a delay operator, and a subsequent stability analysis using input/output properties of the delay operator and robust control tools; the second, referred to as switched systems approach, comprehends the aperiodically sampled system with its sampling period-dependent dynamics as a switched system, and leverages the well-understood analysis for this system class. As we will see later, the robust input/output approach has a lower computational complexity and comes with little conservatism even if the available data are noisy, whereas the switched systems approach can outperform the latter and provide very tight estimations of the MSI especially when rich data are available. In both approaches, we leverage a recently proposed data-driven system parametrization \cite{waarde2020from,berberich2020combining} to verify the stability conditions robustly for all systems consistent with the data.

To the best of our knowledge, this is the first work to offer a data-driven analysis of discrete-time aperiodically sampled systems. Our recent work \cite{berberich2021aper_samp} solves a similar problem in the continuous-time domain, combining data-driven control methods with the time-delay approach to sampled-data systems \cite{fridman2010refined}. Note that sampled-data systems form a subclass of time-delay systems with bounded delay. For this reason, the results on data-driven control of time-delay systems in \cite{rueda2021delay} could be used to treat the problem considered in this paper as well. As another alternative, one could identify the unknown system from the measured data, and in a second step use the identified model in existing model-based discrete-time stability conditions for aperiodically sampled systems \cite{hetel2011discrete,xiong2007packet_loss,yu2004dropout,seuret2012novel} in a two-step procedure. However, we note that our proposed approaches are more direct, allowing to estimate the MSI and to design a controller directly from data in a single step. Furthermore, obtaining tight estimation bounds from noisy data of finite length is a challenging problem when estimating a model, e.g., via least-squares estimation \cite{matni2019self}, which might be problematic since we ultimately aim for guarantees for the true underlying system. Nonetheless, methods based on set membership estimation \cite{milanese1991optimal,belforte1990parameter} are indeed guaranteed to contain the true system in the error bounds. We compare both mentioned alternatives, \cite{rueda2021delay} and the two-step procedure based on set membership estimation, to our proposed approaches in a numerical example in Section \ref{sec:num}.

The remainder of this article is structured as follows. In Section \ref{sec:setup}, we introduce the system under aperiodic sampling, the available data and the considered problems. The robust input/output approach and the switched systems approach can be found in Sections \ref{sec:IO} and \ref{sec:switched}, respectively. A numerical analysis of the proposed approaches is performed in Section \ref{sec:num}. We summarize this article and give potential directions for future work in Section \ref{sec:summary}.

We denote by $\N$ the set of natural numbers, $\N_0\coloneqq \N\cup \{0\}$ and $\N_{[a,b]}\coloneqq\N_0\cap[a,b]$, $\N_{\ge a} \coloneqq \N_0\cap[a,\infty)$, $a,b\in\N_0$. We denote by $I$ the identity matrix and by $0$ the zero matrix of appropriate dimension, and by $I_n$ the identity matrix of dimension $n$. Let $A\in\R^{n\times n}$ be a real matrix. We write $A\succ0$ $(A\succeq 0)$ if $A$ is symmetric and positive (semi-)definite, and we denote negative (semi-)definiteness similarly. Let $\sigma_\text{max}(A)$ denote the maximum singular value of $A$. We write $\lVert v\rVert_2$ for the 2-norm of a vector $v\in\R^n$, and $\lVert A\rVert_2$ for the induced 2-norm of $A$. The Hermitian transpose of a complex matrix $B\in\mathbb{C}^{n\times m}$ is denoted by $B^*$. The Kronecker product of two matrices $C\in\R^{n\times m}$ and $D\in\R^{p\times r}$ is denoted by $C\otimes D$. We write $\star$ if an element in a matrix can be inferred from symmetry. We denote by $\ell_2$ the space of square integrable signals and by $\ell_{2e}$ the extended $\ell_2$ space. For some $T\in\N_0$, we denote by $\cdot_T:\ell_{2e}\to\ell_{2e}$ the truncation operator, which assigns to a signal $y\in\ell_{2e}$ the signal $y_T$ which satisfies $y_T(t)=y(t)$ for all $t\in\N_{[0,T]}$ and $y_T(t)=0$ for all $t\in\N_{\ge T+1}$. We write $\lVert y\rVert_{\ell_2}$ for the $\ell_2$ norm of a signal $y\in\ell_{2}$ and $\lVert \Delta\rVert_{\ell_2}\coloneqq\inf\{\gamma\;\vert\; \lVert\Delta(y)_T\rVert_{\ell_2}\le\gamma\lVert y_T\rVert_{\ell_2}, \; y\in\ell_{2e}, \; T\in\N_0\}$ for the $\ell_2$ gain of an operator $\Delta: \ell_{2e}\to\ell_{2e}$.

\section{Setup and Problem Statement} \label{sec:setup}

\subsection{System under aperiodic sampling}

Let us consider a discrete-time LTI system
\begin{equation} \label{eq:system}
x(t+1) = A_\text{tr} x(t) + B_\text{tr} u(t), \; x(0) = x_0 \in \R^n
\end{equation}
with state $x(t)\in\R^n$, input $u(t)\in\R^m$ and time instants $t\in\N_0$. A typical application for such a discrete-time setup are continuous-time processes that are sensed, controlled and actuated by digital devices, which is commonly the case in CPSs, embedded control systems and NCSs. We assume throughout this article that the true system matrices $A_\text{tr}$, $B_\text{tr}$ are \textit{unknown} and that only state-input measurements are available.

To close the loop, \eqref{eq:system} is sampled and controlled at aperiodic \textit{sampling instants} $t_k\in\N_0$, where
\begin{equation}
t_0 = 0, \quad t_{k+1}-t_k \ge 1,
\end{equation}
such that the \textit{sampling interval} $h_k\coloneqq t_{k+1}-t_k$ is time-varying. A sampled version of the plant state is $\{x(t_k)\}_{k=0}^{\infty}$, from which the controller computes a sequence of control values $\{u(t_k)\}_{k=0}^{\infty}$ via a linear state-feedback law $u(t_k)=K x(t_k)$, $K\in\R^{m\times n}$. We assume that the control input applied to the plant is held constant in between sampling instants, i.e., $u(t)=u(t_k), \; t\in \N_{[t_k,t_{k+1}-1]}$. The aperiodically sampled system in closed loop can then be written as
\begin{equation} \label{eq:system_aper_sampled}
\begin{aligned}
x(t\hspace{-1pt}+\hspace{-1pt}1) &= A_\text{tr} x(t) \hspace{-1pt}+\hspace{-1pt} B_\text{tr} K x(t_k), \: \forall t\in\N_{[t_k,t_{k+1}\hspace{-0.5pt}-\hspace{-0.5pt}1]}, \: \forall k\in\N_{0} \\
t_{k+1} &= t_k + h_k, \; \forall k\in\N_{0} \\
t_0 &= 0, \quad x(0) = x_0.
\end{aligned}
\end{equation}

Aperiodic sampling is relevant for a multitude of practical scenarios. For instance, NCSs, where the communication between plant and controller takes place over a shared communication network, exhibit several phenomena that give rise to aperiodic sampling: Packet dropouts, which occur especially in wireless and/or congested networks, result in varying sampling periods \cite{xiong2007packet_loss}. Further, the reception of control updates can be out of order due to time-varying transmission delays, in which case it might be useful to discard the outdated ones \cite{hespanha2007survey,zhang2016survey}. Another common scenario are aperiodic or contention-based network access protocols. In addition, it is often favorable not to sample and transmit the system's state periodically to counteract overloading of the network, as is done for instance in (periodic) event-triggered and self-triggered approaches \cite{heemels2012introduction,heemels2012periodic}. In all these cases, it is possible to write the resulting control system with aperiodic sampling in the form \eqref{eq:system_aper_sampled}.

In the above-named scenarios, although the exact sampling instants are unknown in advance, a bound on the sampling interval is often known or can be estimated. This is the case, e.g., if the number of consecutive packet losses and the maximum delay are upper bounded \cite{xiong2007packet_loss,zhang2016survey}. In event- and self-triggered schemes, the inter-event times can be proven to be upper bounded under certain conditions \cite{postoyan2019inter,gleizer2021bisimulation}. Further, in practical implementations of such schemes, it is often sensible to constrain the length of the sampling interval to guarantee a certain amount of attention dedicated to the process \cite{gleizer2021bisimulation}. Hence, in this article, we are concerned with stability analysis and controller design for \eqref{eq:system_aper_sampled} in case the sampling interval is arbitrary but bounded, i.e., $h_k$ takes arbitrary values in the set $\N_{[1,\overline{h}]}$ for a given $\overline{h}\in\N$. In addition, we are interested in estimating a preferably tight lower bound on the largest $\overline{h}$ such that \eqref{eq:system_aper_sampled} is asymptotically stable, i.e., on the MSI.
\begin{remark}
	In most classical works on aperiodically sampled systems, the sampling interval takes values in $h_k\in(0,\infty)$ (cf. \cite{hetel2017recent,fridman2004robust,fridman2010refined,nagshtabrizi2008sampled,carnevale2007lyapunov,hetel2011discrete,mirkin2007some,fujioka2009IQC}), as continuous-time systems are considered therein.	In contrast, we consider $h_k\in\N$ in this article since we work in discrete time. As a result, the considered setup can be seen as the discrete-time equivalent to the classical approaches formulated in continuous time.
\end{remark}

\subsection{Available data} \label{sec:setup_data}

The main challenge in this article is that analysis and controller design are performed without knowledge of the true matrices $A_\text{tr}$ and $B_\text{tr}$. Instead, we suppose that state-input data $\{x(t)\}_{t=0}^{N}$, $\{u(t)\}_{t=0}^{N-1}$, $N\in\N$, of the perturbed system
\begin{equation} \label{eq:system_data_perturbed}
x(t+1) = A_\text{tr} x(t) + B_\text{tr} u(t) + B_d d(t)
\end{equation}
are available, where $d(t)\in\R^{n_d}$ is an unknown disturbance and $B_d$ is a known matrix. $B_d$ can be used to incorporate knowledge on the way the disturbance enters the system, e.g., if it is known that it only affects a subset of the states. If no such prior knowledge is available, then one may set $B_d=I$.

The particular disturbance sequence $\{\hat{d}(t)\}_{t=0}^{N-1}$ that affected the measured data is unknown and written in matrix form as $\hat{D}\coloneqq \begin{bmatrix} \hat{d}(0) & \cdots & \hat{d}(N-1)\end{bmatrix}$. As in \cite{berberich2020combining}, we assume that a multiplier description for the disturbance is available.
\begin{assumption} \label{ass:disturbance_bound}
	The disturbance satisfies $\hat{D}\in\mathcal{D}$, where
	\begin{align*}
	\mathcal{D}\coloneqq\Big\{D\in\mathbb{R}^{n_d\times N}\Bigm|
	\begin{bmatrix}D^\top\\I\end{bmatrix}^\top
	P_d
	\begin{bmatrix}D^\top\\I\end{bmatrix}\succeq0, \; \forall P_d\in\boldsymbol{P}_d \Big\}, 
	\end{align*}
	and where $\boldsymbol{P}_d$ is a convex cone of symmetric matrices admitting an LMI representation. Moreover, there exists $P_d\in\boldsymbol{P}_d$ such that $\begin{bmatrix}I & 0\end{bmatrix}P_d\begin{bmatrix}I & 0\end{bmatrix}^\top\prec 0$.
\end{assumption}
Assumption \ref{ass:disturbance_bound} encompasses various special cases, e.g., quadratic bounds on the entire sequence
\begin{equation*}
\begin{bmatrix}\hat{D}^\top\\I\end{bmatrix}^\top
\begin{bmatrix}Q_d & S_d \\ S_d^\top & R_d \end{bmatrix}
\begin{bmatrix}\hat{D}^\top\\I\end{bmatrix}\succeq0, \; Q_d\prec 0,
\end{equation*}
as used similarly, e.g., in \cite{koch2021provably,persis2020formulas,berberich2020design,waarde2020from,berberich2021aper_samp}, via
\begin{equation} \label{eq:quad_multiplier}
\boldsymbol{P}_d = \Big\{\tau\begin{bmatrix}Q_d & S_d \\ S_d^\top & R_d \end{bmatrix}\Bigm|
\tau > 0 \Big\}.
\end{equation}
Alternatively, componentwise 2-norm bounds $\lVert\hat{d}(t)\rVert_2\le\overline{d}$ can be considered via diagonal multipliers
\begin{equation} \label{eq:diag_multiplier}
	{\begingroup
	\setlength\arraycolsep{1pt}
	\boldsymbol{P}_d \hspace{-1pt}=\hspace{-1pt} \Big\{\hspace{-3pt}\begin{bmatrix}-\text{diag}\{p_i\}_{i=1}^N & 0 \\ 0 & \sum_{i=1}^{N} p_i\overline{d}^2 I_{n_d} \end{bmatrix}\hspace{-2pt}\Bigm|\hspace{-1pt}
	p_i \ge 0, \:  i\in\N_{[1,N]} \hspace{-1pt}\Big\}.
	\endgroup}
\end{equation}
Such bounds have also been considered for the purpose of robust data-driven control in \cite{bisoffi2021trade}. The latter is especially useful for describing the practically relevant case of point-wise disturbance bounds, since it provides a tight description thereof and, other than quadratic full-block bounds, guarantees that the set of matrices compatible with the data does not grow when more data is added. Let $c_d(N)$ denote the number of decision variables involved in the respective multiplier description. Then, implementation of diagonal multipliers involves $c_d(N)=N$ decision variables compared to $c_d(N)=1$ for quadratic multipliers. A thorough discussion of Assumption \ref{ass:disturbance_bound} and possible choices of $\boldsymbol{P}_d$ can be found in \cite[Subsection II.C]{berberich2020combining}.

Furthermore, we pose the following assumption on $B_d$.
\begin{assumption} \label{ass:Bd}
	The matrix $B_d$ has full column rank.
\end{assumption}
This is essentially without loss of generality: Should it not be the case, one can define another pair $d',B_{d'}$ with the same influence on \eqref{eq:system_data_perturbed}, but with $B_{d'}$ satisfying Assumption \ref{ass:Bd} \cite{berberich2020combining}.
\begin{remark}
	As an alternative to the disturbance description introduced in \eqref{eq:system_data_perturbed} and Assumptions \ref{ass:disturbance_bound} and \ref{ass:Bd}, one could also consider a formulation involving measurement noise $w$ (cf. \cite[Subsection V.A]{persis2020formulas}). If a multiplier description of $A_\text{tr}W-W^+$ (with $W,W^+$ defined similarly as in \eqref{eq:data_matrices}) was available, the results presented in this paper would require only minor modifications to accommodate such a setup.
\end{remark}

Finally, we note that the measurements are taken at each of the time instants. It is not restrictive to assume that such data are available, since in sampled-data systems, the aperiodic sampling comes into play only for closed-loop operation. By contrast, the required state and input trajectories can be collected independently of each other in an open-loop experiment. To illustrate this, consider an NCS where communication with the controller takes place via a network. In such a setup, it is nonetheless possible to probe the system with an open-loop input trajectory and to record the system response at the sensor without any need for using the network.

\subsection{Problem statement}

Having introduced aperiodically sampled systems and the available data, we may now formalize the problems considered in this paper.
\begin{problem}[Aperiodic Sampling: Analysis] \label{prob:arbitrary_sampling_an}
	Given state-input measurements $\{x(t)\}_{t=0}^{N}$, $\{u(t)\}_{t=0}^{N-1}$ of \eqref{eq:system_data_perturbed}, a disturbance description $\mathcal{D}$, a matrix $B_d$, an upper bound $\overline{h}$ for the sampling interval, and furthermore, a controller $K$, determine if the origin of the closed-loop aperiodically sampled system \eqref{eq:system_aper_sampled} is asymptotically stable for an arbitrarily time-varying sampling interval $h_k\in\N_{[1,\overline{h}]}$.
\end{problem}
\begin{problem}[Aperiodic Sampling: Controller Design] \label{prob:arbitrary_sampling_co}
	Given the same setup as in Problem \ref{prob:arbitrary_sampling_an}, design a controller $K$ such that the origin of the closed-loop aperiodically sampled system \eqref{eq:system_aper_sampled} is asymptotically stable for an arbitrarily time-varying sampling interval $h_k\in\N_{[1,\overline{h}]}$.
\end{problem}
\begin{problem}[Maximum Sampling Interval] \label{prob:MSI}
	Determine the largest $\overline{h}$, denoted by $\overline{h}_\text{MSI}$, such that Problem \ref{prob:arbitrary_sampling_an}, respectively Problem \ref{prob:arbitrary_sampling_co}, admits a solution. In other words, determine a possibly tight lower bound of the MSI.
\end{problem}
We will focus on Problems \ref{prob:arbitrary_sampling_an} and \ref{prob:arbitrary_sampling_co} in the technical sections of this article. This is because once we have obtained a solution to Problems \ref{prob:arbitrary_sampling_an} and \ref{prob:arbitrary_sampling_co}, a solution to Problem \ref{prob:MSI} can be found via a linear search over $\overline{h}$ \cite[Chapter 6]{knuth1997art}, or, often more efficiently, via an exponential search \cite{bentley1976almost}. We note that since $\overline{h}$ takes integer values, these algorithms are guaranteed to be successful \cite[Chapter 6]{knuth1997art}, i.e., to return the tightest MSI bound.

We propose two approaches to address Problems \ref{prob:arbitrary_sampling_an} and \ref{prob:arbitrary_sampling_co}. In the first approach (Section \ref{sec:IO}), we write the aperiodically sampled system \eqref{eq:system_aper_sampled} as an interconnection of a classical LTI system and a delay operator. In the second approach (Section \ref{sec:switched}), we comprehend \eqref{eq:system_aper_sampled} as a discrete-time switched system.

In practice, one often deals with continuous-time processes. In such a case, the choice of \emph{discretization period} $H\in(0,\infty)$, with which the data are recorded from the underlying physical process, is of great importance. The reason is that while it is often possible to stabilize the discrete-time system \eqref{eq:system_data_perturbed} with a large discretization period, the achieved control performance might be very poor. Especially in a data-driven setup, i.e., when the dynamics of the process to-be-controlled are not known, it might be hard to adequately select $H$ such that all relevant dynamics are reflected in the discrete-time model and a certain performance objective is met.

However, in practice often expert knowledge is available, e.g., an estimate of the fastest relevant dynamics of the system, from which a suitable $H$ can be chosen. Alternatively, one could explicitly specify a performance objective and incorporate it into the stability conditions. Then, one could test if the desired performance objective can be satisfied, and if not, adjust the discretization period $H$ accordingly.

Although we focus on the stabilization problem (i.e., Problems \ref{prob:arbitrary_sampling_an} and \ref{prob:arbitrary_sampling_co}) in the main body of this article, we will demonstrate in Remark \ref{rem:perf} that a performance objective can be directly incorporated in the robust input/output approach. For the switched systems approach, incorporating a performance objective is not straightforward due to the unknown inter-sampling behavior, and is left to future work.

\section{Robust input/output approach} \label{sec:IO}

The existing (model-based) robust input/output approach to aperiodically sampled systems, pioneered in \cite{fridman2004robust,mirkin2007some,kao2004stability_delay} and further refined in \cite{fujioka2009IQC}, is tailored to continuous-time systems. To accommodate our discrete-time setup, which we consider because we ultimately aim for an incorporation of data, we provide input/output methods for analyzing discrete-time systems in this section.

In particular, we take the following steps. In Subsection \ref{sec:IO_setup}, we introduce the main idea of the robust input/output approach, which is first to write the aperiodically sampled system as a time-delay system which exhibits a sawtooth shape of the delay sequence. Second, following \cite{kao2012discrete_delayed_IQC}, we express this time-delay system as an interconnection of an LTI system and a delay operator. In Subsection \ref{sec:IO_delay_operator}, we provide a novel bound on the $\ell_2$ gain of this delay operator. With this, model-based stability conditions are formulated using results from robust control theory in Subsection \ref{sec:IO_stab_model}. Finally, in Subsection \ref{sec:IO_stab_data}, we present data-driven stability criteria by combining the model-based conditions with the data-driven system parametrization in \cite{berberich2020combining}.

We use the well-established concept of integral quadratic constraints (IQCs) (see \cite{kao2012discrete_delayed_IQC,megretski1997system,veenman2016IQC,hu2017discreteIQC,scherer2021dissipativity}) in order to describe input/output properties of the delay operator and to establish stability conditions. A definition of an IQC in the input/output framework, which we will use throughout this paper, is given below.
\begin{definition} \label{def:IQC}
	A bounded, causal operator $\Delta:\ell_{2e}^p\to\ell_{2e}^q$, $y\mapsto e$ satisfies the hard static\footnote{\emph{Hard} refers to the fact that the IQC holds for all $T\ge 0$, whereas \emph{soft} IQCs hold only for $T\to\infty$. \emph{Static} refers to the fact that the signals $(y,\Delta y)$ are multiplied with $\Pi$ directly, whereas filtered versions thereof are used in \emph{dynamic} IQCs. A static IQC is equivalent to dissipativity with a general quadratic supply rate (cf. \cite{scherer2000linear,kottenstette2014relationships,scherer2021dissipativity}). Nonetheless, we speak of IQCs in this paper to be consistent with the literature on time-delay and aperiodically sampled systems \cite{hetel2017recent,fujioka2009IQC,kao2012discrete_delayed_IQC}.} IQC defined by a multiplier $\Pi\in\R^{(p+q) \times (p+q)}$ if the following inequality holds for all $y\in\ell_2^p$
	\begin{equation} \label{eq:def_IQC}
	\sum_{t=0}^{T}
	\begin{bmatrix} y(t) \\ (\Delta y)(t) \end{bmatrix}^\top
	\Pi \begin{bmatrix} y(t) \\ (\Delta y)(t) \end{bmatrix} \ge 0 , \quad\forall T\in\N_0.
	\end{equation}
\end{definition}
As a short-hand notation, we write $\Delta\in\text{IQC}(\Pi)$ if $\Delta$ satisfies the hard static IQC \eqref{eq:def_IQC} in the sense of Definition \ref{def:IQC}.

\subsection{Main idea} \label{sec:IO_setup}

Note that, as an intermediate step, we can rewrite the aperiodically sampled system \eqref{eq:system_aper_sampled} as a time-delay system
\begin{equation} \label{eq:system_delayed}
\begin{aligned}
x(t+1) &= (A_\text{tr} + B_\text{tr}K) x(t) + B_\text{tr}K (x(t_k)-x(t)) \\
&= (A_\text{tr} + B_\text{tr}K) x(t) + B_\text{tr}K (x(t-\tau(t))-x(t))
\end{aligned}
\end{equation}
for $t\in\N_{[t_k,t_{k+1}-1]}$, $k\in\N_{0}$, where
\begin{equation}
\tau(t)\coloneqq t-t_k, \; t\in\N_{[t_k,t_{k+1}-1]}, \; k\in\N_{0},
\end{equation}
represents the amount of time by which the feedback information is delayed at time $t$. Note that $\tau(0)=0$ since $t_0=0$. The \textit{delay sequence} $\{\tau(t)\}$ has a ``sawtooth shape'' as illustrated in Figure \ref{fig:delay_sawtooth}, i.e., it is reset to zero at sampling instants and is subsequently increased by one in each time step where no sampling occurs. The maximum possible delay is $\overline{h}-1$.

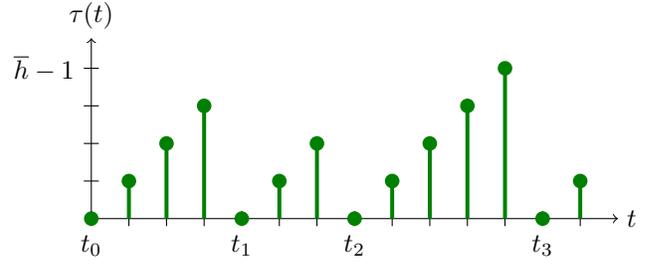
\begin{figure}
	\begin{tikzpicture}
	\draw[->] (0,0) -- (7,0) node[right] {$t$};
	\draw[->] (0,0) -- (0,2.4) node[above] {$\tau(t)$};
	\draw (-0.1,2) node[left] {$\overline{h}-1$} -- (0.1,2);
	\draw (0,-0.1) node[below] {$t_0$} -- (0,0.1);
	\draw (2,-0.1) node[below] {$t_1$} -- (2,0.1);
	\draw (3.5,-0.1) node[below] {$t_2$} -- (3.5,0.1);
	\draw (6,-0.1) node[below] {$t_3$} -- (6,0.1);
	
	\foreach \x in {0.5,1,1.5,2.5,3,4,4.5,5,5.5,6.5}
	\draw (\x,-0.1) -- (\x,0.1);
	
	\foreach \y in {0.5,1,1.5}
	\draw (-0.1,\y) -- (0.1,\y);
	
	\draw[color=istgreen,line width = 1.5pt] plot[ycomb,mark=*] coordinates {(0,0) (0.5,0.5) (1,1) (1.5,1.5) (2,0)  (2.5,0.5)(3,1) (3.5,0) (4,0.5) (4.5,1) (5,1.5) (5.5,2) (6,0) (6.5,0.5) };
	\end{tikzpicture}
	\caption{Sawtooth shape of $\tau(t)$.}
	\label{fig:delay_sawtooth}
\end{figure}

Following \cite[Section 2]{kao2012discrete_delayed_IQC}, we subsequently rewrite the time-delay system \eqref{eq:system_delayed} as an interconnection between an LTI system and a delay operator. First, we note that
\begin{equation*}
e(t)\coloneqq x(t-\tau(t))-x(t)
\end{equation*}
can be viewed as the ``error'' induced by aperiodic sampling. It can be represented by the telescopic sum
\begin{align*}
e(t) & = x(t-\tau(t))-x(t) =  x(t-\tau(t)) \nonumber \\
&-x(t-\tau(t)+1)+x(t-\tau(t)+1)-\ldots-x(t) \nonumber \\
&= \sum_{i=t-\tau(t)}^{t-1} x(i)-x(i+1).
\end{align*}
If we introduce an artificial output
\begin{align*}
y(t) &\coloneqq x(t)-x(t+1) \\
&= (I-A_\text{tr}-B_\text{tr}K)x(t)-B_\text{tr}Ke(t)
\end{align*}
and define the \textit{delay operator} $\Delta:\ell_{2e}^n\to\ell_{2e}^n$, $e=\Delta y$, as
\begin{equation} \label{eq:def_delta}
e(t) = (\Delta y)(t) \coloneqq \sum_{i= t-\tau(t)}^{t-1} y(i), \; t\in\N_{[t_k,t_{k+1}-1]}, \; k\in\N_{0},
\end{equation}
we may write the aperiodically sampled system \eqref{eq:system_aper_sampled} as an interconnection of an LTI system and the delay operator
\begin{subequations} \label{eq:fb_interconnection_ss}
	\begin{align}
	\begin{bmatrix}
	x(t+1)\\ y(t)
	\end{bmatrix}&=
	\begin{bmatrix}
	A_\text{tr}+B_\text{tr}K&B_\text{tr}K\\
	I-A_\text{tr}-B_\text{tr}K&-B_\text{tr}K
	\end{bmatrix}
	\begin{bmatrix}
	x(t)\\ e(t)\end{bmatrix}, \label{eq:fb_interc_syst} \\[3pt]
	e(t)&=(\Delta y)(t). \label{eq:fb_interc_delay}
	\end{align}
\end{subequations}

The main idea is now, as in \cite{kao2012discrete_delayed_IQC}, to analyze stability of the feedback interconnection \eqref{eq:fb_interconnection_ss} using robust control theory. In particular, the delay operator will be embedded into a class of uncertainties acting on the LTI system \eqref{eq:fb_interc_syst} by bounding its $\ell_2$ gain and thereby describing its input/output behavior as a hard static IQC.

\subsection{$\ell_2$ gain of the delay operator} \label{sec:IO_delay_operator}

Subsequently, we provide an estimate of the $\ell_2$ gain of the delay operator $\Delta$ and express this condition in terms of a hard static IQC. In contrast to \cite{kao2012discrete_delayed_IQC}, which analyzed time-delay systems with an arbitrary bounded delay, in case of aperiodically sampled systems the delay $\{\tau(t)\}$ exhibits a sawtooth shape as illustrated in Figure \ref{fig:delay_sawtooth}. By exploiting this information, we will be able to derive a smaller $\ell_2$-gain estimate than if we would dissolve $\{\tau(t)\}$ into the class of bounded delays $\tau(t)\in\N_{[0,\overline{h}-1]}$ and use the results in \cite{kao2012discrete_delayed_IQC}. This is the main technical contribution in terms of the model-based analysis of aperiodically sampled discrete-time systems as presented in Subsections \ref{sec:IO_setup}-\ref{sec:IO_stab_model}. It is important to note in this respect that a tighter uncertainty description of $\Delta$ will lead to less conservative stability conditions for the interconnection \eqref{eq:fb_interconnection_ss}.

\begin{lemma} \label{lem:L2_gain}
	The $\ell_2$ gain of the delay operator $\Delta$ is upper bounded by $\sqrt{\frac{\overline{h}}{2}(\overline{h}-1)}$, i.e., for all $y\in\ell_2^n$, we have
	\begin{equation*}
	\sum_{t=0}^{T} (\Delta y)(t)^\top (\Delta y)(t) \le \frac{\overline{h}}{2}(\overline{h}-1) \sum_{t=0}^{T} y(t)^\top y(t), \quad \forall T\in\N_0.
	\end{equation*}
\end{lemma}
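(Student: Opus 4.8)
The plan is to exploit the sawtooth structure of the delay to decouple the quantity $\sum_{t=0}^{T}(\Delta y)(t)^\top(\Delta y)(t)$ into contributions that each live entirely within a single sampling interval. The starting observation is that for $t\in\N_{[t_k,t_{k+1}-1]}$ we have $t-\tau(t)=t_k$, so that by \eqref{eq:def_delta}
\[
(\Delta y)(t)=\sum_{i=t_k}^{t-1} y(i).
\]
The crucial point is that this expression involves only samples of $y$ from the same interval $[t_k,t_{k+1}-1]$: the reset of $\tau$ at each sampling instant prevents any accumulation of ``memory'' across intervals. This is precisely where the sawtooth information enters, and it is what will allow a smaller gain estimate than the generic bounded-delay result of \cite{kao2012discrete_delayed_IQC}.

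First I would fix one sampling interval, abbreviate $z_i\coloneqq y(t_k+i)$, and write its contribution as $\sum_{j=0}^{h_k-1}\lVert\sum_{i=0}^{j-1}z_i\rVert_2^2$, where the $j=0$ term vanishes since $(\Delta y)(t_k)=0$. Applying the triangle inequality followed by Cauchy-Schwarz then yields the pointwise bound $\lVert\sum_{i=0}^{j-1}z_i\rVert_2^2\le j\sum_{i=0}^{j-1}\lVert z_i\rVert_2^2$.

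Next I would interchange the order of summation so that each $\lVert z_i\rVert_2^2$ collects the coefficient $\sum_{j=i+1}^{h_k-1}j$. This coefficient is largest for $i=0$, where it equals $\sum_{j=1}^{h_k-1}j=\tfrac{h_k}{2}(h_k-1)$; since $\tfrac{h_k}{2}(h_k-1)$ is monotonically increasing in $h_k$ and $h_k\le\overline{h}$, the contribution of the interval is bounded by $\tfrac{\overline{h}}{2}(\overline{h}-1)\sum_{i=0}^{h_k-1}\lVert z_i\rVert_2^2$. Summing over all intervals reassembles $\sum_{t=0}^{T}y(t)^\top y(t)$ on the right-hand side with the uniform factor $\tfrac{\overline{h}}{2}(\overline{h}-1)$, which is the claim.

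The main obstacle is really a bookkeeping subtlety rather than a genuine difficulty: one must handle the final interval when $T$ is not itself a sampling instant, and verify that the reindexing and interchange of summation are carried out consistently so that the worst-case coefficient $\tfrac{\overline{h}}{2}(\overline{h}-1)$ is extracted uniformly. For a truncated final interval the same coefficient bound still holds, since it contains at most $\overline{h}$ points so that the relevant partial sum $\sum_{j=i+1}^{m}j$ with $m\le\overline{h}-1$ remains dominated by $\tfrac{\overline{h}}{2}(\overline{h}-1)$. The conceptual core is the decoupling across intervals together with the single Cauchy-Schwarz step; everything else is routine.
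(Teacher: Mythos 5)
Your proof is correct and follows essentially the same route as the paper's: both exploit the sawtooth reset to decouple the sum across sampling intervals (the paper formalizes this via a lifting of the operator, you do it by direct bookkeeping), bound each partial sum by a Cauchy--Schwarz-type estimate (the paper obtains the same inequality via $\lambda_{\max}$ of the all-ones block $E_t$), and recover the constant $\frac{\overline{h}}{2}(\overline{h}-1)$ as $\sum_{t=1}^{\overline{h}-1}t$. Your interchange of summation yields a marginally sharper intermediate coefficient than the paper's cruder step of extending the inner sum to the whole interval, but the final bound is identical.
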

\noindent The proof can be found in Appendix \ref{app:proof_lem_L2}.

\begin{remark}
	Taking the bounded-delay perspective as in \cite{kao2012discrete_delayed_IQC}, one may find the bound $\overline{h}-1$ on the $\ell_2$ gain of $\Delta$. Lemma~\ref{lem:L2_gain} provides a tighter bound, since it holds that $\sqrt{\frac{\overline{h}}{2}(\overline{h}-1)}\le \overline{h}-1$ for all $\overline{h}\in\N$. In particular, equality holds for $\overline{h}\in\{1,2\}$, whereas the strict inequality comes into play for $\overline{h}\in\N_{\ge 3}$. As $\overline{h}\to\infty$, the improvement approaches a factor of $\frac{1}{\sqrt{2}}$.
\end{remark}

\begin{remark} \label{rem:L2gain_cont}
	In the continuous-time case, the factor of improvement when taking into account the sawtooth shape compared to taking the bounded delay perspective, was proven to be $\frac{2}{\pi}$ for all $\overline{h}$ \cite{mirkin2007some}. In contrast, the factor of improvement is dependent on $\overline{h}$ in the discrete-time case.
\end{remark}

The $\ell_2$ gain proven in Lemma \ref{lem:L2_gain} directly implies that $\Delta$ satisfies a hard static IQC.

\begin{corollary} \label{cor:L2_gain_IQC}
	For any $\mathcal{X}=\mathcal{X}^\top \succ 0$, it holds that $\Delta\in\mathrm{IQC}(\Pi_{\ell_2})$, where
	\begin{equation*}
	\Pi_{\ell_2} \coloneqq	\begin{bmatrix}	\frac{\overline{h}}{2}(\overline{h}-1)\mathcal{X} & 0 \\ 0 & -\mathcal{X} \end{bmatrix}.
	\end{equation*}
\end{corollary}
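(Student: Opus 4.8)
The plan is to prove Corollary \ref{cor:L2_gain_IQC} as a direct consequence of the $\ell_2$-gain bound established in Lemma \ref{lem:L2_gain}, together with a congruence-type scaling argument to handle the arbitrary weighting $\mathcal{X}\succ 0$. Recall that the claim is that for every $\mathcal{X}=\mathcal{X}^\top\succ 0$, the delay operator $\Delta$ satisfies the hard static IQC defined by
\begin{equation*}
\Pi_{\ell_2}=\begin{bmatrix}\frac{\overline{h}}{2}(\overline{h}-1)\mathcal{X} & 0 \\ 0 & -\mathcal{X}\end{bmatrix},
\end{equation*}
that is, with $e=\Delta y$, the quantity $\sum_{t=0}^{T}\begin{bmatrix}y(t)\\e(t)\end{bmatrix}^\top\Pi_{\ell_2}\begin{bmatrix}y(t)\\e(t)\end{bmatrix}\ge 0$ for all $T\in\N_0$ and all $y\in\ell_2^n$. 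Because $\Pi_{\ell_2}$ is block-diagonal, this inequality expands to $\frac{\overline{h}}{2}(\overline{h}-1)\sum_{t=0}^{T}y(t)^\top\mathcal{X}\,y(t)-\sum_{t=0}^{T}e(t)^\top\mathcal{X}\,e(t)\ge 0$, so the entire statement reduces to a $\mathcal{X}$-weighted version of the Lemma.

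First I would treat the unweighted case $\mathcal{X}=I_n$, which is literally the statement of Lemma \ref{lem:L2_gain}: substituting $e(t)=(\Delta y)(t)$ gives exactly $\sum_{t=0}^{T}(\Delta y)(t)^\top(\Delta y)(t)\le\frac{\overline{h}}{2}(\overline{h}-1)\sum_{t=0}^{T}y(t)^\top y(t)$, which rearranges to the desired nonnegativity with $\mathcal{X}=I$. To pass to a general $\mathcal{X}\succ 0$, I would exploit the fact that $\Delta$ is a linear operator that acts componentwise via the partial sum \eqref{eq:def_delta}, so it commutes with any constant linear coordinate change. Concretely, factor $\mathcal{X}=L^\top L$ with $L$ invertible (e.g.\ a Cholesky or symmetric square-root factor), and define the transformed signal $\tilde{y}\coloneqq Ly$. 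Since $(\Delta y)(t)=\sum_{i=t-\tau(t)}^{t-1}y(i)$ is just a sum of past samples of $y$, we have $L(\Delta y)(t)=\sum_{i=t-\tau(t)}^{t-1}Ly(i)=(\Delta\tilde{y})(t)$, i.e.\ $\Delta$ commutes with the matrix multiplication $L$. Applying Lemma \ref{lem:L2_gain} to $\tilde{y}$ then yields $\sum_{t=0}^{T}(\Delta\tilde{y})(t)^\top(\Delta\tilde{y})(t)\le\frac{\overline{h}}{2}(\overline{h}-1)\sum_{t=0}^{T}\tilde{y}(t)^\top\tilde{y}(t)$, and substituting $\tilde{y}=Ly$ and $\Delta\tilde{y}=L\Delta y$ turns the two sides into $\sum_{t=0}^{T}(\Delta y)(t)^\top\mathcal{X}(\Delta y)(t)$ and $\frac{\overline{h}}{2}(\overline{h}-1)\sum_{t=0}^{T}y(t)^\top\mathcal{X}\,y(t)$ respectively, which is exactly the $\mathcal{X}$-weighted inequality.

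The final step is cosmetic: rearrange the weighted inequality into the quadratic-form notation of Definition \ref{def:IQC} by collecting the two terms into the block-diagonal multiplier $\Pi_{\ell_2}$ and identifying $e=\Delta y$, giving $\Delta\in\mathrm{IQC}(\Pi_{\ell_2})$. I would also note briefly that $\Delta$ is bounded (by the very gain bound of Lemma \ref{lem:L2_gain}) and causal (since $(\Delta y)(t)$ depends only on $y(i)$ for $i<t$), so it qualifies as an admissible operator in the sense of Definition \ref{def:IQC}. I do not expect any genuine obstacle here; the only point requiring a modicum of care is the commutation $L\Delta=\Delta L$, which hinges on $\Delta$ being a \emph{linear} operator defined by a fixed (signal-dependent only through $\tau(t)$) summation window, so that left-multiplication by the constant matrix $L$ passes through the sum unchanged. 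Everything else is a one-line algebraic rewriting of Lemma \ref{lem:L2_gain}.
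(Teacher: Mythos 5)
Your proposal is correct and follows essentially the same route as the paper: both factor $\mathcal{X}$ (the paper via the symmetric square root $\mathcal{X}^{1/2}$, you via a general factor $L$ with $\mathcal{X}=L^\top L$), use the commutation of $\Delta$ with the constant matrix, and then invoke Lemma \ref{lem:L2_gain}. No gaps.
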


\begin{proof}
	A positive definite $\mathcal{X}$ allows a factorization $\mathcal{X}=\mathcal{X}^{\frac{1}{2}}\mathcal{X}^\frac{1}{2}$ where $\mathcal{X}^{\frac{1}{2}}\succ 0$. Further, it holds that $\Delta (\mathcal{X}^{\frac{1}{2}}y) = \mathcal{X}^{\frac{1}{2}} \Delta (y)$, and that $\mathcal{X}^{\frac{1}{2}}y\neq 0$ if $y\neq 0$ since $\mathcal{X}^\frac{1}{2}\succ 0$. With this and Lemma \ref{lem:L2_gain}, the hard static IQC follows immediately.
\end{proof}

\subsection{Model-based stability criteria} \label{sec:IO_stab_model}

In this subsection we assume, in contrast to the remainder of this article, that the true system matrices $A_\text{tr}$ and $B_\text{tr}$ are known. To formulate stability conditions for \eqref{eq:fb_interconnection_ss}, we exploit the hard static IQC for the delay operator in Corollary \ref{cor:L2_gain_IQC}, which allows us to directly use existing results on linear systems in feedback with uncertainties satisfying hard static IQCs.
\begin{theorem} \label{thm:stab_IO_ss}
	Suppose there exist matrices $\mathcal{Q}=\mathcal{Q}^\top\succ 0\in\R^{n\times n}$ and $\mathcal{X} = \mathcal{X}^\top \succ 0\in\R^{n\times n}$ such that \eqref{eq:stab_cond_IO_ss} is satisfied. Then, the origin of \eqref{eq:fb_interconnection_ss} is asymptotically stable.
	\begin{figure*}[h]
		\vspace{2pt}
		\begin{align}\label{eq:stab_cond_IO_ss}
		\left[
		\begin{array}{cc}
		A_\text{tr}+B_\text{tr}K&B_\text{tr}K\\I&0\\\hline
		I-A_\text{tr}-B_\text{tr}K&-B_\text{tr}K\\0&I
		\end{array}
		\right]^\top
		\left[
		\begin{array}{c|c}
		\begin{matrix}\mathcal{Q}&0\\0&\mathcal{-Q}\end{matrix}&
		\begin{matrix}0&0\\0&0\end{matrix}\\\hline
		\begin{matrix}0&0\\0&0\end{matrix}&\begin{matrix}	\frac{\overline{h}}{2}(\overline{h}-1)\mathcal{X} & 0 \\ 0 & -\mathcal{X} \end{matrix}
		\end{array}
		\right]
		\left[
		\begin{array}{cc}
		A_\text{tr}+B_\text{tr}K&B_\text{tr}K\\I&0\\\hline
		I-A_\text{tr}-B_\text{tr}K&-B_\text{tr}K\\0&I
		\end{array}
		\right]
		\prec0
		\end{align}
		\noindent\makebox[\linewidth]{\rule{\textwidth}{0.4pt}}
	\end{figure*}
\end{theorem}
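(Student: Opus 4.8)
The plan is to read \eqref{eq:stab_cond_IO_ss} as a dissipation inequality whose supply rate coincides exactly with the IQC supply rate of Corollary \ref{cor:L2_gain_IQC}, and then to run the standard IQC/dissipativity argument with the quadratic storage function $V(x) = x^\top \mathcal{Q} x$.

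First I would note that the outer factor in \eqref{eq:stab_cond_IO_ss}, applied to $\begin{bmatrix} x(t) \\ e(t) \end{bmatrix}$ along any trajectory of \eqref{eq:fb_interconnection_ss}, reproduces by construction the stacked signal $\begin{bmatrix} x(t+1) \\ x(t) \\ y(t) \\ e(t) \end{bmatrix}$. Hence multiplying \eqref{eq:stab_cond_IO_ss} from the right by $\begin{bmatrix} x(t) \\ e(t) \end{bmatrix}$ and from the left by its transpose gives, at each $t$, the scalar inequality $V(x(t+1)) - V(x(t)) + \begin{bmatrix} y(t) \\ e(t) \end{bmatrix}^\top \Pi_{\ell_2} \begin{bmatrix} y(t) \\ e(t) \end{bmatrix} < 0$ whenever $\begin{bmatrix} x(t) \\ e(t) \end{bmatrix} \neq 0$. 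Since the matrix on the left of \eqref{eq:stab_cond_IO_ss} is negative definite, I would extract a uniform margin $\epsilon > 0$ under which it is bounded above by $-\epsilon I$, sharpening the right-hand side to $-\epsilon(\lVert x(t)\rVert_2^2 + \lVert e(t)\rVert_2^2)$.

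Next I would sum this inequality from $t = 0$ to $T$. The storage terms telescope to $V(x(T+1)) - V(x(0))$, while the supply-rate terms accumulate to $\sum_{t=0}^{T} \begin{bmatrix} y(t) \\ e(t) \end{bmatrix}^\top \Pi_{\ell_2} \begin{bmatrix} y(t) \\ e(t) \end{bmatrix}$. The crucial step is that this accumulated supply rate is nonnegative: since $e = \Delta y$ and $\Delta \in \mathrm{IQC}(\Pi_{\ell_2})$ by Corollary \ref{cor:L2_gain_IQC}, Definition \ref{def:IQC} yields exactly $\sum_{t=0}^{T}(\cdot) \ge 0$. Dropping this nonnegative term leaves $V(x(T+1)) + \epsilon \sum_{t=0}^{T} \lVert x(t)\rVert_2^2 \le V(x(0))$ for all $T$.

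From this single estimate the conclusion follows: positive definiteness of $\mathcal{Q}$ gives $\lambda_{\min}(\mathcal{Q})\lVert x(t)\rVert_2^2 \le V(x(t)) \le V(x(0)) \le \lambda_{\max}(\mathcal{Q})\lVert x(0)\rVert_2^2$, i.e., uniform Lyapunov stability, while boundedness of the partial sums forces $\sum_{t=0}^{\infty} \lVert x(t)\rVert_2^2 < \infty$ and hence $x(t) \to 0$; together these give asymptotic stability of the origin. The point requiring the most care is that the IQC in Definition \ref{def:IQC} is stated only for inputs in $\ell_2$ and only in summed (hard) form rather than pointwise. Along a trajectory I would therefore apply Corollary \ref{cor:L2_gain_IQC} not to $y$ itself but to its truncation $y_T$, which is finitely supported and hence lies in $\ell_2$; causality of $\Delta$ then ensures $(\Delta y_T)(t) = e(t)$ for all $t \le T$, so that the summed supply rate on $[0,T]$ is nonnegative as needed, and one must sum the dissipation inequality \emph{before} invoking the IQC. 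Well-posedness of \eqref{eq:fb_interconnection_ss} poses no difficulty, since $(\Delta y)(t)$ in \eqref{eq:def_delta} depends only on $y(i)$ for $i \le t-1$, so no algebraic loop arises.
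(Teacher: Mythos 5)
Your proof is correct and follows essentially the same route as the paper: the paper simply invokes \cite[Corollary 11]{scherer2021dissipativity}, whose content is precisely the storage-function-plus-hard-IQC telescoping argument you carry out explicitly with $V(x)=x^\top\mathcal{Q}x$ and Corollary~\ref{cor:L2_gain_IQC}. Your treatment of the truncation and causality issue when applying the hard IQC along a trajectory is the right way to make that cited argument self-contained, and your well-posedness remark matches the strict causality of $\Delta$ in \eqref{eq:def_delta}.
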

\begin{proof}
	The statement of the theorem will be a direct consequence of \cite[Corollary 11]{scherer2021dissipativity}, which is a modification of the classical circle criterion \cite{zames1968stability}. The proof of \cite[Corollary 11]{scherer2021dissipativity} relies on recognizing that \eqref{eq:stab_cond_IO_ss} implies that \eqref{eq:fb_interc_syst} is dissipative w.r.t. $\Pi_{\ell_2}$ and that \eqref{eq:fb_interconnection_ss} forms a so-called neutral interconnection, from which stability follows. Thus, we now verify the assumptions of \cite[Corollary 11]{scherer2021dissipativity}. First, we note that the delay operator satisfies the hard static IQC $\Delta\in\text{IQC}(\Pi_{\ell_2})$ as established in Corollary \ref{cor:L2_gain_IQC}. Second, the delay operator satisfies \cite[Condition (8)]{scherer2021dissipativity}, as $\left[\substack{y\\0}\right]^\top \Pi_{\ell_2} \left[\substack{y\\0}\right]\ge 0$ holds due to $\mathcal{X}\succ 0$.  Lastly, we note that the stability conditions for continuous-time systems in \cite[Corollary 11]{scherer2021dissipativity} can be used for discrete-time systems as well under the modification discussed in \cite[Page 12, Ramifications]{scherer2021dissipativity}.
\end{proof}

\subsection{Data-driven stability criteria for analysis and controller design} \label{sec:IO_stab_data}

Finally, in this subsection, we tackle Problems \ref{prob:arbitrary_sampling_an} and \ref{prob:arbitrary_sampling_co}. Recall that state-input measurements $\{x(t)\}_{t=0}^N$, $\{u(t)\}_{t=0}^{N-1}$ of the disturbed system \eqref{eq:system_data_perturbed} are available. Note that the noise-corrupted data could be explained by a multitude of different matrices $A,B$, especially since the particular disturbance sequence that corrupted the experiment is unknown as well. Let us arrange the measured data as follows
\begin{align}
X^+&=\begin{bmatrix}x(1)&x(2)&\dots&x(N)\end{bmatrix}, \nonumber\\
X&=\begin{bmatrix}x(0)&x(1)&\dots&x(N-1)\end{bmatrix}, \label{eq:data_matrices}\\
U&=\begin{bmatrix}u(0)&u(1)&\dots&u(N-1)\end{bmatrix}. \nonumber
\end{align}
With this, we may define the set of all matrices compatible with the measured data and the disturbance bound as
\begin{align*}
\Sigma_{AB}=\{\begin{bmatrix} A & B \end{bmatrix}\mid X^+=AX+BU+B_dD,D\in\mathcal{D}\}.
\end{align*}
In order to guarantee stability for \eqref{eq:fb_interconnection_ss}, we need to verify the model-based stability condition for each $\begin{bmatrix} A & B \end{bmatrix}\in\Sigma_{AB}$. To this end, we use a data-driven parametrization of the compatible matrices as developed in \cite{berberich2020combining}, where it was shown that it is possible to represent the matrices contained in $\Sigma_{AB}$ by a quadratic constraint. We define
\begin{align*}
\boldsymbol{P}_{AB} \coloneqq\left[\begin{array}{cc}
-X&0\\-U&0\\\hline X^+&B_d
\end{array}\right]
\boldsymbol{P}_d
\left[\begin{array}{cc}
	-X&0\\-U&0\\\hline X^+&B_d
\end{array}\right]^\top.
\end{align*}
Then, it follows directly from \cite[Lemma 2]{berberich2020combining} and the discussion thereafter that the set $\Sigma_{AB}$ can be expressed in terms of a quadratic constraint as follows
\begin{equation} \label{eq:param_IO}
\Sigma_{AB}=\Big\{\begin{bmatrix} A & B \end{bmatrix}\hspace{-2pt}\Bigm|\hspace{-2pt}
\begin{bmatrix}A^\top\\B^\top\\I\end{bmatrix}^\top \hspace{-4pt}
P_{AB} \hspace{-1pt}
\begin{bmatrix}A^\top\\B^\top\\I\end{bmatrix}\succeq0, \: \forall P_{AB}\in\boldsymbol{P}_{AB}\Big\}.
\end{equation}

The idea is now to rewrite the interconnection of the LTI system and the delay operator as a linear fractional transformation (LFT) with \textit{two} uncertainty channels
\begin{subequations}\label{eq:system_IO_uncertain}
	\begin{align}
	\left[\begin{array}{c}
	x(t+1)\\\hline y(t)\\z(t)
	\end{array}\right]&=
	\left[\begin{array}{c|cc}
	0&0&I\\\hline
	I&0&-I\\
	\begin{bmatrix}I\\K\end{bmatrix}&\begin{bmatrix}0\\K\end{bmatrix}&0
	\end{array}\right]
	\left[\begin{array}{c}
	x(t)\\\hline e(t)\\w(t)\end{array}\right],\\
	e(t)&=(\Delta y)(t),\\
	w(t)&=(\begin{bmatrix} A & B \end{bmatrix}z)(t),
	\end{align}
\end{subequations}
where $\Delta\in\text{IQC}(\Pi_{\ell_2})$ and $\begin{bmatrix} A & B \end{bmatrix}\in\Sigma_{AB}$. The first channel $e\mapsto y$ contains the delay operator as in the model-based case (cf. \eqref{eq:fb_interconnection_ss}), and the second channel $z\mapsto w$ represents the uncertainty in the system matrices due to the data-driven setup and the disturbance. Using this system description, we are finally able to provide a data-driven stability condition for \eqref{eq:system_IO_uncertain}, using the parametrization \eqref{eq:param_IO} of the compatible system matrices and the S-procedure.
\begin{theorem} \label{thm:stab_IO_data}
	Suppose Assumptions \ref{ass:disturbance_bound} and \ref{ass:Bd} are satisfied and $\overline{h}\ge 2$. Furthermore suppose, given a controller $K$, there exist matrices $\mathcal{S}=\mathcal{S}^\top\succ 0\in\R^{n\times n}$, $\mathcal{X}_\text{inv}=\mathcal{X}_\text{inv}^\top\succ 0\in\R^{n\times n}$ and $P_{AB}\in\boldsymbol{P}_{AB}$ such that \eqref{eq:stab_cond_IO_data} is satisfied. Then, the origin of \eqref{eq:system_IO_uncertain} is asymptotically stable for any $\begin{bmatrix}
	A & B \end{bmatrix} \in\Sigma_{AB}$.
	\begin{figure*}
		\vspace{2pt}
		\begin{align}\label{eq:stab_cond_IO_data}
		\left[\begin{array}{ccc}
		0&I&\begin{bmatrix}I&K^\top\end{bmatrix}\\I&0&0\\\hline
		0&0&\begin{bmatrix}0&K^\top\end{bmatrix}\\0&I&0\\\hline
		0&0&I\\I&-I&0
		\end{array}
		\right]^\top
		\left[
		\begin{array}{c|c|c}
		\begin{matrix}\mathcal{S}&0\\0&\mathcal{-S}\end{matrix}&
		\begin{matrix}0&0\\0&0\end{matrix}&\begin{matrix}0&0\\0&0\end{matrix}\\\hline
		\begin{matrix}0&0\\0&0\end{matrix}&\begin{matrix} \mathcal{X}_\text{inv} & 0 \\ 0 & -{\tfrac{2}{\overline{h}(\overline{h}-1)}} \mathcal{X}_\text{inv} \end{matrix}&\begin{matrix}0&0\\0&0\end{matrix}\\\hline
		\begin{matrix}0&0\\0&0\end{matrix}&\begin{matrix}0&0\\0&0\end{matrix}&
		P_{AB}
		\end{array}
		\right]
		\left[\begin{array}{ccc}
		0&I&\begin{bmatrix}I&K^\top\end{bmatrix}\\I&0&0\\\hline
		0&0&\begin{bmatrix}0&K^\top\end{bmatrix}\\0&I&0\\\hline
		0&0&I\\I&-I&0
		\end{array}
		\right]\prec0
		\end{align}
		\noindent\makebox[\linewidth]{\rule{\textwidth}{0.4pt}}
	\end{figure*}
\end{theorem}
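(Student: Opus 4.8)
The plan is to certify the model-based criterion of Theorem~\ref{thm:stab_IO_ss} \emph{simultaneously} for every system compatible with the data. First I would fix an arbitrary $\begin{bmatrix}A & B\end{bmatrix}\in\Sigma_{AB}$ and note that substituting $w=\begin{bmatrix}A & B\end{bmatrix}z$ into the LFT \eqref{eq:system_IO_uncertain} reproduces precisely the interconnection \eqref{eq:fb_interconnection_ss} with $(A_\text{tr},B_\text{tr})$ replaced by $(A,B)$. Consequently, it suffices to show that \eqref{eq:stab_cond_IO_data} implies the model-based LMI \eqref{eq:stab_cond_IO_ss}, read with $(A,B)$ in place of $(A_\text{tr},B_\text{tr})$, for \emph{all} $\begin{bmatrix}A & B\end{bmatrix}\in\Sigma_{AB}$; asymptotic stability then follows pointwise from Theorem~\ref{thm:stab_IO_ss}, with the delay channel handled as there through $\Delta\in\mathrm{IQC}(\Pi_{\ell_2})$ (Corollary~\ref{cor:L2_gain_IQC}, Lemma~\ref{lem:L2_gain}).

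The obstruction is twofold: \eqref{eq:stab_cond_IO_ss} must hold for a whole family of matrices, and the parametrization \eqref{eq:param_IO} constrains $\begin{bmatrix}A^\top & B^\top & I\end{bmatrix}^\top$ rather than $\begin{bmatrix}A & B\end{bmatrix}$ itself. To reconcile these I would dualize. Writing the left-hand side of \eqref{eq:stab_cond_IO_ss} as $\Xi^\top M\Xi\prec0$ with $M=\mathrm{diag}(\mathcal{Q},-\mathcal{Q},\Pi_{\ell_2})$, observe that $M$ is nonsingular with inertia $(2n,0,2n)$ (i.e.\ $2n$ positive and $2n$ negative eigenvalues): this uses $\mathcal{Q},\mathcal{X}\succ0$ together with $c\coloneqq\tfrac{\overline{h}}{2}(\overline{h}-1)>0$, which is exactly why $\overline{h}\ge2$ is assumed (for $\overline{h}=1$ there is no delay and $\Pi_{\ell_2}$ is singular). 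The dualization lemma \cite{scherer2021dissipativity} then renders $\Xi^\top M\Xi\prec0$ equivalent to $\Xi_\perp^\top(-M^{-1})\Xi_\perp\prec0$, where $\Xi_\perp$ is a basis matrix for the orthogonal complement of the range of $\Xi$ and $-M^{-1}=\mathrm{diag}(-\mathcal{Q}^{-1},\mathcal{Q}^{-1},-\tfrac{1}{c}\mathcal{X}^{-1},\mathcal{X}^{-1})$. Reordering the paired signals within the Lyapunov and the delay channel (a sign-consistent swap of each $2\times2$ block) and setting $\mathcal{S}=\mathcal{Q}^{-1}$, $\mathcal{X}_\text{inv}=\mathcal{X}^{-1}$ turns these multipliers into exactly the blocks $\mathrm{diag}(\mathcal{S},-\mathcal{S})$ and $\mathrm{diag}(\mathcal{X}_\text{inv},-\tfrac{1}{c}\mathcal{X}_\text{inv})$ appearing in \eqref{eq:stab_cond_IO_data}. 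The point of dualizing is to place the system-uncertainty signals $(z,w)$ in the orientation in which \eqref{eq:param_IO}, which carries the identity in its lower block, can be inserted.

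With the condition in this dual form, I would robustify against $\begin{bmatrix}A & B\end{bmatrix}$ via the full-block S-procedure on the $z\mapsto w$ channel. Since by \eqref{eq:param_IO} every $\begin{bmatrix}A & B\end{bmatrix}\in\Sigma_{AB}$ satisfies the quadratic constraint defined by each $P_{AB}\in\boldsymbol{P}_{AB}$, adjoining one such $P_{AB}$ as the third diagonal block of the middle matrix makes \eqref{eq:stab_cond_IO_data} a sufficient condition for the dualized model-based LMI to hold for all compatible $\begin{bmatrix}A & B\end{bmatrix}$. Because dualization is an equivalence, this is the same as \eqref{eq:stab_cond_IO_ss} holding for each such pair, and Theorem~\ref{thm:stab_IO_ss} then delivers asymptotic stability of \eqref{eq:system_IO_uncertain} for every $\begin{bmatrix}A & B\end{bmatrix}\in\Sigma_{AB}$, as claimed.

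I expect the crux to be the bookkeeping that couples the two relaxations. One must verify the inertia hypothesis of the dualization lemma and transport the outer factor correctly onto $\Xi_\perp$, while simultaneously ensuring that the orientation (transpose versus original) in which $P_{AB}$ constrains the uncertainty matches the LFT relation $w=\begin{bmatrix}A & B\end{bmatrix}z$; the transpose in \eqref{eq:param_IO} is precisely what forces the dualization and the swapped delay multiplier, so every sign, transpose and block ordering must be tracked with care. A secondary point is to confirm that the definiteness clause of Assumption~\ref{ass:disturbance_bound}, propagated through $\boldsymbol{P}_{AB}$, supplies the inertia of $P_{AB}$ needed for the S-procedure step to be well posed, and that Assumption~\ref{ass:Bd} keeps the parametrization \eqref{eq:param_IO} valid.
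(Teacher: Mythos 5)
Your proposal is correct and follows essentially the same route as the paper's proof: both combine the full-block S-procedure (with a multiplier $P_{AB}\in\boldsymbol{P}_{AB}$ absorbing the parametrization \eqref{eq:param_IO}) and the dualization lemma, with the same inertia check ($2n$ positive and $2n$ negative eigenvalues) and the same identifications $\mathcal{Q}=\mathcal{S}^{-1}$, $\mathcal{X}=\mathcal{X}_\text{inv}^{-1}$, before invoking Theorem~\ref{thm:stab_IO_ss}. The only difference is presentational — you dualize the model-based condition and then robustify, whereas the paper first applies the S-procedure to \eqref{eq:stab_cond_IO_data} and then dualizes back — which is immaterial since dualization is an equivalence.
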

\begin{proof}
	We use the full-block S-procedure \cite[Lemma A.1]{scherer2000robust} to conclude that \eqref{eq:stab_cond_IO_data} implies that
	\begin{align} \label{eq:thm_io_data_sproc}
	{\footnotesize
		\begingroup
		\setlength\arraycolsep{1pt}
		\left[
		\begin{array}{cc}
		(A+BK)^\top&(BK)^\top\\I&0\\\hline
		(I-A-BK)^\top&-(BK)^\top\\0&I
		\end{array}
		\right]^\top
		\left[
		\begin{array}{c|c}
		\begin{matrix}\mathcal{S}&0\\0&\mathcal{-S}\end{matrix}&
		\begin{matrix}0&0\\0&0\end{matrix}\\\hline
		\begin{matrix}0&0\\0&0\end{matrix}&\begin{matrix} \mathcal{X}_\text{inv} & 0 \\ 0 & -{\tfrac{2}{\overline{h}(\overline{h}-1)}} \mathcal{X}_\text{inv} \end{matrix}
		\end{array}\right]
		\star
		\endgroup
		\prec0}
	\end{align}
	holds for all $\begin{bmatrix} A & B \end{bmatrix}$ which satisfy \eqref{eq:param_IO}, i.e., for all $\begin{bmatrix} A & B \end{bmatrix}\in\Sigma_{AB}$. It is easy to verify that \eqref{eq:thm_io_data_sproc} is equivalent to
	\begin{align*}
		{\footnotesize
			\begingroup
			\setlength\arraycolsep{1pt}
			\left[
			\begin{array}{cc}
				(A+BK)^\top&(BK)^\top\\I&0\\\hline
				(I-A-BK)^\top&-(BK)^\top\\0&I
			\end{array}
			\right]^\top
			\left[
			\begin{array}{c|c}
				\begin{matrix}0&-I\\I&0\end{matrix}&
				\begin{matrix}0&0\\0&0\end{matrix}\\\hline
				\begin{matrix}0&0\\0&0\end{matrix}&\begin{matrix} 0 & -I \\ I & 0 \end{matrix}
			\end{array}\right]^\top\cdot\endgroup} \\
			{\footnotesize
				\begingroup
				\setlength\arraycolsep{1pt}
				\left[
			\begin{array}{c|c}
				\begin{matrix}\mathcal{S}&0\\0&\mathcal{-S}\end{matrix}&
				\begin{matrix}0&0\\0&0\end{matrix}\\\hline
				\begin{matrix}0&0\\0&0\end{matrix}&\begin{matrix} {\tfrac{2}{\overline{h}(\overline{h}-1)}} \mathcal{X}_\text{inv} & 0 \\ 0 & - \mathcal{X}_\text{inv} \end{matrix}
			\end{array}\right]
			\star
			\endgroup
			\succ0,}
	\end{align*}
	which, in turn, is equivalent to
	\begin{align} \label{eq:thm_io_data_sproc2}
		{\footnotesize
			\begingroup
			\setlength\arraycolsep{1pt}
			\left[
			\begin{array}{cc}
				I&0\\-(A+BK)^\top&-(BK)^\top\\\hline
				0&I\\(A+BK-I)^\top&(BK)^\top
			\end{array}
			\right]^\top
			\left[
			\begin{array}{c|c}
				\begin{matrix}\mathcal{S}&0\\0&\mathcal{-S}\end{matrix}&
				\begin{matrix}0&0\\0&0\end{matrix}\\\hline
				\begin{matrix}0&0\\0&0\end{matrix}&\begin{matrix} {\tfrac{2}{\overline{h}(\overline{h}-1)}} \mathcal{X}_\text{inv} & 0 \\ 0 & - \mathcal{X}_\text{inv} \end{matrix}
			\end{array}\right]
			\star
			\endgroup
			\succ0.}
	\end{align}
	We are now ready to apply the dualization lemma \cite[Lemma 4.9]{scherer2000linear} to \eqref{eq:thm_io_data_sproc2}. The required inertia assumptions clearly hold since the inner matrix in \eqref{eq:thm_io_data_sproc2} possesses exactly $2n$ positive and negative eigenvalues. Using the dualization lemma, \eqref{eq:thm_io_data_sproc2} directly implies that
	\begin{align}
		{\footnotesize
			\begingroup
			\setlength\arraycolsep{1pt}
			\left[
			\begin{array}{cc}
			A+BK&BK\\	I&0\\\hline
			I-A-BK&-BK\\0&I
			\end{array}
			\right]^\top
			\hspace{-3pt}
			\left[
			\begin{array}{c|c}
				\begin{matrix}\mathcal{S}^{-1}&0\\0&\mathcal{-S}^{-1}\end{matrix}&
				\begin{matrix}0&0\\0&0\end{matrix}\\\hline
				\begin{matrix}0&0\\0&0\end{matrix}&\begin{matrix} \frac{\overline{h}}{2}(\overline{h}-1)\mathcal{X}_\text{inv}^{-1} & 0 \\ 0 & -\mathcal{X}_\text{inv}^{-1} \end{matrix}
			\end{array}\right]
			\hspace{1pt}
			\star
			\hspace{-1pt}
			\endgroup
			\prec0}
	\end{align}
	holds for any $\begin{bmatrix} A & B \end{bmatrix}\in\Sigma_{AB}$. With Theorem \ref{thm:stab_IO_ss} and taking $\mathcal{Q}\coloneqq\mathcal{S}^{-1}$ and $\mathcal{X}\coloneqq\mathcal{X}_\text{inv}^{-1}$ in \eqref{eq:stab_cond_IO_ss}, we finally conclude asymptotic stability for any $\begin{bmatrix} A & B \end{bmatrix}\in\Sigma_{AB}$.
\end{proof}

\begin{remark}
	Theorem \ref{thm:stab_IO_data} requires $\overline{h}\ge 2$ to ensure that the inverse of $\Pi_{\ell_2}$ exists. Note that if $\overline{h}=1$, the aperiodically sampled system \eqref{eq:system_aper_sampled} would in fact degenerate to a periodically sampled one, in which case the results in \cite{berberich2020combining} or Theorem \ref{thm:stab_switched_data_an} (see Subsection \ref{sec:switched_stability}) can be used to analyze stability.
\end{remark}

\begin{remark} \label{rem:perf}
	A performance objective can be incorporated into the robust input/output approach, e.g., by adding the performance channel $d\to p$ to the LFT \eqref{eq:system_IO_uncertain}, where
	\begin{equation*}
		p(t) = C_p x(t) + D_p u(t).
	\end{equation*}
	An $\mathcal{H}_2$-performance bound of $\gamma>0$ for the channel $d\to p$ could then be enforced (cf. \cite[Theorem 17]{waarde2020from}, \cite[Theorem 1]{berberich2020combining}) if \eqref{eq:stab_cond_IO_data} holds with $-\mathcal{S}$ in the second diagonal element of the middle matrix replaced by $B_d B_d^\top-\mathcal{S}$, and if there additionally exists $\Gamma$ such that $\text{trace}(\Gamma)<\gamma^2$ and
	\begin{equation*}
		\begin{bmatrix}
			\Gamma & (C_p+D_pK)\mathcal{S} \\
			\star & \mathcal{S}
		\end{bmatrix} \succ 0.
	\end{equation*}
	An $\mathcal{H}_\infty$ performance objective could be incorporated analogously (cf. \cite[Theorem 20]{waarde2020from}, \cite[Theorem 2]{berberich2020combining}).
\end{remark}

Condition \eqref{eq:stab_cond_IO_data} in Theorem \ref{thm:stab_IO_data} allows for a simultaneous search for $\mathcal{S}$, $\mathcal{X}_\text{inv}$ and $P_{AB}$ via a semi-definite program (SDP), since it is an LMI in these variables. However, \eqref{eq:stab_cond_IO_data} is not amenable to controller design yet. For a particular choice of $\mathcal{X}_\text{inv}$, namely $\mathcal{X}_\text{inv}=\mathcal{S}$, a simultaneous search for all involved variables and a stabilizing controller $K$ is indeed possible via an SDP using the standard transformation $\mathcal{F}=K\mathcal{S}$, as we present in the following result.
\begin{corollary} \label{cor:stab_IO_data}
	Suppose Assumptions \ref{ass:disturbance_bound} and \ref{ass:Bd} are satisfied and $\overline{h}\ge 2$. Furthermore, suppose there exist matrices $\mathcal{S}=\mathcal{S}^\top\succ 0\in\R^{n\times n}$, $\mathcal{F}\in\R^{m\times n}$ and $P_{AB}=\left[\substack{Q_{AB} \; S_{AB} \\ S_{AB}^\top \; R_{AB}}\right]\in\boldsymbol{P}_{AB}$ such that 
	\begingroup 
	\setlength\arraycolsep{2.5pt}
	\begin{equation} {\footnotesize \label{eq:stab_cond_IO_data_1}
		\begin{bmatrix} R_{AB} - \mathcal{S} & \star & \star & \star \\
	 -R_{AB} & \frac{\overline{h}(\overline{h}-1)-2}{\overline{h}(\overline{h}-1)}\mathcal{S} + R_{AB} & \star & \star \\
	 S_{AB} & \begin{bmatrix} \mathcal{S} \\ \mathcal{F} \end{bmatrix} - S_{AB} & \begin{bmatrix} \mathcal{S} & \mathcal{F}^\top \\ \mathcal{F} & 0 \end{bmatrix} + Q_{AB} & \star \\
		0 & 0 & \begin{bmatrix} 0 & \mathcal{F}^\top \end{bmatrix} & -\frac{1}{2} \mathcal{S}
		\end{bmatrix}\prec 0}
	\end{equation}
	\endgroup
	is satisfied. Then, the controller $K\coloneqq \mathcal{F}\mathcal{S}^{-1}$ renders the origin of \eqref{eq:system_IO_uncertain} asymptotically stable for any $\begin{bmatrix} A & B \end{bmatrix} \in\Sigma_{AB}$.
\end{corollary}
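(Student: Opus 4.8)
The plan is to obtain the synthesis condition \eqref{eq:stab_cond_IO_data_1} from the analysis condition \eqref{eq:stab_cond_IO_data} of Theorem~\ref{thm:stab_IO_data} by making the special choice $\mathcal{X}_\text{inv}=\mathcal{S}$ and linearizing the controller dependence via the substitution $\mathcal{F}=K\mathcal{S}$ (which is well defined since $\mathcal{S}\succ0$, and invertible so that $K=\mathcal{F}\mathcal{S}^{-1}$). The hypotheses ($\overline{h}\ge2$, Assumptions~\ref{ass:disturbance_bound} and \ref{ass:Bd}) are exactly those of Theorem~\ref{thm:stab_IO_data}, so no extra structure is required; in particular $\overline{h}\ge2$ keeps the factor $\overline{h}(\overline{h}-1)>0$ well defined. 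Concretely, I would show that, with $\mathcal{X}_\text{inv}=\mathcal{S}$ and $K=\mathcal{F}\mathcal{S}^{-1}$, the left-hand side of \eqref{eq:stab_cond_IO_data} coincides with the Schur complement of \eqref{eq:stab_cond_IO_data_1} with respect to its $(4,4)$ block. Since $\mathcal{S}\succ0$ forces $-\tfrac12\mathcal{S}\prec0$, the Schur complement lemma then gives \eqref{eq:stab_cond_IO_data_1}$\,\prec0$ if and only if \eqref{eq:stab_cond_IO_data} holds for this choice, and asymptotic stability for every $\begin{bmatrix}A&B\end{bmatrix}\in\Sigma_{AB}$ follows at once from Theorem~\ref{thm:stab_IO_data}.

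The core is a block-by-block expansion of the congruence product on the left-hand side of \eqref{eq:stab_cond_IO_data} after setting $\mathcal{X}_\text{inv}=\mathcal{S}$. Writing the generic argument of the outer matrix as $(a,b,c)$ with $a,b\in\R^n$ and $c\in\R^{n+m}$, the quadratic form collapses to a $(3n+m)$-dimensional symmetric matrix whose blocks I would read off one at a time. The $a$-$a$ coupling yields $R_{AB}-\mathcal{S}$, the $a$-$b$ coupling $-R_{AB}$, and the $b$-$b$ coupling $\tfrac{\overline{h}(\overline{h}-1)-2}{\overline{h}(\overline{h}-1)}\mathcal{S}+R_{AB}$; the blocks involving $c$ produce $S_{AB}$, then $\begin{bmatrix}\mathcal{S}\\\mathcal{F}\end{bmatrix}-S_{AB}$, and in the $c$-$c$ position $\begin{bmatrix}\mathcal{S}&\mathcal{F}^\top\\\mathcal{F}&2\mathcal{F}\mathcal{S}^{-1}\mathcal{F}^\top\end{bmatrix}+Q_{AB}$. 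The key algebraic facts here are $\mathcal{S}[I\ K^\top]=[\mathcal{S}\ \mathcal{F}^\top]$ (so $K^\top$ always appears premultiplied by $\mathcal{S}$, collapsing to $\mathcal{F}^\top$) and $K\mathcal{S}K^\top=\mathcal{F}\mathcal{S}^{-1}\mathcal{F}^\top$, the single term nonlinear in $\mathcal{F}$; it arises twice, once from the $\mathcal{S}$-channel and once from the $\mathcal{X}_\text{inv}$-channel, and therefore enters with the factor $2$.

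It then remains to identify this $(3n+m)$-dimensional matrix as the Schur complement of \eqref{eq:stab_cond_IO_data_1} with respect to its $(4,4)$ block $-\tfrac12\mathcal{S}$. The only nonzero entry of the fourth block-column above the diagonal is the $(3,4)$ block $\begin{bmatrix}0\\\mathcal{F}\end{bmatrix}$, so eliminating the $(4,4)$ block introduces a correction that is nonzero only in the $(3,3)$ position, where it equals $-\begin{bmatrix}0\\\mathcal{F}\end{bmatrix}(-\tfrac12\mathcal{S})^{-1}\begin{bmatrix}0&\mathcal{F}^\top\end{bmatrix}=2\begin{bmatrix}0&0\\0&\mathcal{F}\mathcal{S}^{-1}\mathcal{F}^\top\end{bmatrix}$. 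This turns the linear $(3,3)$ block $\begin{bmatrix}\mathcal{S}&\mathcal{F}^\top\\\mathcal{F}&0\end{bmatrix}+Q_{AB}$ of \eqref{eq:stab_cond_IO_data_1} into exactly the nonlinear block found above, the coefficient $\tfrac12$ being dictated precisely by this matching. Applying the Schur complement lemma (using $-\tfrac12\mathcal{S}\prec0$) closes the equivalence, and Theorem~\ref{thm:stab_IO_data} delivers the claim.

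The step I expect to be most delicate is the last matching: the auxiliary $(4,4)$ block and its coefficient $\tfrac12$ must be chosen so that the Schur-complement correction reproduces \emph{exactly} the quadratic term $\mathcal{F}\mathcal{S}^{-1}\mathcal{F}^\top$ generated by $\mathcal{F}=K\mathcal{S}$, including the factor $2$ from the two separate channels. Getting this bookkeeping right, together with carefully tracking that every occurrence of $K^\top$ is accompanied by a factor $\mathcal{S}$ so that only the products $\mathcal{F}$, $\mathcal{F}^\top$ and $\mathcal{F}\mathcal{S}^{-1}\mathcal{F}^\top$ survive, is where the bulk of the routine but error-prone computation lies.
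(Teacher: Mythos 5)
Your proposal is correct and follows essentially the same route as the paper: choose $\mathcal{X}_\text{inv}=\mathcal{S}$, substitute $\mathcal{F}=K\mathcal{S}$, and identify \eqref{eq:stab_cond_IO_data} with the Schur complement of \eqref{eq:stab_cond_IO_data_1} taken with respect to the $(4,4)$ block $-\tfrac12\mathcal{S}\prec0$, then invoke Theorem~\ref{thm:stab_IO_data}. Your block-by-block bookkeeping (in particular the factor $2$ on $\mathcal{F}\mathcal{S}^{-1}\mathcal{F}^\top$ arising from the $\mathcal{S}$- and $\mathcal{X}_\text{inv}$-channels, matched by the correction $-\bigl[\begin{smallmatrix}0\\\mathcal{F}\end{smallmatrix}\bigr](-\tfrac12\mathcal{S})^{-1}\bigl[\begin{smallmatrix}0&\mathcal{F}^\top\end{smallmatrix}\bigr]$) checks out and simply spells out what the paper's proof leaves implicit.
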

\begin{proof}
	We apply the Schur complement to \eqref{eq:stab_cond_IO_data_1} with respect to its fourth diagonal component and substitute $\mathcal{F} = K\mathcal{S}$ to obtain \eqref{eq:stab_cond_IO_data} with $\mathcal{X}_\text{inv} = \mathcal{S}$. This can be easily seen by calculating $\tilde{\Pi}$ for this particular choice and subsequently expanding  \eqref{eq:stab_cond_IO_data}. Stability then follows from Theorem \ref{thm:stab_IO_data}.
\end{proof}

Naturally, since the true system matrices $A_\text{tr},B_\text{tr}$ are contained in $\Sigma_{AB}$, the aperiodically sampled system \eqref{eq:system_aper_sampled} is stable if \eqref{eq:system_IO_uncertain} is robustly stable for all $\begin{bmatrix} A & B \end{bmatrix} \in\Sigma_{AB}$. For completeness, this statement is formalized in the following result.
\begin{corollary} \label{cor:stab_IO_data_true}
	Suppose the conditions of Theorem \ref{thm:stab_IO_data} or of Corollary \ref{cor:stab_IO_data} are fulfilled. Then the controller $K$ renders the origin of the aperiodically sampled system \eqref{eq:system_aper_sampled} asymptotically stable for all $h_k\in\N_{[1,\overline{h}]}$.
\end{corollary}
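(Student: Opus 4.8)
The plan is to specialize the robust data-driven guarantee to the unknown true plant and then to undo, in reverse order, the two reformulations that led from the aperiodically sampled system \eqref{eq:system_aper_sampled} to the uncertain LFT \eqref{eq:system_IO_uncertain}. The cornerstone of the argument is that the true matrices are consistent with the recorded data: the trajectory was generated by \eqref{eq:system_data_perturbed} under some realized disturbance sequence $\hat{D}$, so that $X^+ = A_\text{tr} X + B_\text{tr} U + B_d \hat{D}$. Since Assumption \ref{ass:disturbance_bound} guarantees $\hat{D}\in\mathcal{D}$, the definition of $\Sigma_{AB}$ immediately yields $\begin{bmatrix} A_\text{tr} & B_\text{tr}\end{bmatrix}\in\Sigma_{AB}$.

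With this containment in hand, I would first invoke the hypothesis. Under the conditions of Theorem \ref{thm:stab_IO_data} (or, in the design case, those of Corollary \ref{cor:stab_IO_data}, which certifies the same property for $K=\mathcal{F}\mathcal{S}^{-1}$), the origin of \eqref{eq:system_IO_uncertain} is asymptotically stable for \emph{every} $\begin{bmatrix} A & B\end{bmatrix}\in\Sigma_{AB}$, and therefore in particular for $\begin{bmatrix} A_\text{tr} & B_\text{tr}\end{bmatrix}$. Next I would eliminate the data-uncertainty channel of \eqref{eq:system_IO_uncertain} for this specific choice: substituting $w=\begin{bmatrix} A_\text{tr} & B_\text{tr}\end{bmatrix} z$ with $z=\begin{bmatrix} I \\ K\end{bmatrix} x + \begin{bmatrix} 0 \\ K\end{bmatrix} e$ gives $w=(A_\text{tr}+B_\text{tr}K)x + B_\text{tr}K e$, whence $x(t+1)=(A_\text{tr}+B_\text{tr}K)x(t)+B_\text{tr}K e(t)$ and $y(t)=x(t)-w(t)=(I-A_\text{tr}-B_\text{tr}K)x(t)-B_\text{tr}K e(t)$, with $e=\Delta y$. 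This is exactly the model-based interconnection \eqref{eq:fb_interconnection_ss}, so its origin is asymptotically stable.

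The last step closes the loop back to Subsection \ref{sec:IO_setup}: the interconnection \eqref{eq:fb_interconnection_ss} was constructed as an \emph{exact} rewriting of the aperiodically sampled system \eqref{eq:system_aper_sampled}, sharing the same state trajectory $x$, with $\Delta$ encoding the sawtooth delay $\tau(t)$. Consequently, asymptotic stability of \eqref{eq:fb_interconnection_ss} is equivalent to asymptotic stability of the origin of \eqref{eq:system_aper_sampled}, which is the desired conclusion.

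The one step deserving genuine care, rather than mere bookkeeping, is the passage from a single realized sampling pattern to the \emph{arbitrarily} time-varying statement ``for all $h_k\in\N_{[1,\overline{h}]}$.'' Although the operator $\Delta$ in \eqref{eq:def_delta} depends on the particular sampling instants $\{t_k\}$, the stability certificate uses only the multiplier $\Pi_{\ell_2}$ from Corollary \ref{cor:L2_gain_IQC}, and the gain bound of Lemma \ref{lem:L2_gain} holds \emph{uniformly} over every admissible sawtooth sequence with $h_k\in\N_{[1,\overline{h}]}$. Hence a single robust certificate covers all such sequences simultaneously, upgrading the conclusion from the realized sampling to arbitrary time-varying sampling within the prescribed bound, precisely as claimed.
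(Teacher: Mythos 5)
Your proof is correct and takes essentially the same route as the paper, whose argument is precisely the observation that $\begin{bmatrix} A_\text{tr} & B_\text{tr}\end{bmatrix}\in\Sigma_{AB}$ (since the data were generated by \eqref{eq:system_data_perturbed} with $\hat{D}\in\mathcal{D}$), so that robust stability of \eqref{eq:system_IO_uncertain} over $\Sigma_{AB}$ specializes to the true plant and, via the exact LFT/delay-operator reformulation, to the aperiodically sampled system \eqref{eq:system_aper_sampled}. Your added remarks on unwinding the uncertainty channel and on the uniformity of the IQC of Lemma \ref{lem:L2_gain} over all admissible sampling sequences are correct elaborations of what the paper leaves implicit.
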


Theorem \ref{thm:stab_IO_data} may be used to address Problem \ref{prob:arbitrary_sampling_an}, since it allows to check an MSI bound for a given controller. Corollary \ref{cor:stab_IO_data}, where $\mathcal{X}_\text{inv}$ is coupled to $\mathcal{S}$, enables a co-search for a controller and thereby solves Problem 2.

\section{Switched systems approach} \label{sec:switched}

In this section, we interpret the aperiodically sampled system, whose dynamics depend on the arbitrarily time-varying sampling interval, as a switched system. We elaborate on the main idea of this approach before stating our contributions.

\subsection{Main idea} \label{sec:switched_basic_idea}

Let us define for an $h\in\N$
\begin{equation*}
A^h_\text{tr} \coloneqq (A_\text{tr})^h \text{ and } B^h_\text{tr}\coloneqq\sum_{i=0}^{h-1} (A_\text{tr})^iB_\text{tr}.\end{equation*}
Then, it is immediate that the state of the aperiodically sampled system \eqref{eq:system_aper_sampled} at sampling instants is governed by the discrete-time switched system
\begin{equation} \label{eq:system_switched}
x(t_{k+1}) = (A^{h_k}_\text{tr} + B^{h_k}_\text{tr}K) x(t_k),
\end{equation}
where the switching sequence is given by $h_k\in\N_{[1,\overline{h}]}$. Due to linearity of the aperiodically sampled system \eqref{eq:system_aper_sampled}, the state at inter-sampling instants is bounded in the sense of \cite[Definition 2]{nesic1999formulas}, such that asymptotic stability of \eqref{eq:system_switched} implies asymptotic stability of \eqref{eq:system_aper_sampled} and vice versa \cite[Theorem 2]{nesic1999formulas}. Note that since $A_\text{tr}$, $B_\text{tr}$ are unknown, the system matrices appearing in the switched system $A^h_\text{tr}$, $B^h_\text{tr}$, $h\in\N_{[1,\overline{h}]}$, are naturally unknown as well. 

Model-based stability analysis of switched systems is a well-understood topic \cite{liberzon1999basic,daafouz2002switched,liberzon2003switched_book} and the framework has been successfully applied to aperiodically sampled systems\footnote{To be precise, \cite{xiong2007packet_loss,yu2004dropout} considered an NCS with bounded packet loss, which admits an equivalent mathematical description.} \cite{hetel2011discrete,xiong2007packet_loss,yu2004dropout}. In particular, it was shown in \cite{xiong2007packet_loss,daafouz2002switched} that stability holds if there exist matrices $\mathcal{S}^h=\mathcal{S}^{h\top}\succ 0\in\R^{n\times n}$ and $\mathcal{G}^h\in\R^{n\times n}$, $h\in\N_{[1,\overline{h}]}$, which satisfy
\begin{equation} \label{eq:stab_cond_switched_ss_2}
\begin{bmatrix}
\mathcal{G}^h + \mathcal{G}^{h\top} - \mathcal{S}^h & \star \\
(A_\text{tr}^h+B_\text{tr}^h K)\mathcal{G}^h & \mathcal{S}^j
\end{bmatrix} \succ 0, \; (h,j)\in \N_{[1,\overline{h}]}^2.
\end{equation}

However, it is not straightforward to translate \eqref{eq:stab_cond_switched_ss_2} to the data-driven setup. This is mainly owed to the fact that the matrices $A_\text{tr}^h$, $B_\text{tr}^h$, which correspond to the different possible sampling intervals, appear in the stability conditions. From the given trajectory $\{x(t)\}_{t=0}^{N}$, $\{u(t)\}_{t=0}^{N-1}$, it is not directly possible to obtain data-driven parametrizations of $B^h$, since this would require that the input in the recorded data is held constant over $h$ consecutive time instants. A possibility to obtain the sought parametrizations would be to conduct multiple experiments with all the sampling periods of interest. However, this could potentially be very laborious and costly.
	
In Subsection \ref{sec:switched_parametrization}, we show that by reexpressing \eqref{eq:system_switched} with a \textit{lifted} input matrix, it is in fact possible to obtain parametrizations of the required matrices for all sampling periods $h\in\N_{[1,\overline{h}]}$. Our solution uses the given data $\{x(t)\}_{t=0}^{N}$, $\{u(t)\}_{t=0}^{N-1}$ only and requires a slightly modified assumption on the disturbance. In Subsection \ref{sec:switched_stability}, we provide data-driven conditions for stability using the derived parametrizations.

\subsection{Data-driven parametrization of the lifted systems} \label{sec:switched_parametrization}

Note that from the definition of $B^h_\text{tr}$, it follows that we can rewrite its product with $K$ as
\begin{equation*}
B^h_\text{tr} K = \begin{bmatrix} A^{h-1}_\text{tr} B_\text{tr} & \cdots & B_\text{tr} \end{bmatrix} \begin{bmatrix} K \\ \vdots \\ K \end{bmatrix}.
\end{equation*}
Hence, defining the lifted input matrix and the lifted controller
\begin{equation} \label{eq:lifted_system_matrices}
\underline{B}^h_\text{tr}\coloneqq \begin{bmatrix} A^{h-1}_\text{tr} B_\text{tr} & \cdots & B_\text{tr} \end{bmatrix},
\; \begin{rcases} \underline{K}^h \coloneqq \begin{bmatrix} K \\ \vdots \\ K \end{bmatrix} \hspace{5pt} \end{rcases} h \text{ times},
\end{equation}
the switched system \eqref{eq:system_switched} can be rewritten as the lifted switched system
\begin{equation}\label{eq:system_switched_lifted}
x(t_{k+1}) = \left(A_\text{tr}^{h_k} + \underline{B}^{h_k}_\text{tr} \underline{K}^{h_k} \right) x(t_k). 
\end{equation}

Recall that we have access to one measured state-input trajectory $\{x(t)\}_{t=0}^{N}$, $\{u(t)\}_{t=0}^{N-1}$ of \eqref{eq:system_data_perturbed}. The recorded data were sampled at each time instant and were affected by the unknown disturbance sequence $\{\hat{d}(t)\}_{t=0}^{N-1}$. We will show in this section that using this data, it is possible to obtain data-driven parametrizations of the lifted matrices $\begin{bmatrix} A^h & \underline{B}^h \end{bmatrix}$ appearing in \eqref{eq:system_switched_lifted}. Let us define, for an arbitrary $h\in\N_{[1,\overline{h}]}$, the matrices containing the measured data (see also \cite{xin2021self}) as
\begin{align*}
&X^h_+\coloneqq \begin{bmatrix} x(h) & x(h+1) & \cdots & x(N)\end{bmatrix}, \\
&X^h \coloneqq \begin{bmatrix} x(0) & x(1) & \cdots & x(N-h) \end{bmatrix}, \\				
&U^h \coloneqq \begin{bmatrix} u(0) & u(1) & \cdots & u(N-h) \\
\vdots & \vdots &  & \vdots \\
u(h-1) & u(h) & \cdots & u(N-1)	\end{bmatrix}.
\end{align*}

With the assumption that the measured data stem from the perturbed linear system \eqref{eq:system_data_perturbed}, it is immediate that the matrices $X^h_+$, $X^h$ and $U^h$ contain state-input measurements of the lifted perturbed system
\begin{equation} \label{eq:system_lifted_perturbed}
\begin{aligned}
x(t+h) &= A^{h}_\text{tr} x(t) + \underline{B}_\text{tr}^{h} \begin{bmatrix} u(t) \\ \vdots \\ u(t+h-1) \end{bmatrix} \\
&+\begin{bmatrix} A_\text{tr}^{h-1} B_d & \cdots & B_d \end{bmatrix}
\begin{bmatrix} d(t) \\ \vdots \\ d(t+h-1) \end{bmatrix},
\end{aligned}
\end{equation}
$h\in\N_{[1,\overline{h}]}$. Note that the system matrices of the lifted perturbed system \eqref{eq:system_lifted_perturbed} are the same as in the lifted switched system \eqref{eq:system_switched_lifted}. We define the lifted disturbance for $h=1$ as $D^1=\begin{bmatrix} d^1(0) & \cdots & d^1(N-1) \end{bmatrix}\coloneqq \begin{bmatrix} d(0) & \cdots & d(N-1) \end{bmatrix}$ and for
$h\in\N_{[2,\overline{h}]}$ as
\begin{equation} \label{eq:lifted_disturbance}
\begin{aligned}
&D^h = \begin{bmatrix} d^h(0) & \cdots & d^h(N-h) \end{bmatrix} \\
&\coloneqq\begin{bmatrix} {(A_\text{tr}^{h-1}B_d)}^\top \\ \vdots \\ B_d^\top \end{bmatrix}^\top\hspace{-1.5pt}
\begin{bmatrix} d(0) & \cdots & d(N-h) \\
\vdots &  & \vdots \\
d(h-1) & \cdots & d(N-1) \end{bmatrix}\hspace{-2pt}.
\end{aligned}
\end{equation}
We can now define the set of all lifted matrices consistent with the measured data and the lifted disturbance \eqref{eq:lifted_disturbance} as
\begin{equation*}
\begin{aligned}
\Sigma^h_{AB}\coloneqq\{&\begin{bmatrix} A^h & \underline{B}^h \end{bmatrix}\in \R^{n\times (n+hm)} \mid \\
& X_+^h=A^h X^h + \underline{B}^h U^h +B_d^h D^h, D^h\in\mathcal{D}^h\},
\end{aligned}
\end{equation*}
for all $h\in\N_{[1,\overline{h}]}$, where $B_d^1\coloneqq B_d$, $\mathcal{D}^1 \coloneqq \mathcal{D}$, $B_d^h \coloneqq I$, and
\begin{equation*}
\mathcal{D}^h\coloneqq \{D^h \text{ according to \eqref{eq:lifted_disturbance}} \hspace{-0.6pt}\bigm|\hspace{-0.6pt} \begin{bmatrix} d(0) & \cdots & d(N-1) \end{bmatrix}\in\mathcal{D} \}
\end{equation*}
for all $h\in\N_{[2,\overline{h}]}$. Clearly, the particular lifted disturbance sequence that affected the measurements also lies in $\mathcal{D}^h$.

In order to obtain a data-driven parametrization of $\Sigma_{AB}^h$ as in \cite{berberich2020combining}, we require a description of the set $\mathcal{D}^h$ via a quadratic constraint. Denoting $n_d^1\coloneqq n_d$, $n_d^h=n$ for $h\in\N_{[2,\overline{h}]}$, and $N_h\coloneqq N-h+1$, $\mathcal{D}^h$ would need to satisfy the representation
\begin{align} \label{eq:lifted_dist_QMI}
\Big\{D^h\in\mathbb{R}^{n_d^h\times N_h}\Bigm|
\begin{bmatrix}D^{h\top}\\I\end{bmatrix}^\top
P_d^h
\begin{bmatrix}D^{h\top}\\I\end{bmatrix}\succeq0, \; \forall P_d^h\in\boldsymbol{P}_d^h \Big\}
\end{align}
where $\boldsymbol{P}_d^h$ is a known convex cone of symmetric matrices admitting an LMI representation, for which there exists $P_d^h\in\boldsymbol{P}_d^h$ that satisfies $\begin{bmatrix}I & 0\end{bmatrix}P_d^h\begin{bmatrix}I & 0\end{bmatrix}^\top\prec 0$. For $h\geq2$, it is generally difficult to derive a tight characterization as in \eqref{eq:lifted_dist_QMI}, even if $A_\text{tr}$ was known.

Therefore, in the following, we will derive an \textit{overapproximation} of the set $\mathcal{D}^h$ by a quadratic constraint for all $h\in\N_{[1,\overline{h}]}$. We assume the following special case for the disturbance bound $\mathcal{D}$.
\begin{assumption} \label{ass:comp_wise_disturbance_bound}
	The components of the unknown disturbance sequence $\hat{D}=\begin{bmatrix} \hat{d}(0) & \cdots & \hat{d}(N-1) \end{bmatrix}$ are norm-bounded in the 2-norm with a known upper bound $\overline{d}\ge 0$, i.e.,
	\begin{equation*}
	\lVert d(t)\rVert_2\le\overline{d}, \; \forall t\in\N_{[0,N-1]}.
	\end{equation*}
\end{assumption}
For such disturbance sequences, we derive the preferably tight overapproximation
\begingroup
\setlength\arraycolsep{2.5pt}
\begin{align} \label{eq:switched_disturbance_overapprox}
\tilde{\mathcal{D}}^h\hspace{-1pt}\coloneqq\hspace{-1pt}\Big\{\hspace{-1pt}\tilde{D}^h\in\mathbb{R}^{n_d^h\times N_h}\hspace{-2pt}\Bigm| \hspace{-3pt}
\begin{bmatrix}\tilde{D}^{h\top}\\I\end{bmatrix}^\top \hspace{-5pt}
\tilde{P}_d^h \hspace{-2pt}
\begin{bmatrix}\tilde{D}^{h\top}\\I\end{bmatrix}\hspace{-1pt}\succeq\hspace{-1pt}0, \: \forall \tilde{P}_d^h\in\tilde{\boldsymbol{P}}_d^h\Big\},
\end{align}
\endgroup
which corresponds to determining a class of multipliers $\tilde{\boldsymbol{P}}_d^h$ with $\tilde{P}_d^h\in\tilde{\boldsymbol{P}}_d^h$ such that $\begin{bmatrix}I & 0\end{bmatrix}\tilde{P}_d^h\begin{bmatrix}I & 0\end{bmatrix}^\top\prec 0$, for which $\mathcal{D}^h \subseteq \tilde{\mathcal{D}}^h$ is satisfied for all $h\in\N_{[1,\overline{h}]}$. Then, if we define
\begin{equation*}
\begin{aligned}
\tilde{\Sigma}^h_{AB}\coloneqq\{&\begin{bmatrix} A^h & \underline{B}^h \end{bmatrix}\in \R^{n\times (n+hm)} \mid \\
& X_+^h=A^h X^h + \underline{B}^h U^h + B_d^h \tilde{D}^h, \tilde{D}^h\in\tilde{\mathcal{D}}^h\},
\end{aligned}
\end{equation*}
it is clear that $\Sigma^h_{AB} \subseteq \tilde{\Sigma}^h_{AB}$ for all $h\in\N_{[1,\overline{h}]}$, and that $\tilde{\Sigma}^h_{AB}$ can be expressed via a quadratic constraint as discussed in \cite{berberich2020combining}. In the following, we will elaborate how to construct $\tilde{\boldsymbol{P}}_d^h$.
\begin{remark} \label{rem:disturbance_bound}
	Assumption \ref{ass:comp_wise_disturbance_bound} directly implies that \eqref{eq:quad_multiplier} with $Q_d = -I$, $S_d=0$ and $R_d = \overline{d}^2NI$ or \eqref{eq:diag_multiplier} are valid classes of multipliers for the overapproximation $\tilde{\mathcal{D}}^1$ (cf. \cite[Subsection II.C]{berberich2020combining}).
\end{remark}

First, note that with Assumption \ref{ass:comp_wise_disturbance_bound} and \eqref{eq:lifted_disturbance}, the 2-norm of the components of $D^h$ can be upper bounded by
\begin{equation} \label{eq:disturbance_bound_via_sing_values}
\begin{aligned}
\lVert d^h(t) \rVert_2 &\le \sum_{i=0}^{h-1} \lVert A_\text{tr}^i \rVert_2 \lVert B_d \rVert_2 \lVert d(t+i) \rVert_2  \\ &\le \sum_{i=0}^{h-1} \sigma_\text{max}(A_\text{tr}^i) \sigma_\text{max}(B_d) \overline{d},
\end{aligned}
\end{equation}
$t\in\{0,\ldots,N-h\}$, $h\in\N_{[1,\overline{h}]}$. Note that $\sigma_{\max}(A_\text{tr}^i)$ is not known since $A_\text{tr}$ is unknown. Therefore, we use in the following the measured data to derive an overapproximation of $\sigma_{\max}(A_\text{tr}^i)$. To this end, consider that if it was known that
\begin{equation} \label{eq:sing_value_upper_bound}
A_\text{tr}^h{A_\text{tr}^{h\top}}\preceq\overline{\sigma}_h^2 I
\end{equation}
for some $\overline{\sigma}_h\ge0$, it would follow directly that $\sigma_\text{max}(A_\text{tr}^h)\le\overline{\sigma}_h$. We can verify a bound like \eqref{eq:sing_value_upper_bound} only in case the same bound holds for all $A^h$ that are consistent with the data. Assume, for now, that a class of multipliers $\tilde{\boldsymbol{P}}_d^h$ such that $\mathcal{D}^h\subseteq\tilde{\mathcal{D}}^h$ was known for a given $h$. It is straightforward to show (cf. \cite[Lemma 2]{berberich2020combining}) that then, if we define
\begin{equation} \label{eq:def_Ph}
\tilde{\boldsymbol{P}}^h_{AB} \coloneqq
\left[\begin{array}{cc} -X^h & 0 \\ -U^h & 0 \\ \hline X_+^h & B_d^h \end{array}\right]
\tilde{\boldsymbol{P}}_d^h
\left[\begin{array}{cc} -X^h & 0 \\ -U^h & 0 \\ \hline X_+^h & B_d^h \end{array}\right]^\top,
\end{equation}
the set $\tilde{\Sigma}^h_{AB}$ can be expressed by the quadratic constraint
\begingroup
\setlength\arraycolsep{2pt}
\begin{equation} \label{eq:cond_consistent_matrices}
\tilde{\Sigma}^h_{AB}\hspace{-1pt}=\hspace{-1pt}\Big\{\hspace{-2pt}\begin{bmatrix} A^h & \underline{B}^h \end{bmatrix}\hspace{-3pt}\Bigm|\hspace{-3pt}
\begin{bmatrix} A^{h\top} \\ \underline{B}^{h\top} \\ I \end{bmatrix}^\top \hspace{-6.5pt}
\tilde{P}^h_{AB}
\hspace{-1pt}
\begin{bmatrix} A^{h\top} \\ \underline{B}^{h\top} \\ I \end{bmatrix}\hspace{-2pt}\succeq\hspace{-2pt}0, \: \forall \tilde{P}^h_{AB}\in\tilde{\boldsymbol{P}}_{AB}^h\hspace{-1pt}\Big\}.
\end{equation}
\endgroup
Then, if we rewrite Condition \eqref{eq:sing_value_upper_bound} as
\begin{equation} \label{eq:cond_sing_value_switched}
\begin{bmatrix} A^{h\top} \\ \underline{B}^{h\top} \\ I \end{bmatrix}^\top
\begin{bmatrix} -I & 0 & 0 \\ 0 & 0 & 0 \\ 0 & 0 & \overline{\sigma}_h^2 I \end{bmatrix}
\begin{bmatrix} A^{h\top} \\ \underline{B}^{h\top} \\ I \end{bmatrix} \succeq 0,
\end{equation}
we may use the S-procedure \cite[Lemma A.1]{scherer2000robust} to derive the condition
\begin{equation} \label{eq:cond_sing_value_switched_S}
\begin{bmatrix} -I & 0 & 0 \\ 0 & 0 & 0 \\ 0 & 0 & \overline{\sigma}_h^2 I \end{bmatrix}-
\tilde{P}^h_{AB} \succeq 0
\end{equation}
which, if it holds for some $\tilde{P}^h_{AB}\in\tilde{\boldsymbol{P}}^h_{AB}$, implies that the quadratic matrix inequality (QMI) \eqref{eq:cond_sing_value_switched} holds for all $\begin{bmatrix} A^h & \underline{B}^h \end{bmatrix}$ which satisfy \eqref{eq:cond_consistent_matrices}. Since $\Sigma^h_{AB}\subseteq\tilde{\Sigma}^h_{AB}$, this provides us with a tractable condition to estimate the maximum singular value directly from noisy data.

In view of \eqref{eq:disturbance_bound_via_sing_values}, we can see that the component-wise bound on $D^h$ depends only on the singular values $\sigma_\text{max}(A^i)$ from $i=1$ to $h-1$. Therefore, it is possible to construct a bound on the lifted disturbance recursively: For a given $h$, first the quadratic disturbance bound $\tilde{\mathcal{D}}^h$ can be used to estimate the maximum singular value $\overline{\sigma}_h$ of $A^h$ via \eqref{eq:cond_sing_value_switched_S}. From this, a component-wise disturbance bound for $D^{h+1}$ can be constructed via \eqref{eq:disturbance_bound_via_sing_values}, which in turn implies that \eqref{eq:quad_multiplier} or \eqref{eq:diag_multiplier} are valid classes of multipliers for the overapproximation $\tilde{\mathcal{D}}^{h+1}$ (cf. Remark \ref{rem:disturbance_bound}). After this, $h$ is increased by one and the procedure is repeated. In Algorithm \ref{algo:disturbance_bound}, this recursive scheme is summarized and formalized.

\begin{algorithm}
	\SetAlgoLined
	\KwIn{$X_+^h$, $X^h$, $U^h$ $\forall h=1,\ldots,\overline{h}-1$, $N$, $B_d$, $\overline{d}$ .}
	\KwOut{$\tilde{\boldsymbol{P}}_d^h$ and $\tilde{\boldsymbol{P}}^h_{AB}$ for all $h=1,\ldots,\overline{h}$.}
	Set $\overline{d}_1 \coloneqq \overline{d}$\; \label{algo:line_init}
	\For{$h=1$ \KwTo $\overline{h}$}{
		\eIf{$h=1$}{$B_d^h=B_d$;}{$B_d^h=I$;}
		Set $N_h\coloneqq N-h+1$, compute $\tilde{\boldsymbol{P}}_d^h$ as in \eqref{eq:quad_multiplier} (with $Q_d = -I$, $S_d=0$ and $R_d = \overline{d}_{h}^2N_{h}I$) or \eqref{eq:diag_multiplier} (with $\overline{d},N$ replaced by $\overline{d}_{h},N_{h}$) and $\tilde{\boldsymbol{P}}^h_{AB}$ as in \eqref{eq:def_Ph}\; \label{algo:line_quad_disturbance_bound}
		Solve the SDP
		$\min_{\overline{\sigma}_h^2,\tilde{P}^h_{AB}\in\tilde{\boldsymbol{P}}^h_{AB}} \; \overline{\sigma}_h^2 \quad \text{s.t.} \quad \eqref{eq:cond_sing_value_switched_S}, \; \overline{\sigma}_h^2\ge 0$\; \label{algo:line_SDP}
		\eIf{SDP admits a solution}{\eIf{h=1}{Set $\overline{\sigma}_h\coloneqq \sqrt{\overline{\sigma}_h^2}$\; \label{algo:line_min3}}{Set $\overline{\sigma}_h\coloneqq \min\{\sqrt{\overline{\sigma}_h^2},\overline{\sigma}_1\overline{\sigma}_{h-1}\}$\; \label{algo:line_min}}
		}{
			Set $\overline{\sigma}_h\coloneqq \overline{\sigma}_1\overline{\sigma}_{h-1}$\; \label{algo:line_min2}}
		Set $\overline{d}_{h+1} \coloneqq (1+\sum_{i=1}^{h} \overline{\sigma}_h) \sigma_\text{max}(B_d) \overline{d}$\;
		\label{algo:line_disturbance_bound}
	}
	
	\caption{Compute lifted disturbance bound and lifted system parametrizations from data.}
	\label{algo:disturbance_bound}
\end{algorithm}

Algorithm \ref{algo:disturbance_bound} takes as inputs the data, the data length $N$, the disturbance matrix $B_d$ and the component-wise bound on the disturbance $\overline{d}$ from Assumption \ref{ass:comp_wise_disturbance_bound}. It returns as outputs the class of multipliers $\tilde{\boldsymbol{P}}_d^h$ for the overapproximations $\tilde{\mathcal{D}}^h$ of the lifted disturbance bound $\mathcal{D}^h$, and the class of multipliers $\tilde{\boldsymbol{P}}^h_{AB}$ parametrizing the set $\tilde{\Sigma}^h_{AB}$. After initialization, the algorithm iterates from $h=1$ to $\overline{h}$ to estimate the respective singular values and disturbance bounds. In Line \ref{algo:line_quad_disturbance_bound}, the class of multipliers $\tilde{\boldsymbol{P}}_d^h$ is defined, which follows immediately from a component-wise disturbance bound $\lVert d^h(t) \rVert_2 \le \overline{d}_h$ (cf. Remark \ref{rem:disturbance_bound}). Since we are interested in possibly small singular values and disturbance bounds, an SDP which minimizes $\overline{\sigma}_h^2$ subject to \eqref{eq:cond_sing_value_switched_S} is solved in Line \ref{algo:line_SDP}. Then, if the SDP admits a solution, an estimate for the maximum singular value of $A^h$ is determined in Lines \ref{algo:line_min3} and \ref{algo:line_min} as the minimum of $\sqrt{\overline{\sigma}_h^2}$, which comes from the SDP, and of $\overline{\sigma}_1\overline{\sigma}_{h-1}$. The latter is a valid upper bound for $\sigma_\text{max}(A^h)$ simply due to the fact that $\sigma_\text{max}(CD)\le\sigma_\text{max}(C)\sigma_\text{max}(D)$ for some matrices $C,D$. If the SDP does not admit a solution, the singular value estimate is simply set to $\overline{\sigma}_1\overline{\sigma}_{h-1}$ in Line \ref{algo:line_min2}. Finally, in Line \ref{algo:line_disturbance_bound}, the component-wise disturbance bound for $D^{h+1}$ is computed based on the singular value estimates and \eqref{eq:disturbance_bound_via_sing_values}.

\begin{remark}
	We remark that for an $h\in\N_{[1,\overline{h}]}$, all members of the set $\{\prod_{i=1}^{h-1} \overline{\sigma}_i^{n_i} \vert \sum_{i=1}^{h-1} in_i = h \}$ are also valid estimates for $\sigma_\text{max}(A_\text{tr}^h)$. Therefore, one might modify Lines \ref{algo:line_min} and \ref{algo:line_min2} of Algorithm \ref{algo:disturbance_bound} to minimize over all members of this set or arbitrary subsets thereof. While this can result in tighter estimates of the singular values and hence, of the disturbance bounds, the corresponding scheme has combinatorial complexity in $\bar{h}$.
\end{remark}

In the following result, we state a simple condition under which Algorithm \ref{algo:disturbance_bound} indeed returns valid disturbance bounds. A proof is omitted, since the statement follows immediately from the preceding discussions.
\begin{lemma} \label{lem:lifted_disturbance_bound}
	Suppose Assumptions \ref{ass:Bd} and \ref{ass:comp_wise_disturbance_bound} are satisfied and there exist $\overline{\sigma}_1^2,\tilde{P}^1_{AB}\in\tilde{\boldsymbol{P}}^1_{AB}$ such that \eqref{eq:cond_sing_value_switched_S} is fulfilled for $h=1$. Then, it holds that $\mathcal{D}^h\subseteq\tilde{\mathcal{D}}^h$ and that $\Sigma^h_{AB}\subseteq\tilde{\Sigma}^h_{AB}$ for all $h\in\N_{[1,\overline{h}]}$, where $\tilde{\boldsymbol{P}}^h_d$ and $\tilde{\boldsymbol{P}}^h_{AB}$, $h\in\N_{[1,\overline{h}]}$, are the outputs of Algorithm \ref{algo:disturbance_bound}.
\end{lemma}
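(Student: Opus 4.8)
The plan is to prove the claim by induction on the lifting horizon $h$, carrying throughout a slightly strengthened hypothesis: in addition to the inclusions $\mathcal{D}^h \subseteq \tilde{\mathcal{D}}^h$ and $\Sigma^h_{AB} \subseteq \tilde{\Sigma}^h_{AB}$, the quantity $\overline{\sigma}_h$ returned by Algorithm \ref{algo:disturbance_bound} is a genuine upper bound on $\sigma_\text{max}(A_\text{tr}^h)$. First I would observe that the second inclusion always follows from the first: since $\tilde{\Sigma}^h_{AB}$ differs from $\Sigma^h_{AB}$ only by replacing $\mathcal{D}^h$ with its overapproximation $\tilde{\mathcal{D}}^h$ (the data $X_+^h, X^h, U^h, B_d^h$ being identical), any $\begin{bmatrix} A^h & \underline{B}^h \end{bmatrix}$ explained by some $D^h \in \mathcal{D}^h \subseteq \tilde{\mathcal{D}}^h$ lies a fortiori in $\tilde{\Sigma}^h_{AB}$. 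It therefore suffices to track the disturbance inclusions and the singular-value bounds.

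For the base case $h=1$, Algorithm \ref{algo:disturbance_bound} sets $\overline{d}_1 = \overline{d}$, so Assumption \ref{ass:comp_wise_disturbance_bound} together with Remark \ref{rem:disturbance_bound} immediately gives $\mathcal{D}^1 = \mathcal{D} \subseteq \tilde{\mathcal{D}}^1$ and hence $\Sigma^1_{AB} \subseteq \tilde{\Sigma}^1_{AB}$. The lemma's hypothesis guarantees feasibility of the SDP in Line \ref{algo:line_SDP} at $h=1$, and here I would reuse the S-procedure argument preceding \eqref{eq:cond_sing_value_switched_S}: feasibility of \eqref{eq:cond_sing_value_switched_S} for some $\tilde{P}^1_{AB} \in \tilde{\boldsymbol{P}}^1_{AB}$ forces the QMI \eqref{eq:cond_sing_value_switched} to hold for every matrix in $\tilde{\Sigma}^1_{AB}$. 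Since $\begin{bmatrix} A_\text{tr} & B_\text{tr} \end{bmatrix} \in \Sigma^1_{AB} \subseteq \tilde{\Sigma}^1_{AB}$, the QMI holds for the true matrices, which is precisely \eqref{eq:sing_value_upper_bound} and thus certifies $\sigma_\text{max}(A_\text{tr}) \le \overline{\sigma}_1$.

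For the inductive step I would assume the hypothesis up to some $h \ge 1$ and argue for $h+1$. By the estimate \eqref{eq:disturbance_bound_via_sing_values} and the induction hypothesis $\sigma_\text{max}(A_\text{tr}^i) \le \overline{\sigma}_i$, the columns of $D^{h+1}$ satisfy the norm bound $\overline{d}_{h+1}$ computed in Line \ref{algo:line_disturbance_bound}, so Remark \ref{rem:disturbance_bound} again yields a valid multiplier class with $\mathcal{D}^{h+1} \subseteq \tilde{\mathcal{D}}^{h+1}$ and hence $\Sigma^{h+1}_{AB} \subseteq \tilde{\Sigma}^{h+1}_{AB}$. It remains to bound $\sigma_\text{max}(A_\text{tr}^{h+1})$: if the SDP at horizon $h+1$ is feasible, the same S-procedure argument certifies $\sigma_\text{max}(A_\text{tr}^{h+1}) \le \sqrt{\overline{\sigma}_{h+1}^2}$; otherwise (or whenever the minimizing branch selects it) the fallback $\overline{\sigma}_1 \overline{\sigma}_h$ is admissible by submultiplicativity of the spectral norm, $\sigma_\text{max}(A_\text{tr}^{h+1}) \le \sigma_\text{max}(A_\text{tr}) \sigma_\text{max}(A_\text{tr}^h) \le \overline{\sigma}_1 \overline{\sigma}_h$. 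In either case the value chosen in Lines \ref{algo:line_min}--\ref{algo:line_min2} bounds $\sigma_\text{max}(A_\text{tr}^{h+1})$, closing the induction.

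The main subtlety, rather than any single hard computation, is the circular dependence built into the scheme: certifying $\overline{\sigma}_h$ requires a valid $\tilde{\mathcal{D}}^h$, whereas building $\tilde{\mathcal{D}}^{h+1}$ requires $\overline{\sigma}_1, \dots, \overline{\sigma}_h$. The induction is exactly what disentangles this, and the point that makes it self-sustaining is that the single feasibility assumption at $h=1$ supplies $\overline{\sigma}_1$, after which the submultiplicative fallback $\overline{\sigma}_1 \overline{\sigma}_{h-1}$ guarantees that a valid singular-value estimate exists at every horizon irrespective of whether the SDP in Line \ref{algo:line_SDP} is solvable. I would therefore emphasize that the recursion cannot break down once $\overline{\sigma}_1$ is in hand, and that every singular-value bound is transferred from the overapproximated consistency set to the true matrix $A_\text{tr}^h$ through the inclusion $\Sigma^h_{AB} \subseteq \tilde{\Sigma}^h_{AB}$ combined with the S-procedure.
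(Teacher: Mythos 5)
Your proof is correct and follows essentially the same route as the paper, which in fact omits a formal proof and states that the lemma ``follows immediately from the preceding discussions''; your strong induction on $h$ (carrying the extra invariant $\sigma_\text{max}(A_\text{tr}^h)\le\overline{\sigma}_h$) is precisely a formalization of the recursive scheme of Algorithm~\ref{algo:disturbance_bound} described in Subsection~\ref{sec:switched_parametrization}. You also correctly isolate the one place where the lemma's feasibility hypothesis at $h=1$ is indispensable, namely to seed $\overline{\sigma}_1$ so that the submultiplicative fallback in Lines~\ref{algo:line_min}--\ref{algo:line_min2} keeps the recursion alive when the SDP is infeasible at larger $h$.
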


We summarize this subsection: we were able to derive a parametrization of the system matrices $\begin{bmatrix} A^h & \underline{B}^h \end{bmatrix}$ involved in the lifted switched system \eqref{eq:system_switched_lifted} for all $h\in\N_{[1,\overline{h}]}$, using a single state-input trajectory sampled at each time instant. This was achieved by lifting the input, circumventing the need for several experiments with each of the sampling periods of interest. To this end, as an independent contribution, we derived a procedure to estimate the maximum singular value of (monomials of) $A_\text{tr}$ using only measured data.

\subsection{Data-driven stability criteria for analysis and controller design} \label{sec:switched_stability}

Having obtained parametrizations of the lifted system matrices, we may now translate the model-based stability conditions to the data-driven setup. Since the true matrices $\begin{bmatrix} A^h_\text{tr} & \underline{B}^h_\text{tr} \end{bmatrix}$ are unknown, we must verify stability of \eqref{eq:system_switched_lifted} for all ``uncertainties'' $\begin{bmatrix} A^h & \underline{B}^h \end{bmatrix}\in\Sigma^h_{AB}$. For controller design, we also need to make sure that the controller gain matrix $\underline{K}^h$ follows the stacked structure \eqref{eq:lifted_system_matrices}, since only then, the lifted switched system \eqref{eq:system_switched_lifted} is equivalent to the aperiodically sampled system \eqref{eq:system_aper_sampled}. Therefore, we want to achieve stability of the origin of the uncertain lifted switched system
\begin{equation} \label{eq:system_switched_lifted_uncertain}
\begin{aligned}
&x(t_{k+1}) = \left(A^{h_k} + \underline{B}^{h_k} \underline{K}^{h_k}
\right) x(t_k), \\
&h_k\in\N_{[1,\overline{h}]}, \\
&\begin{bmatrix} A^h & \underline{B}^h \end{bmatrix} \in\Sigma^h_{AB}, \; h\in\N_{[1,\overline{h}]}.
\end{aligned}
\end{equation}
A sufficient condition for robust stability of \eqref{eq:system_switched_lifted_uncertain} is the existence of matrices $\mathcal{S}^h=\mathcal{S}^{h\top}\succ 0\in\R^{n\times n}$ and $\mathcal{G}^h\in\R^{n\times n}$, $h\in\N_{[1,\overline{h}]}$, satisfying
\begin{equation} \label{eq:stab_cond_switched_ss}
\begin{bmatrix}
\mathcal{G}^h + \mathcal{G}^{h\top} - \mathcal{S}^h & \star \\
(A^h+\underline{B}^h \underline{K}^h)\mathcal{G}^h & \mathcal{S}^j
\end{bmatrix} \succ 0, \; (h,j)\in \N_{[1,\overline{h}]}^2,
\end{equation}
for all $\begin{bmatrix} A^h & \underline{B}^h \end{bmatrix}\in\Sigma^h_{AB}$, $h\in\N_{[1,\overline{h}]}$ (cf. \cite[Theorem 2]{daafouz2002switched}).

As a step towards verifying stability from data, in the following result we derive an equivalent characterization of Condition \eqref{eq:stab_cond_switched_ss} as a QMI in the variable $\begin{bmatrix} A^h & \underline{B}^h \end{bmatrix}$.
\begin{lemma} \label{lem:stab_switched_equiv_charact}
	Condition \eqref{eq:stab_cond_switched_ss} holds if and only if the conditions $\mathcal{G}^h+\mathcal{G}^{h\top}-\mathcal{S}^h\succ 0$, $h\in\N_{[1,\overline{h}]}$, and
	\begin{equation} \label{eq:stab_cond_switched_ss_mod}
	\begin{bmatrix} A^{h\top} \\ \underline{B}^{h\top} \\ I \end{bmatrix}^\top
	M^{hj}
	\begin{bmatrix} A^{h\top} \\ \underline{B}^{h\top} \\ I \end{bmatrix}\succ 0, \; (h,j)\in \N_{[1,\overline{h}]}^2
	\end{equation}
	hold, where $M^{hj}$ is defined in \eqref{eq:def_Mij}.
\end{lemma}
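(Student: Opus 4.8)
The plan is to obtain the equivalence from a single application of the Schur complement lemma, the key observation being that the closed-loop lifted matrix factors as
$A^h+\underline{B}^h\underline{K}^h=\begin{bmatrix} A^h & \underline{B}^h\end{bmatrix}\begin{bmatrix} I \\ \underline{K}^h\end{bmatrix}$, so that its dependence on the uncertain block $\begin{bmatrix} A^h & \underline{B}^h\end{bmatrix}$ turns into a quadratic form. Since the Schur complement lemma is itself an \emph{if and only if} statement, no extra work is needed to obtain both directions of the claimed equivalence.

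First I would read off the block partition of \eqref{eq:stab_cond_switched_ss}, i.e. $\begin{bmatrix}\mathcal{G}^h+\mathcal{G}^{h\top}-\mathcal{S}^h & \star \\ (A^h+\underline{B}^h\underline{K}^h)\mathcal{G}^h & \mathcal{S}^j\end{bmatrix}\succ 0$, and take the Schur complement with respect to the $(1,1)$ block. This shows that \eqref{eq:stab_cond_switched_ss} holds if and only if both the pivot $\mathcal{G}^h+\mathcal{G}^{h\top}-\mathcal{S}^h\succ 0$ and its Schur complement $\mathcal{S}^j-(A^h+\underline{B}^h\underline{K}^h)\mathcal{G}^h(\mathcal{G}^h+\mathcal{G}^{h\top}-\mathcal{S}^h)^{-1}\mathcal{G}^{h\top}(A^h+\underline{B}^h\underline{K}^h)^\top\succ 0$ hold. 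The first of these is precisely the standalone condition stated in the lemma, and it also guarantees that the pivot is invertible, so the Schur complement is well defined. Note that this is a purely algebraic equivalence for fixed matrices, hence it carries over verbatim to the quantification over all $\begin{bmatrix} A^h & \underline{B}^h\end{bmatrix}\in\Sigma^h_{AB}$.

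Then I would insert the factorization $A^h+\underline{B}^h\underline{K}^h=\begin{bmatrix} A^h & \underline{B}^h\end{bmatrix}\begin{bmatrix} I \\ \underline{K}^h\end{bmatrix}$ into the Schur complement inequality. Writing $\begin{bmatrix} A^h & \underline{B}^h\end{bmatrix}=\begin{bmatrix} A^{h\top} \\ \underline{B}^{h\top}\end{bmatrix}^\top$ and pulling this factor out to the left and right, the quadratic term collapses to $\begin{bmatrix} A^{h\top} \\ \underline{B}^{h\top}\end{bmatrix}^\top W^h\begin{bmatrix} A^{h\top} \\ \underline{B}^{h\top}\end{bmatrix}$ with the data-independent matrix $W^h=\begin{bmatrix} I \\ \underline{K}^h\end{bmatrix}\mathcal{G}^h(\mathcal{G}^h+\mathcal{G}^{h\top}-\mathcal{S}^h)^{-1}\mathcal{G}^{h\top}\begin{bmatrix} I & \underline{K}^{h\top}\end{bmatrix}$. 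Appending the constant term $\mathcal{S}^j$ through an identity block yields exactly the QMI \eqref{eq:stab_cond_switched_ss_mod} with $M^{hj}=\begin{bmatrix} -W^h & 0 \\ 0 & \mathcal{S}^j\end{bmatrix}$, which is the matrix defined in \eqref{eq:def_Mij}.

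I expect the main obstacle to be bookkeeping rather than anything conceptual: one must track the transposes carefully when recasting the quadratic form in $\begin{bmatrix} A^h & \underline{B}^h\end{bmatrix}$ as a quadratic form in the stacked variable $\begin{bmatrix} A^{h\top} \\ \underline{B}^{h\top} \\ I\end{bmatrix}$, and verify that $W^h$ genuinely does not depend on the uncertain data. One should also be careful to keep the pivot positivity $\mathcal{G}^h+\mathcal{G}^{h\top}-\mathcal{S}^h\succ 0$ as a separate condition, since it is what legitimizes the inverse in $M^{hj}$ and preserves the \emph{if and only if}; absorbing it into the QMI would break the equivalence.
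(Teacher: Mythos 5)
Your proposal is correct and follows essentially the same route as the paper: a Schur complement of \eqref{eq:stab_cond_switched_ss} with respect to the $(1,1)$ block, yielding the pivot condition $\mathcal{G}^h+\mathcal{G}^{h\top}-\mathcal{S}^h\succ 0$ together with $\mathcal{S}^j-(A^h+\underline{B}^h\underline{K}^h)\mathcal{G}^h(\mathcal{G}^h+\mathcal{G}^{h\top}-\mathcal{S}^h)^{-1}\star^\top\succ 0$, followed by pulling $A^h$, $\underline{B}^h$ and $I$ out of the latter; your block form $M^{hj}=\bigl[\begin{smallmatrix}-W^h & 0\\ 0 & \mathcal{S}^j\end{smallmatrix}\bigr]$ matches \eqref{eq:def_Mij} exactly.
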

\begin{figure*}
	\vspace{2pt}
	\renewcommand{\arraystretch}{1.2}
	\begin{equation} \label{eq:def_Mij}
	M^{hj}= \left[
	\begin{array}{c|c}
	M^{hj}_{11} & M^{hj}_{12} \\ \hline
	(M^{hj}_{12})^\top & M^{hj}_{22}
	\end{array} \right]
	\coloneqq
	\left[
	\begin{array}{c|c}
	\begin{matrix} -\mathcal{G}^h (\mathcal{G}^h+\mathcal{G}^{h\top}-\mathcal{S}^h)^{-1} \mathcal{G}^{h\top} & \star \\
	-\underline{K}^h \mathcal{G}^h (\mathcal{G}^h+\mathcal{G}^{h\top}-\mathcal{S}^h)^{-1} \mathcal{G}^{h\top} & -\underline{K}^h \mathcal{G}^h (\mathcal{G}^h+\mathcal{G}^{h\top}-\mathcal{S}^h)^{-1} \mathcal{G}^{h\top} \underline{K}^{h\top}	\end{matrix} & \begin{matrix} \star \\ \star \end{matrix} \\ \hline
	\begin{matrix} 0\hspace{132pt} & \hspace{132pt} 0 \end{matrix} & \mathcal{S}^j \\
	\end{array} \right]
	\end{equation}
	\noindent\makebox[\linewidth]{\rule{\textwidth}{0.4pt}}
\end{figure*}
\begin{proof}
	We apply the Schur complement to \eqref{eq:stab_cond_switched_ss} with respect to the first diagonal element to obtain the equivalent conditions $\mathcal{G}^h+\mathcal{G}^{h\top}-\mathcal{S}^h\succ 0$, $h\in\N_{[1,\overline{h}]}$ and
	\begin{equation*}
	\mathcal{S}^j - (A^h+\underline{B}^h \underline{K}^h)\mathcal{G}^h (\mathcal{G}^h+\mathcal{G}^{h\top}-\mathcal{S}^h)^{-1} \star^\top \succ 0,
	\end{equation*}
	$(h,j)\in \N_{[1,\overline{h}]}^2$. We pull the terms $A^h$, $\underline{B}^h$ and $I$ out of the latter condition to arrive at \eqref{eq:stab_cond_switched_ss_mod}.
\end{proof}

By virtue of Lemma \ref{lem:stab_switched_equiv_charact}, we have now represented the stability condition \eqref{eq:stab_cond_switched_ss} as the QMI \eqref{eq:stab_cond_switched_ss_mod}, which must hold for all $\begin{bmatrix} A^h & \underline{B}^h \end{bmatrix}\in\Sigma^h_{AB}$ in order to conclude robust stability. In order to verify \eqref{eq:stab_cond_switched_ss} for all matrices $\begin{bmatrix} A^h & \underline{B}^h \end{bmatrix}\in\Sigma_{AB}^h$, we now leverage that $\Sigma_{AB}^h\subseteq\tilde{\Sigma}_{AB}^h$ and verify \eqref{eq:stab_cond_switched_ss} for all $\begin{bmatrix} A^h & \underline{B}^h \end{bmatrix}\in\tilde{\Sigma}_{AB}^h$ using the S-procedure \cite[Lemma A.1]{scherer2000robust}. The following result constitutes the main result of this section and allows to verify whether the uncertain system \eqref{eq:system_switched_lifted_uncertain} is robustly stable for a given $\overline{h}$.
\begin{theorem} \label{thm:stab_switched_data_an}
	Suppose Assumptions \ref{ass:Bd} and \ref{ass:comp_wise_disturbance_bound} are satisfied, there exist $\overline{\sigma}_1^2,\tilde{P}^1_{AB}\in\tilde{\boldsymbol{P}}^1_{AB}$ such that \eqref{eq:cond_sing_value_switched_S} is satisfied for $h=1$, and $\tilde{\boldsymbol{P}}^h_{AB}$, $h\in\N_{[1,\overline{h}]}$, are the outputs of Algorithm \ref{algo:disturbance_bound}. Furthermore suppose, given a controller $K$, there exist matrices $\mathcal{S}^h=\mathcal{S}^{h\top}\succ 0\in\R^{n\times n}$, $\mathcal{G}^h\in\R^{n\times n}$, $h\in\N_{[1,\overline{h}]}$, and $\tilde{P}^{hj}_{AB}\in\tilde{\boldsymbol{P}}^h_{AB}$, $(h,j)\in \N_{[1,\overline{h}]}^2$, such that
	\begin{equation} \label{eq:stab_cond_switched_data_an_1}
	\mathcal{G}^h+\mathcal{G}^{h\top}-\mathcal{S}^h \succ 0, \; h \in\N_{[1,\overline{h}]}, \\
	\end{equation}
	and \eqref{eq:stab_cond_switched_data_an_2} are satisfied. Then, the origin of the uncertain switched system \eqref{eq:system_switched_lifted_uncertain} is asymptotically stable for any $\begin{bmatrix} A^h & \underline{B}^h	\end{bmatrix} \in\Sigma^h_{AB}, \; h\in\N_{[1,\overline{h}]}$.
\end{theorem}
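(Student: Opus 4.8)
The plan is to chain together the overapproximation guarantee of Lemma \ref{lem:lifted_disturbance_bound}, the equivalent QMI reformulation of Lemma \ref{lem:stab_switched_equiv_charact}, and a full-block S-procedure argument, so that the data-dependent conditions \eqref{eq:stab_cond_switched_data_an_1}--\eqref{eq:stab_cond_switched_data_an_2} certify the model-based switched-system condition \eqref{eq:stab_cond_switched_ss} for every compatible tuple $\begin{bmatrix} A^h & \underline{B}^h \end{bmatrix}\in\Sigma^h_{AB}$, whence asymptotic stability follows.

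First I would invoke Lemma \ref{lem:lifted_disturbance_bound}: since Assumptions \ref{ass:Bd} and \ref{ass:comp_wise_disturbance_bound} hold and \eqref{eq:cond_sing_value_switched_S} is feasible for $h=1$, the outputs of Algorithm \ref{algo:disturbance_bound} yield valid overapproximations, i.e.\ $\Sigma^h_{AB}\subseteq\tilde{\Sigma}^h_{AB}$ for all $h\in\N_{[1,\overline{h}]}$. Hence it suffices to establish stability for the larger, quadratically parametrized sets $\tilde{\Sigma}^h_{AB}$, each described by the constraint \eqref{eq:cond_consistent_matrices} with a multiplier $\tilde{P}^h_{AB}\in\tilde{\boldsymbol{P}}^h_{AB}$. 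Next I would translate the goal into the QMI language of Lemma \ref{lem:stab_switched_equiv_charact}: condition \eqref{eq:stab_cond_switched_ss} is equivalent to \eqref{eq:stab_cond_switched_data_an_1} together with the QMI \eqref{eq:stab_cond_switched_ss_mod} holding for the matrix $M^{hj}$ of \eqref{eq:def_Mij}.

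The key step is then to enforce this QMI \emph{robustly}, i.e.\ for every $\begin{bmatrix} A^h & \underline{B}^h \end{bmatrix}$ satisfying the data constraint \eqref{eq:cond_consistent_matrices}. I would apply the full-block S-procedure \cite[Lemma A.1]{scherer2000robust}: choosing a multiplier $\tilde{P}^{hj}_{AB}\in\tilde{\boldsymbol{P}}^h_{AB}$ (whose cone structure absorbs the S-procedure scaling), a sufficient condition for the robust QMI \eqref{eq:stab_cond_switched_ss_mod} is $M^{hj}-\tilde{P}^{hj}_{AB}\succ 0$. Because $M^{hj}_{11}$ in \eqref{eq:def_Mij} contains the nonlinear term $(\mathcal{G}^h+\mathcal{G}^{h\top}-\mathcal{S}^h)^{-1}$, I expect \eqref{eq:stab_cond_switched_data_an_2} to be precisely the linear matrix inequality obtained from $M^{hj}-\tilde{P}^{hj}_{AB}\succ 0$ by a Schur complement eliminating this inverse, where \eqref{eq:stab_cond_switched_data_an_1} guarantees $\mathcal{G}^h+\mathcal{G}^{h\top}-\mathcal{S}^h\succ 0$ so that the Schur complement is valid and the equivalence is bidirectional. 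Reversing this Schur complement recovers $M^{hj}-\tilde{P}^{hj}_{AB}\succ 0$, which through the S-procedure yields \eqref{eq:stab_cond_switched_ss_mod} for all matrices in $\tilde{\Sigma}^h_{AB}$.

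Finally, combining the robust QMI with \eqref{eq:stab_cond_switched_data_an_1} and Lemma \ref{lem:stab_switched_equiv_charact} gives the model-based condition \eqref{eq:stab_cond_switched_ss} for all $\begin{bmatrix} A^h & \underline{B}^h \end{bmatrix}\in\tilde{\Sigma}^h_{AB}$, and since $\Sigma^h_{AB}\subseteq\tilde{\Sigma}^h_{AB}$, a fortiori for all $\begin{bmatrix} A^h & \underline{B}^h \end{bmatrix}\in\Sigma^h_{AB}$; asymptotic stability of \eqref{eq:system_switched_lifted_uncertain} then follows from \cite[Theorem 2]{daafouz2002switched}. The main obstacle I anticipate is the S-procedure step: one must verify its applicability—the required feasibility condition is supplied by the multiplier assumption $\begin{bmatrix} I & 0\end{bmatrix}\tilde{P}^h_d\begin{bmatrix} I & 0\end{bmatrix}^\top\prec 0$ carried through \eqref{eq:def_Ph}—and carefully track the Schur-complement bookkeeping that converts the compact inequality $M^{hj}-\tilde{P}^{hj}_{AB}\succ 0$ into the displayed LMI \eqref{eq:stab_cond_switched_data_an_2} without losing equivalence across all index pairs $(h,j)\in\N_{[1,\overline{h}]}^2$.
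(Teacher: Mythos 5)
Your proposal is correct and follows essentially the same route as the paper's proof: Schur complement of \eqref{eq:stab_cond_switched_data_an_2} with respect to the block $\mathcal{G}^h+\mathcal{G}^{h\top}-\mathcal{S}^h$ (positive definite by \eqref{eq:stab_cond_switched_data_an_1}) to recover $M^{hj}-\tilde{P}^{hj}_{AB}\succ0$, the S-procedure of \cite[Lemma A.1]{scherer2000robust} to obtain \eqref{eq:stab_cond_switched_ss_mod} robustly over $\tilde{\Sigma}^h_{AB}\supseteq\Sigma^h_{AB}$, Lemma \ref{lem:stab_switched_equiv_charact} to return to \eqref{eq:stab_cond_switched_ss}, and \cite[Theorem 2]{daafouz2002switched} for stability. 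Your explicit remarks on the S-procedure feasibility condition and on Lemma \ref{lem:lifted_disturbance_bound} supplying the set inclusion are details the paper leaves implicit, but they are consistent with its argument.
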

\begin{figure*}
	\vspace{2pt}
	\renewcommand{\arraystretch}{1.2}
	\begin{equation} \label{eq:stab_cond_switched_data_an_2}
	\begin{bmatrix}
	\mathcal{G}^h+\mathcal{G}^{h\top}-\mathcal{S}^h & \star & \star & \star \\
	\mathcal{G}^h & 0 & \star & \star \\
	\begin{bmatrix} K^\top & \cdots & K^\top \end{bmatrix}^\top \mathcal{G}^h & 0 & 0 & \star \\
	0 & 0 & 0 & \mathcal{S}^j
	\end{bmatrix}
	-\begin{bmatrix} 0 & \star \\ 0 & \tilde{P}^{hj}_{AB} \end{bmatrix} \succ 0, \; (h,j)\in \N_{[1,\overline{h}]}^2
	\end{equation}
	\noindent\makebox[\linewidth]{\rule{\textwidth}{0.4pt}}
\end{figure*}
\begin{proof}
	We apply the Schur complement to \eqref{eq:stab_cond_switched_data_an_2} with respect to its first diagonal element, which yields that \eqref{eq:stab_cond_switched_data_an_2} is equivalent to
	\begin{equation*}
	M^{hj}-\tilde{P}^{hj}_{AB}\succ 0, \; (h,j)\in \N_{[1,\overline{h}]}^2.
	\end{equation*}
	Next, we use \cite[Lemma A.1]{scherer2000robust} for each $(h,j)\in \N_{[1,\overline{h}]}^2$ to conclude that this condition implies that \eqref{eq:stab_cond_switched_ss_mod} holds for all $\begin{bmatrix} A^h & \underline{B}^h	\end{bmatrix}$ which satisfy \eqref{eq:cond_consistent_matrices}, i.e., for all $\begin{bmatrix} A^h & \underline{B}^h	\end{bmatrix}\in\tilde{\Sigma}^h_{AB}$. Since $\Sigma^h_{AB}\subseteq\tilde{\Sigma}^h_{AB}$, \eqref{eq:stab_cond_switched_ss_mod} also holds for all $\begin{bmatrix} A^h & \underline{B}^h	\end{bmatrix} \in \Sigma^h_{AB}$.
	
	By Lemma \ref{lem:stab_switched_equiv_charact}, the fact that  \eqref{eq:stab_cond_switched_ss_mod} and \eqref{eq:stab_cond_switched_data_an_1} hold for all $\begin{bmatrix} A^h & \underline{B}^h	\end{bmatrix} \in \Sigma^h_{AB}$ implies that \eqref{eq:stab_cond_switched_ss} is satisfied for all $\begin{bmatrix} A^h & \underline{B}^h	\end{bmatrix} \in \Sigma^h_{AB}$ as well. From this, we have stability of the uncertain switched system \eqref{eq:system_switched_lifted_uncertain} for all $\begin{bmatrix} A^h & \underline{B}^h	\end{bmatrix} \in\Sigma^h_{AB}, \; h\in\N_{[1,\overline{h}]}$ by \cite[Theorem 2]{daafouz2002switched}.
\end{proof}

To verify stability for a given $\overline{h}$, one has to follow a two-step procedure: First, one has to invoke Algorithm \ref{algo:disturbance_bound} in order to determine parametrizations of the lifted matrices $\begin{bmatrix} A^h & \underline{B}^h	\end{bmatrix}$, and second, one uses these parametrizations in the stability conditions of Theorem \ref{thm:stab_switched_data_an}.

\begin{remark}
	Note that if one increases the tested sampling interval from $\overline{h}_\text{old}$ to $\overline{h}_\text{new}>\overline{h}_\text{old}$, it is not required to re-run Algorithm \ref{algo:disturbance_bound} from $h=1$. Instead, it is sufficient to iterate over the missing steps $h\in\N_{[\overline{h}_\text{old},\overline{h}_\text{new}-1]}$ due to its recursive nature.
\end{remark}

\begin{remark}
	For a given data length $N$, the highest sampling interval that can be checked via Theorem \ref{thm:stab_switched_data_an} is $\overline{h}=N$.
\end{remark}

Conditions \eqref{eq:stab_cond_switched_data_an_1} and \eqref{eq:stab_cond_switched_data_an_2} in Theorem \ref{thm:stab_switched_data_an} allow for a simultaneous search for matrices $\mathcal{S}^h$, $\mathcal{G}^h$ and $\tilde{\boldsymbol{P}}_{AB}^{hj}$, but not for a stabilizing controller $K$. This is due to the fact that we look for a single controller gain $K$ satisfying \eqref{eq:stab_cond_switched_data_an_2} for each $h$, such that standard convexifying variable transformations are not applicable. For the model-based scenario, an alternative was presented in \cite{xiong2007packet_loss}: It was demonstrated that fixing the matrices $\mathcal{G}^1 = \cdots =\mathcal{G}^{\overline{h}} \eqqcolon \mathcal{G}$ indeed enables the design of a non-switched controller, although this might come with an increase of conservatism. We follow a corresponding approach to allow for controller design in the data-driven case as well.

\begin{corollary} \label{cor:stab_switched_data_co}
	Suppose Assumptions \ref{ass:Bd} and \ref{ass:comp_wise_disturbance_bound} are satisfied, there exist $\overline{\sigma}_1^2,\tilde{P}^1_{AB}\in\tilde{\mathcal{P}}^1_{AB}$ such that \eqref{eq:cond_sing_value_switched_S} is satisfied for $h=1$, and $\tilde{\mathcal{P}}^h_{AB}$, $h\in\N_{[1,\overline{h}]}$, are the outputs of Algorithm \ref{algo:disturbance_bound}. Furthermore, suppose there exist matrices $\mathcal{S}^h=\mathcal{S}^{h\top}\succ 0\in\R^{n\times n}$, $h\in\N_{[1,\overline{h}]}$, $\mathcal{G}\in\R^{n\times n}$, $\mathcal{F}\in\R^{m\times n}$ and $\tilde{P}^{hj}_{AB}\in\tilde{\mathcal{P}}^h_{AB}$, $(h,j)\in \N_{[1,\overline{h}]}^2$, such that
	\begin{equation} \label{eq:stab_cond_switched_data_co_1}
	\mathcal{G}+\mathcal{G}^\top-\mathcal{S}^h \succ 0, \; \forall h \in\N_{[1,\overline{h}]},
	\end{equation}
	and \eqref{eq:stab_cond_switched_data_co_2} are satisfied. Then, the controller $\underline{K}^h \coloneqq \begin{bmatrix} K^\top & \cdots & K^\top \end{bmatrix}^\top$ with $K\coloneqq \mathcal{F}\mathcal{G}^{-1}$ renders the origin of the uncertain switched system \eqref{eq:system_switched_lifted_uncertain} asymptotically stable for any $\begin{bmatrix} A^h & \underline{B}^h	\end{bmatrix} \in\Sigma^h_{AB}, \; h\in\N_{[1,\overline{h}]}$.
\end{corollary}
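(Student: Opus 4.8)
The plan is to obtain this synthesis result directly from the analysis conditions of Theorem~\ref{thm:stab_switched_data_an} by specializing the slack variables and performing the standard convexifying substitution $\mathcal{F}:=K\mathcal{G}$. Following the discussion preceding the corollary, I would impose the common choice $\mathcal{G}^1=\cdots=\mathcal{G}^{\overline{h}}=:\mathcal{G}$, so that the bilinear coupling between the controller gain $K$ and the slack matrices $\mathcal{G}^h$ reduces to products against a \emph{single} matrix $\mathcal{G}$, which can then be absorbed into the new decision variable $\mathcal{F}$.

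First I would verify that $\mathcal{G}$ is invertible: condition \eqref{eq:stab_cond_switched_data_co_1} gives $\mathcal{G}+\mathcal{G}^\top-\mathcal{S}^h\succ0$, and since $\mathcal{S}^h\succ0$ this forces $\mathcal{G}+\mathcal{G}^\top\succ0$, so $\mathcal{G}$ has no zero eigenvalue and $K:=\mathcal{F}\mathcal{G}^{-1}$ is well defined. Next I would locate where $K$ enters the analysis condition \eqref{eq:stab_cond_switched_data_an_2}: its only occurrence is in the block $\underline{K}^h\mathcal{G}^h=[\,K^\top\ \cdots\ K^\top\,]^\top\mathcal{G}^h$. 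With the common $\mathcal{G}$, this block equals $[\,(K\mathcal{G})^\top\ \cdots\ (K\mathcal{G})^\top\,]^\top$, which under $\mathcal{F}=K\mathcal{G}$ becomes the \emph{linear} expression $[\,\mathcal{F}^\top\ \cdots\ \mathcal{F}^\top\,]^\top$ appearing in \eqref{eq:stab_cond_switched_data_co_2}. Every remaining entry of \eqref{eq:stab_cond_switched_data_an_1} and \eqref{eq:stab_cond_switched_data_an_2} is independent of $K$, so setting $\mathcal{G}^h=\mathcal{G}$ renders these conditions \emph{identical} to \eqref{eq:stab_cond_switched_data_co_1} and \eqref{eq:stab_cond_switched_data_co_2}, respectively.

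It then follows that whenever \eqref{eq:stab_cond_switched_data_co_1} and \eqref{eq:stab_cond_switched_data_co_2} are feasible, the hypotheses of Theorem~\ref{thm:stab_switched_data_an} hold verbatim for the controller $K=\mathcal{F}\mathcal{G}^{-1}$ together with the slack matrices $\mathcal{G}^h=\mathcal{G}$, $\mathcal{S}^h$ and $\tilde{P}^{hj}_{AB}$. Invoking Theorem~\ref{thm:stab_switched_data_an} for this $K$ yields asymptotic stability of the origin of the uncertain lifted switched system \eqref{eq:system_switched_lifted_uncertain} for all $[\,A^h\ \underline{B}^h\,]\in\Sigma^h_{AB}$, $h\in\N_{[1,\overline{h}]}$, which is precisely the assertion, with the lifted gain $\underline{K}^h=[\,K^\top\ \cdots\ K^\top\,]^\top$ retaining the required stacked structure from \eqref{eq:lifted_system_matrices}.

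The one genuinely nontrivial point is the observation that a \emph{single} substitution $\mathcal{F}=K\mathcal{G}$ linearizes the $K$-dependent block for \emph{every} sampling interval $h$ at once. This succeeds exactly because the lifted gain $\underline{K}^h$ is a stack of identical copies of $K$, so each copy is multiplied by the same $\mathcal{G}$; had the $\mathcal{G}^h$ been permitted to differ across $h$, no single new variable could absorb all the distinct products $K\mathcal{G}^h$ simultaneously, and the synthesis problem would remain non-convex. Enforcing $\mathcal{G}^1=\cdots=\mathcal{G}^{\overline{h}}$ is therefore the price paid for convexity and, as for the model-based counterpart in \cite{xiong2007packet_loss}, may introduce some conservatism relative to the analysis conditions of Theorem~\ref{thm:stab_switched_data_an}.
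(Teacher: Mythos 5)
Your proposal is correct and follows essentially the same route as the paper's proof: establish invertibility of $\mathcal{G}$ from \eqref{eq:stab_cond_switched_data_co_1}, substitute $\mathcal{F}=K\mathcal{G}$ to recover \eqref{eq:stab_cond_switched_data_an_1} and \eqref{eq:stab_cond_switched_data_an_2} with the common choice $\mathcal{G}^h=\mathcal{G}$, and conclude via Theorem \ref{thm:stab_switched_data_an}. The added remarks on why a single substitution linearizes all $h$ simultaneously are a faithful elaboration of the discussion already preceding the corollary, not a different argument.
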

\begin{figure*}
	\vspace{2pt}
	\renewcommand{\arraystretch}{1.2}
	\begin{equation} \label{eq:stab_cond_switched_data_co_2}
	\begin{bmatrix}
	\mathcal{G}+\mathcal{G}^\top-\mathcal{S}^h & \star & \star & \star \\
	\mathcal{G} & 0 & \star & \star \\
	\begin{bmatrix} \mathcal{F}^\top & \cdots & \mathcal{F}^\top \end{bmatrix}^\top & 0 & 0 & \star \\
	0 & 0 & 0 & \mathcal{S}^j
	\end{bmatrix}
	-\begin{bmatrix} 0 & \star \\ 0 & \tilde{P}^{hj}_{AB} \end{bmatrix} \succ 0, \; (h,j)\in \N_{[1,\overline{h}]}^2
	\end{equation}
	\noindent\makebox[\linewidth]{\rule{\textwidth}{0.4pt}}
\end{figure*}
\begin{proof}
	Since $\mathcal{G}+\mathcal{G}^\top-\mathcal{S}^h \succ 0$ and $\mathcal{S}^h\succ 0$, $\mathcal{G}$ is of full rank such that the transformation $K=\mathcal{F}\mathcal{G}^{-1}$ is uniquely defined. We substitute $\mathcal{F}=K\mathcal{G}$ in \eqref{eq:stab_cond_switched_data_co_2} to obtain \eqref{eq:stab_cond_switched_data_an_2}. We then conclude asymptotic stability via Theorem \ref{thm:stab_switched_data_an}.
\end{proof}

As in Section \ref{sec:IO}, we conclude with the fact that since the true matrices are contained in the set of matrices that can explain the data, robust stability of the uncertain lifted switched system \eqref{eq:system_switched_lifted_uncertain} implies stability of the true switched system \eqref{eq:system_switched}, \eqref{eq:system_switched_lifted} and thus also of the aperiodically sampled system \eqref{eq:system_aper_sampled}. A necessary requirement for this was that $\underline{K}^h$ indeed follows the structure \eqref{eq:lifted_system_matrices}, which we achieved in the controller design result Corollary \ref{cor:stab_switched_data_co}.
\begin{corollary} \label{cor:stab_switched_data_true}
	Suppose the conditions of Theorem \ref{thm:stab_switched_data_an} or Corollary \ref{cor:stab_switched_data_co} are fulfilled. Then the controller $K$ renders the origin of the aperiodically sampled system \eqref{eq:system_aper_sampled} asymptotically stable for all $h_k\in\N_{[1,\overline{h}]}$.
\end{corollary}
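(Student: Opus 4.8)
The plan is to chain together the data-driven robust stability result with the established equivalences between the lifted switched system, the switched system at sampling instants, and the full aperiodically sampled system. First, I would observe that the measured trajectory $\{x(t)\}_{t=0}^{N}$, $\{u(t)\}_{t=0}^{N-1}$ was generated by the true system \eqref{eq:system_data_perturbed} under a disturbance sequence satisfying Assumption \ref{ass:comp_wise_disturbance_bound}; consequently, the true lifted matrices $\begin{bmatrix} A^h_\text{tr} & \underline{B}^h_\text{tr} \end{bmatrix}$ are themselves consistent with the data and hence contained in $\Sigma^h_{AB}$ for every $h\in\N_{[1,\overline{h}]}$.

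Next, under the conditions of Theorem \ref{thm:stab_switched_data_an} (respectively Corollary \ref{cor:stab_switched_data_co}), the origin of the uncertain lifted switched system \eqref{eq:system_switched_lifted_uncertain} is asymptotically stable for \emph{all} $\begin{bmatrix} A^h & \underline{B}^h \end{bmatrix}\in\Sigma^h_{AB}$. Since the true matrices lie in this set, I would specialize the robust guarantee to the true system, yielding asymptotic stability of the lifted switched system \eqref{eq:system_switched_lifted} driven by the stacked controller $\underline{K}^h$. It is crucial here that $\underline{K}^h$ follows the structure \eqref{eq:lifted_system_matrices}: in Theorem \ref{thm:stab_switched_data_an} this holds because $\underline{K}^h$ is built from the given $K$, while in Corollary \ref{cor:stab_switched_data_co} it is enforced by the design $K=\mathcal{F}\mathcal{G}^{-1}$ together with the stacked block structure of \eqref{eq:stab_cond_switched_data_co_2}.

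With the stacked structure in place, the lifted switched system \eqref{eq:system_switched_lifted} coincides with the switched system \eqref{eq:system_switched}, so the latter is asymptotically stable for the true dynamics $A^{h_k}_\text{tr}+B^{h_k}_\text{tr}K$ under arbitrary switching $h_k\in\N_{[1,\overline{h}]}$. Finally, I would appeal to the equivalence recorded in Subsection \ref{sec:switched_basic_idea}: since the inter-sampling state of the linear system \eqref{eq:system_aper_sampled} is bounded in the sense of \cite[Definition 2]{nesic1999formulas}, asymptotic stability of \eqref{eq:system_switched} at the sampling instants is equivalent to asymptotic stability of the full aperiodically sampled system \eqref{eq:system_aper_sampled} by \cite[Theorem 2]{nesic1999formulas}, which closes the argument.

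The proof carries no genuine obstacle, as all the heavy lifting resides in Theorem \ref{thm:stab_switched_data_an} and Corollary \ref{cor:stab_switched_data_co}; the only point that requires care is verifying that the lifted controller retains the stacked structure \eqref{eq:lifted_system_matrices}, so that the equivalence between \eqref{eq:system_switched_lifted} and \eqref{eq:system_switched} is legitimate and the inter-sampling behavior may be recovered from the sampled-instant dynamics.
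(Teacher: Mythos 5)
Your proposal is correct and follows the same route as the paper: the true lifted matrices lie in $\Sigma^h_{AB}$, so the robust guarantee of Theorem \ref{thm:stab_switched_data_an} (or Corollary \ref{cor:stab_switched_data_co}) specializes to the true switched system, and the equivalence with the aperiodically sampled system \eqref{eq:system_aper_sampled} follows from the boundedness of the inter-sampling state and \cite[Theorem 2]{nesic1999formulas} as recorded in Subsection \ref{sec:switched_basic_idea}. You also correctly flag the one subtlety the paper itself emphasizes, namely that $\underline{K}^h$ must retain the stacked structure \eqref{eq:lifted_system_matrices} for the lifted and unlifted switched systems to coincide.
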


Theorem \ref{thm:stab_switched_data_an} provides a tool to check stability of the aperiodically sampled system \eqref{eq:system_aper_sampled} given a controller $K$, and is therefore suitable to tackle Problem \ref{prob:arbitrary_sampling_an}. Since Corollary \ref{cor:stab_switched_data_co} permits a co-search for a stabilizing controller, it may be used to address Problem \ref{prob:arbitrary_sampling_co}.

\section{Numerical Analysis} \label{sec:num}

For a numerical evaluation of our approaches, we consider the example from \cite{zhang2001stability}, exactly discretized with a discretization period of $H=\SI{0.1}{\second}$. The system matrices are then
\begin{equation*}
A_\text{tr} = \begin{bmatrix} 1 & 0.0995 \\ 0 & 0.9900 \end{bmatrix}, \quad B_\text{tr} = \begin{bmatrix} 0.0005 \\ 0.0100 \end{bmatrix},
\end{equation*}
where we rounded off after 4 decimal places. Recall that $A_\text{tr},B_\text{tr}$ are \textit{unknown} to our data-driven approaches. We have $N$ state-input measurements $\{x(t)\}_{t=0}^{N},\{u(t)\}_{t=0}^{N-1}$ at each of the time instants available, where the data-generating input is taken uniformly from $u(t)\in[-1,1]$. Let us assume that the measurements are perturbed by a disturbance $\{\hat{d}(t)\}_{t=0}^{N-1}$, where it holds that $\hat{d}(t)\in[-\overline{d},\overline{d}]$ for some $\overline{d}\ge 0$, which implies that Assumption \ref{ass:comp_wise_disturbance_bound} is fulfilled. Throughout this numerical example, we describe this component-wise bound by the means of diagonal multipliers \eqref{eq:diag_multiplier} in Assumption \ref{ass:disturbance_bound}. In addition, suppose it is known that the disturbance acts only on the first state, which we may incorporate by setting $B_d = \begin{bmatrix} 0.01 & 0 \end{bmatrix}^\top$. Note that with this choice of $B_d$, Assumption \ref{ass:Bd} is fulfilled and a certain $\overline{d}$ corresponds to an input-to-noise ratio of approximately $1/\overline{d}$. Furthermore, in all of the tested cases, there existed $\overline{\sigma}_1^2,\tilde{P}^1_{AB}\in\tilde{\mathcal{P}}^1_{AB}$ s.t. \eqref{eq:cond_sing_value_switched_S} was fulfilled for $h=1$, which is a precondition for Theorem \ref{thm:stab_switched_data_an} and Corollary \ref{cor:stab_switched_data_co}. The numerical results were obtained using MatlabR2019b, YALMIP~\cite{YALMIP} and Mosek~\cite{MOSEK15}.

\subsection{Comparison of robust input/output and switched systems approach}

\subsubsection{Comparison of different sizes $N$ of the data set}

\begin{figure}
	\centering
	\begin{tikzpicture}
	
	\begin{semilogxaxis}[%
		width=\columnwidth,
		height=5.3cm,
		xmin=0.0005,
		xmax=1,
		extra x ticks = {0.0005},
		extra x tick labels = {$0$},
		xlabel={Disturbance level $\overline{d}$},
		ymin=0,
		ymax=22,
		ylabel={$\overline{h}_{\text{MSI}}$},
		yminorticks=true,
		axis background/.style={fill=white},
		legend style={at={(0.97,0.97)}, anchor=north east, legend cell align=left, align=left, draw=white!15!black, legend columns=3}
		]
		
		\addplot [color=black,mark=square,,draw=none]
		table[row sep=crcr]{%
			0.0005	12\\
		};
		\addlegendentry{$N=5$}
		
		\addplot [color=black,mark=o,,draw=none]
		table[row sep=crcr]{%
			0.0005	12\\
		};
		\addlegendentry{$N=50$}
		
		\addplot [color=black,mark=star,,draw=none]
		table[row sep=crcr]{%
			0.0005	12\\
		};
		\addlegendentry{$N=500$}
		
		\addplot [color=istorange,mark=square,forget plot]
		table[row sep=crcr]{%
			0.0005	12\\
			0.001	12\\
			0.002	11\\
			0.005	11\\
		};
		
		\addplot [color=istblue,mark=square,forget plot,dashed]
		table[row sep=crcr]{%
			0.0005	12\\
			0.001	12\\
			0.002	12\\
			0.005	11\\
		};
		
		\addplot [color=istblue,mark=o,forget plot,dashed]
		table[row sep=crcr]{%
			0.0005	12\\
			0.001	12\\
			0.002	12\\
			0.005	12\\
			0.01	12\\
			0.02	12\\
			0.05	12\\
			0.1		12\\
			0.2		11\\
			0.5		11\\
		};
		
		\addplot [color=istblue,mark=star,forget plot,dashed]
		table[row sep=crcr]{%
			0.0005	12\\
			0.001	12\\
			0.002	12\\
			0.005	12\\
			0.01	12\\
			0.02	12\\
			0.05	12\\
			0.1		12\\
			0.2		12\\
			0.5		12\\
			1		12\\
		};
		
		\addplot [color=istorange,mark=o,forget plot]
		table[row sep=crcr]{%
			0.0005	12\\
			0.001	12\\
			0.002	12\\
			0.005	12\\
			0.01	12\\
			0.02	12\\
			0.05	12\\
			0.1		12\\
			0.2		11\\
		};
		
		\addplot [color=istorange,mark=star,forget plot]
		table[row sep=crcr]{%
			0.0005	12\\
			0.001	12\\
			0.002	12\\
			0.005	12\\
			0.01	12\\
			0.02	12\\
			0.05	12\\
			0.1		12\\
			0.2		12\\
			0.5		12\\
			1		11\\
		};
		
		\addplot [color=istgreen,mark=square,forget plot]
		table[row sep=crcr]{%
			0.0005	2\\
			0.001	2\\
			0.002	2\\
			0.005	1\\
		};
		
		\addplot [color=istgreen,mark=o,forget plot]
		table[row sep=crcr]{%
			0.0005	17\\
			0.001	17\\
			0.002	16\\
			0.005	16\\
			0.01	15\\
			0.02	15\\
			0.05	12\\
			0.1		6\\
			0.2		1\\
		};
		
		\addplot [color=istgreen,mark=star,forget plot]
		table[row sep=crcr]{%
			0.0005	17\\
			0.001	17\\
			0.002	17\\
			0.005	16\\
			0.01	16\\
			0.02	15\\
			0.05	14\\
			0.1		12\\
			0.2		10\\
			0.5		2\\
			1		1\\
		};

	\end{semilogxaxis}
\end{tikzpicture}
	\caption{MSI bounds with $K =-\begin{bmatrix} 3.75 & 11.5 \end{bmatrix}$, computed via robust input/output (Theorem~\ref{thm:stab_IO_data} \textcolor{istorange}{---}) and switched systems approach (Theorem \ref{thm:stab_switched_data_an} \textcolor{istgreen}{---}), and two-step procedure (set membership estimation \& Theorem \ref{thm:stab_IO_ss} \textcolor{istblue}{- -}), for three different data lengths and various disturbance levels $\overline{d}$. If there is no marker, the stability conditions were not feasible for $\overline{h}=1$.}
	\label{fig:MSI_different_N}
\end{figure}
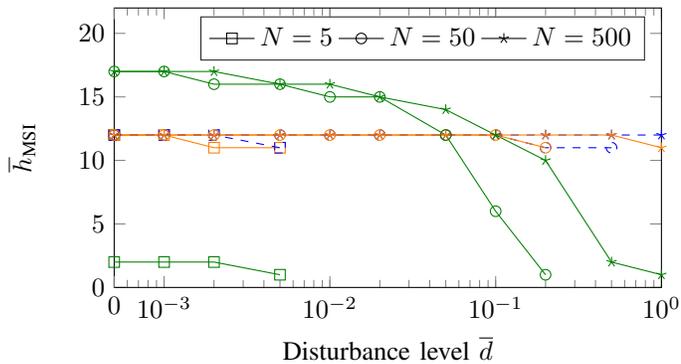

First, let us consider Problem \ref{prob:arbitrary_sampling_an} for the given controller $K =-\begin{bmatrix} 3.75 & 11.5 \end{bmatrix}$, the same that was considered in \cite{fridman2010refined,seuret2018wirtinger,nagshtabrizi2008sampled,mirkin2007some}. For this special scenario, the MSI can be computed exactly if model knowledge is available. It was shown in \cite{nagshtabrizi2008sampled} that it amounts to \SI{1.7}{\second}, which corresponds to $\overline{h}=17$ with our chosen discretization period. We have presented two possibilities to estimate lower bounds on the MSI directly from data, either via Theorem \ref{thm:stab_IO_data} using the robust input/output approach or via Theorem \ref{thm:stab_switched_data_an} using the switched systems approach. Both approaches were tested with three different data lengths $N=5$, $N=50$ and $N=500$, and different disturbance levels to investigate their effect on the MSI bounds. The results can be found in Figure \ref{fig:MSI_different_N}.

As a first observation, we see that the guaranteed MSI bounds decrease with increasing disturbance level for all three data lengths and for both the robust input/output and the switched systems approach. For the robust input/output approach, for all tested data lengths, the MSI estimate amounts to $\overline{h}_{\text{MSI}}=12$ when $\overline{d}=0$. We also computed the MSI using the model-based robust input/output conditions in Theorem \ref{thm:stab_IO_ss} and obtained the same value, i.e., $\overline{h}_\text{MSI}=12$. With increasing disturbance level $\overline{d}$, initially, the MSI estimate barely decreases.  However, at a certain $\overline{d}$ , the stability conditions become infeasible, i.e., we do not find a solution even for $\overline{h}=1$. We observe that for larger data sets, the minimal disturbance level for which we have infeasibility increases.

In contrast to the robust input/output approach, with the switched systems approach we observe that the MSI estimates improve significantly if more data are available. While the estimates are lower than those of the robust input/output approach for $N=5$, they are significantly higher if $N=50$ or $N=500$ and the disturbance level is small. In some cases, even the true MSI of 17 is recovered despite the data-driven setup and non-zero noise.

When estimating the MSI from data, there are in general two main sources of conservatism: The first one being that of the data-driven parametrization, and the second one being that of the (model-based) stability conditions. In the robust input/output approach, the conservatism of the data-driven parametrization in this example seems to be rather low, since the estimated MSIs are higher than for the switched systems approach especially when the data set is small and the disturbance level is high. On the other hand, the conservatism induced by comprehending the delay operator as a bounded disturbance seems to be quite high in this example. Even if the model was known exactly we find $\overline{h}_\text{MSI}=12$, compared to the true MSI of 17. As discussed in Section \ref{sec:switched}, the data-driven parametrizations of the lifted matrices in the switched systems approach introduce conservatism since they rely on an overapproximation of the set of compatible matrices. As a result, the latter is outperformed by the robust input/output approach if the introduced conservatism is considerable, e.g., when the noise level is high and/or few data are available. The model-based switched stability conditions, in contrast, seem to come with little conservatism compared to the ones from the robust input/output approach in this example. As a result, the MSI bounds with the switched systems approach are very tight and are much improved over the ones from the robust input/output approach if there is little noise and sufficiently informative data available.

\subsubsection{Controller design}

\begin{figure}
	\centering
	\begin{tikzpicture}
	
	\begin{semilogxaxis}[%
		width=\columnwidth,
		height=4cm,
		xmin=0.0005,
		xmax=1,
		extra x ticks = {0.0005},
		extra x tick labels = {$0$},
		xlabel={Disturbance level $\overline{d}$},
		ymin=0,
		ymax=65,
		ylabel={$\overline{h}_{\text{MSI}}$},
		yminorticks=true,
		axis background/.style={fill=white},
		]

		\addplot [color=istorange,mark=star,forget plot]
		table[row sep=crcr]{%
			0.0005	62\\
			0.001	61\\
			0.002	54\\
			0.005	36\\
			0.01	29\\
			0.02	23\\
			0.05	15\\
			0.1		11\\
			0.2		6\\
			0.5		2\\
		};
		
		\addplot [color=istgreen,mark=star,forget plot]
		table[row sep=crcr]{%
			0.0005	24\\
			0.001	24\\
			0.002	23\\
			0.005	22\\
			0.01	20\\
			0.02	18\\
			0.05	15\\
			0.1		9\\
			0.2		4\\
			0.5		1\\
		};

	\end{semilogxaxis}
\end{tikzpicture}
	\caption{MSI bounds with controller $K$ computed via robust input/output (Corollary~\ref{cor:stab_IO_data} \textcolor{istorange}{---}) and switched systems approach (Corollary~\ref{cor:stab_switched_data_co} \textcolor{istgreen}{---}), for $N=50$ and various disturbance levels $\overline{d}$.}
	\label{fig:MSI_K_free}
\end{figure}
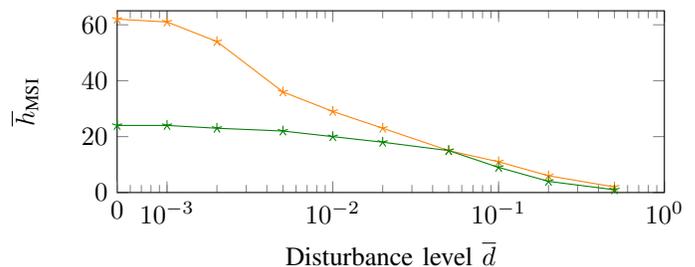

Second, we turn our attention to Problem \ref{prob:arbitrary_sampling_co}, where the aim is to optimize for a controller that gives a possibly high MSI bound. In the robust input/output approach, we can do so by virtue of Corollary \ref{cor:stab_IO_data}, while in the switched systems approach, we may use Corollary \ref{cor:stab_switched_data_co} for this task. The corresponding MSI bounds with $N=50$ and different $\overline{d}$ are presented in Figure \ref{fig:MSI_K_free}.

We observe that for both approaches, the computed MSI bounds with $K$ as an optimization variable are increased compared to $K =-\begin{bmatrix} 3.75 & 11.5 \end{bmatrix}$, especially so for the robust input/output approach. Now, the robust input/output approach yields larger MSI estimates than the switched systems approach for any of the considered disturbance levels.

\subsubsection{Comparison of complexity}

\begin{table}
	\caption{Number of decision variables and constraints in the robust input/output (Theorems \ref{thm:stab_IO_ss} and \ref{thm:stab_IO_data}) and switched systems approach (\cite[Theorem 2]{daafouz2002switched} and Theorem \ref{thm:stab_switched_data_an}), and two-step procedure (set membership estimation \& Theorem \ref{thm:stab_IO_ss}).}
	\begin{tabular}{m{2.6cm}|cc}
		& \# decision variables & \# constraints \\ \hline
		Theorem \ref{thm:stab_IO_ss} & $2n^2$ & $3$ \\
		Theorem \ref{thm:stab_IO_data} & $2n^2 + c_d(N)$ & $3 + c_d(N)$ \\
		\cite[Theorem 2]{daafouz2002switched} & $2\overline{h} n^2$ & $2\overline{h} + \overline{h}^2$ \\
		Algorithm \ref{algo:disturbance_bound} & $\overline{h}\times(1 + c_d(N))$ & $\overline{h}\times(2 + c_d(N))$ \\
		Theorem \ref{thm:stab_switched_data_an} & $2\overline{h} n^2 + \overline{h}^2 c_d(N)$ & $2\overline{h}+\overline{h}^2 + \overline{h}^2c_d(N)$ \\
		Set membership estimation \& Theorem \ref{thm:stab_IO_ss} & $2n^2$ & $2+2^{n(n+m)}$ \\
	\end{tabular}
	\label{tab:complexity}
\end{table}

\begin{figure}
	\centering
	\begin{tikzpicture}
	\begin{axis}[
		width=\columnwidth,
		height=5cm,
		ylabel={Computation time in \SI{}{\second}},
		xlabel = {$\overline{h}$},
		enlargelimits=0.15,
		legend style={at={(0.03,0.97)},
			anchor=north west},
		ybar stacked,
		bar width=7pt,
		]
		\addplot[draw=istorange,fill=istorange!30!white]
		coordinates {(1.6,0.3516) (2.4,0)
			(3.6,0.3444) (4.4,0) (5.6,0.3362) (6.4,0) (7.6,0.3527) (8.4,0) (9.6,0.3450) (10.4,0) (11.6,0.3613) (12.4,0)};
		
		\addplot[draw=istred,fill=istred!30!white]
		coordinates {(1.6,0) (2.4,0.58) (3.6,0) (4.4,1.02) (5.6,0)
			(6.4,1.48) (7.6,0) (8.4,1.94) (9.6,0) (10.4,2.63) (11.6,0) (12.4,3.02)};
		\addplot[draw=istgreen,fill=istgreen!30!white]
		coordinates {(1.6,0) (2.4,0.19) (3.6,0) (4.4,0.21) (5.6,0)
			(6.4,0.29) (7.6,0) (8.4,0.42) (9.6,0) (10.4,0.71) (11.6,0) (12.4,1.05)};
	
		\legend{Theorem \ref{thm:stab_IO_data},Algorithm \ref{algo:disturbance_bound},Theorem \ref{thm:stab_switched_data_an}}
	\end{axis}
\end{tikzpicture}
	\caption{Computation times to check the stability conditions in for the robust input/output (Theorem~\ref{thm:stab_IO_data}) and switched systems approach (Algorithm~\ref{algo:disturbance_bound} and Theorem~\ref{thm:stab_switched_data_an}), for $N=50$, disturbance level $\overline{d}=0.01$ and different $\overline{h}$.}
	\label{fig:computation_time}
\end{figure}
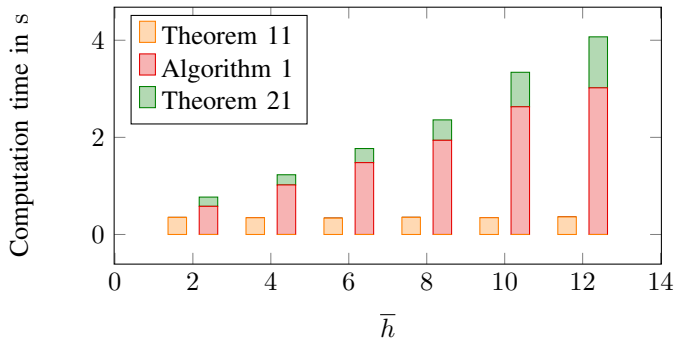

Here, we will briefly compare the computational complexity of the robust input/output and switched systems approach. Table \ref{tab:complexity} gives an overview over the number of decision variables and the number of constraints involved in the respective stability conditions for analysis, i.e., Theorem \ref{thm:stab_IO_ss} and \cite[Theorem 2]{daafouz2002switched} in the model-based case, and Theorems \ref{thm:stab_IO_data} and \ref{thm:stab_switched_data_an}, and Algorithm \ref{algo:disturbance_bound} in the data-driven case. In Figure \ref{fig:computation_time}, the computation times required to check the stability conditions in Theorems \ref{thm:stab_IO_data} and \ref{thm:stab_switched_data_an}, and to run Algorithm \ref{algo:disturbance_bound} are depicted for various $\overline{h}$. All experiments were run on an Intel Core i7-105110U with 1.80 GHz.

Comparing first the number of decision variables and constraints in the data-driven stability conditions of the robust input/output and switched systems approach in Table \ref{tab:complexity}, we note that they are independent of $\overline{h}$ for the former, whereas they grow quadratically with $\overline{h}$ for the latter. In addition, one needs to run Algorithm \ref{algo:disturbance_bound} in the switched systems approach, which requires solving $\overline{h}$ SDPs whose complexity is independent of $\overline{h}$. This is also reflected in the required computation times in Figure \ref{fig:computation_time}: They are independent of $\overline{h}$ for the robust input/output approach and amount to approximately \SI{0.35}{\second} for all experiments. By contrast, we can observe that the required time for the switched systems approach increases with $\overline{h}$.

Second, we compare the complexity of the data-driven stability conditions Theorems \ref{thm:stab_IO_data} and \ref{thm:stab_switched_data_an} with their model-based counterparts Theorem \ref{thm:stab_IO_ss} and \cite[Theorem 2]{daafouz2002switched}. We note that the number of decision variables and constraints is the same for the data-driven conditions as for the model-based ones, with the exception that the former additionally include the number of decision variables $c_d(N)$ involved in the chosen disturbance multiplier. As a result, the increase in complexity with increasing system dimension is the same for the data-driven and model-based conditions.

Third, we discuss the influence of data length on complexity and performance of the proposed data-driven approaches. To this end, assume that the disturbance satisfies a component-wise bound as in Assumption \ref{ass:comp_wise_disturbance_bound}. Then, one could choose either a quadratic or a diagonal disturbance multiplier (see Remark \ref{rem:disturbance_bound}). With a quadratic multiplier, one has $c_d(N)=1$ such that the complexity of the proposed data-driven approaches is independent of the data length. On the other hand, there is no guarantee that more data will shrink the set of compatible matrices and thereby lead to a larger MSI estimate. Quite the contrary, in \cite{berberich2020combining,bisoffi2021trade} it was shown that adding more data can even decrease performance in this case. In contrast, a diagonal multiplier provides the guarantee that more data will never decrease performance \cite{berberich2020combining,bisoffi2021trade}. However, this comes at the price that complexity of the stability conditions now increases linearly with the data length as $c_d(N)=N$.

\subsection{Comparison with alternative methods}

\subsubsection{Comparison with \cite{rueda2021delay}}

As discussed in the introduction, the results of \cite{rueda2021delay} on data-driven control of time-delay systems permit an analysis of aperiodically sampled systems as well. The prerequisites of \cite[Theorem 4.2]{rueda2021delay} were fulfilled in the noise-free case $\overline{d}=0$ if $N\ge 5$. For any such data length, this result yielded an MSI estimate of $\overline{h}_\text{MSI}=4$ with the controller $K =-\begin{bmatrix} 3.75 & 11.5 \end{bmatrix}$.

In most cases, the data-driven methods proposed in this paper yield a considerably better estimate of the MSI. This might be due to the fact that since the latter work considers arbitrary bounded delays, it comes with a certain degree of conservatism when applied to aperiodically sampled systems (cf. Subsections \ref{sec:IO_setup} and \ref{sec:IO_delay_operator}).

\subsubsection{Comparison with two-step procedure}

To leverage data for an estimation of the MSI, an alternative to the approaches presented in this paper is to perform a two-step procedure of system identification and subsequently checking the model-based stability conditions in \cite{seuret2018wirtinger,hetel2011discrete,xiong2007packet_loss,yu2004dropout} or in Theorem \ref{thm:stab_IO_ss}. In this subsection, we will briefly compare this two-step procedure to our data-driven approaches numerically. We will consider set membership estimation \cite{milanese1991optimal,belforte1990parameter} for the system identification step, since it delivers guaranteed error bounds despite the presence of disturbances. In particular, we follow a straightforward approach and estimate a polytope in which the true matrices are guaranteed to lie, subsequently overbound this polytope by a hypercube and check the stability conditions in Theorem \ref{thm:stab_IO_ss} with $K =-\begin{bmatrix} 3.75 & 11.5 \end{bmatrix}$ for each of its vertices. Note that this procedure provides a guaranteed lower bound on the MSI. The results of this method can be found in Figure \ref{fig:MSI_different_N}, next to the results of the proposed data-driven method based on the robust-input output approach, namely Theorem \ref{thm:stab_IO_data}. The number of involved decision variables and constraints for both options are summarized in Table \ref{tab:complexity}.

It can be recognized that, for a small set of noise levels, the two-step procedure gives slightly better results than the proposed data-driven approach. However, the number of involved constraints grows exponentially with the square of the system dimension. This is due to the fact that one constraint is added for each vertex of the matrix hypercube, whose dimension is equal to the number of coefficients in the matrices $\begin{bmatrix} A & B \end{bmatrix}$, namely $n(n+m)$.  For this reason, the stability conditions in this two-step procedure become practically intractable even for modest system dimensions. In contrast, and as already recognized above, the proposed data-driven approaches grow on the same order as their model-based counterparts, namely quadratically with the system dimension.

\section{Summary and Outlook} \label{sec:summary}

In this article, we approached a problem at the intersection of sampled-data control and data-driven control: We developed tools to lower bound the maximum sampling interval of a system under aperiodic sampling, requiring no model knowledge and using only a measured trajectory. Thereby, we considered both analysis and controller design directly from the available data, which may be of finite length and subject to noise. In particular, we presented two distinct approaches to achieve these goals, the first taking a robust control perspective and the second a switched systems perspective on the aperiodically sampled system. The former, dubbed robust input/output approach, comes with a lower computational complexity and is able to produce decent estimates of the maximum sampling interval even if the data set is small and the noise level is high. The latter, dubbed switched systems approach, can yield very tight estimations of the MSI especially when the data set is sufficiently large and there is little noise. The validity of both approaches was illustrated with a numerical example.

In the continuous-time formulation of the robust input/output approach, a passivity-like property of the delay operator was established in addition to a bound on the $\ell_2$ gain. Future work could investigate this topic in the discrete-time case as well, since it was shown in \cite{fujioka2009IQC} that incorporating this property might lead to greatly improved estimates of the MSI. In addition, it might be possible to improve the $\ell_2$ gain estimate of the delay operator even further. In the switched systems approach, future work could focus on whether and how less conservative parametrizations of the lifted matrices could be obtained. A starting point might be to construct tighter overapproximations of the lifted disturbance, e.g., by trying to incorporate information on its structure (cf. Equation \eqref{eq:lifted_disturbance}). A further line of future research could investigate how to incorporate a performance objective into the switched systems approach. Finally, future work could extend data-driven analysis of the MSI to more general system classes, e.g., polynomial systems \cite{martin2021dissipativity,guo2021polynomial} or general nonlinear systems \cite{martin2021datadriven}.

Since the importance of data as well as of cyber-physical, embedded and networked control systems continues to grow, combining concepts from data-driven control and sampled-data control is a highly relevant research direction. The data-driven analysis of aperiodically sampled systems, as presented in this work, may contribute to this emerging field by providing a novel approach to model and analyze a great variety of problems at the intersection of data-driven and sampled-data control, such as learning event-triggered control \cite{sedghi2020machine}, learning unknown channel conditions \cite{gatsis2018sample}, or data-driven network access scheduling \cite{leong2020deep}.

\bibliographystyle{IEEEtran}   
\bibliography{Literature}

\begin{thebibliography}{10}
\providecommand{\url}[1]{#1}
\csname url@rmstyle\endcsname
\providecommand{\newblock}{\relax}
\providecommand{\bibinfo}[2]{#2}
\providecommand\BIBentrySTDinterwordspacing{\spaceskip=0pt\relax}
\providecommand\BIBentryALTinterwordstretchfactor{4}
\providecommand\BIBentryALTinterwordspacing{\spaceskip=\fontdimen2\font plus
\BIBentryALTinterwordstretchfactor\fontdimen3\font minus
  \fontdimen4\font\relax}
\providecommand\BIBforeignlanguage[2]{{%
\expandafter\ifx\csname l@#1\endcsname\relax
\typeout{** WARNING: IEEEtran.bst: No hyphenation pattern has been}%
\typeout{** loaded for the language `#1'. Using the pattern for}%
\typeout{** the default language instead.}%
\else
\language=\csname l@#1\endcsname
\fi
#2}}

\bibitem{chen1995optimal}
T.~Chen and B.~A. Francis, \emph{Optimal sampled-data control systems}.\hskip
  1em plus 0.5em minus 0.4em\relax Springer Science \& Business Media, 1995.

\bibitem{hetel2017recent}
L.~Hetel, C.~Fiter, H.~Omran, A.~Seuret, E.~Fridman, J.-P. Richard, and S.~I.
  Niculescu, ``Recent developments on the stability of systems with aperiodic
  sampling: an overview,'' \emph{Automatica}, vol.~76, pp. 309--335, 2017.

\bibitem{fridman2004robust}
E.~Fridman, A.~Seuret, and J.-P. Richard, ``Robust sampled-data stabilization
  of linear systems: an input delay approach,'' \emph{Automatica}, vol.~40, pp.
  1441--1446, 2004.

\bibitem{fridman2010refined}
E.~Fridman, ``A refined input delay approach to sampled-data control,''
  \emph{Automatica}, vol.~46, no.~2, pp. 421--427, 2010.

\bibitem{seuret2018wirtinger}
A.~Seuret and E.~Fridman, ``Wirtinger-like {L}yapunov--{K}rasovskii functionals
  for discrete-time delay systems,'' \emph{IMA Journal of Mathematical Control
  and Information}, vol.~35, no.~3, pp. 861--876, 2018.

\bibitem{nagshtabrizi2008sampled}
P.~Naghshtabrizi, J.~P. Hespanha, and A.~R. Teel, ``Exponential stability of
  impulsive systems with application to uncertain sampled-data systems,''
  \emph{Systems \& Control Letters}, vol.~57, no.~5, pp. 378--385, 2008.

\bibitem{carnevale2007lyapunov}
D.~Carnevale, A.~R. Teel, and D.~Nesic, ``A {L}yapunov proof of an improved
  maximum allowable transfer interval for networked control systems,''
  \emph{IEEE Trans. Automatic Control}, vol.~52, no.~5, pp. 892--897, 2007.

\bibitem{hetel2011discrete}
L.~Hetel, A.~Kruszewski, W.~Perruquetti, and J.-P. Richard, ``Discrete and
  intersample analysis of systems with aperiodic sampling,'' \emph{IEEE Trans.
  Automatic Control}, vol.~56, no.~7, pp. 1696--1701, 2011.

\bibitem{xiong2007packet_loss}
J.~Xiong and J.~Lam, ``Stabilization of linear systems over networks with
  bounded packet loss,'' \emph{Automatica}, vol.~43, pp. 80--87, 01 2007.

\bibitem{yu2004dropout}
M.~Yu, L.~Wang, T.~Chu, and G.~Xie, ``Stabilization of networked control
  systems with data packet dropout and network delays via switching system
  approach,'' in \emph{Proc. Conf. Decision and Control}, vol.~4, 2004, pp.
  3539--3544.

\bibitem{mirkin2007some}
L.~Mirkin, ``Some remarks on the use of time-varying delay to model
  sample-and-hold circuits,'' \emph{IEEE Trans. Automatic Control}, vol.~52,
  no.~6, pp. 1109--1112, 2007.

\bibitem{fujioka2009IQC}
H.~Fujioka, ``Stability analysis of systems with aperiodic sample-and-hold
  devices,'' \emph{Automatica}, vol.~45, no.~3, pp. 771 -- 775, 2009.

\bibitem{ljung1987system}
L.~Ljung, \emph{System Identification: Theory for the User}.\hskip 1em plus
  0.5em minus 0.4em\relax Prentice-Hall, Englewood Cliffs, NJ, 1987.

\bibitem{hou2013model}
Z.-S. Hou and Z.~Wang, ``From model-based control to data-driven control:
  Survey, classification and perspective,'' \emph{Information Sciences}, vol.
  235, pp. 3--35, 2013, 

\bibitem{willems2005note}
J.~C. Willems, P.~Rapisarda, I.~Markovsky, and B.~{De Moor}, ``A note on
  persistency of excitation,'' \emph{Systems \& Control Letters}, vol.~54, pp.
  325--329, 2005.

\bibitem{maupong2017lyapunov}
T.~M. Maupong, J.~C. Mayo-Maldonado, and P.~Rapisarda, ``On {Lyapunov}
  functions and data-driven dissipativity,'' \emph{IFAC-PapersOnLine}, vol.~50,
  no.~1, pp. 7783--7788, 2017.

\bibitem{koch2021provably}
A.~Koch, J.~Berberich, and F.~Allgower, ``Provably robust verification of
  dissipativity properties from data,'' \emph{IEEE Trans. Automatic Control},
  2021.

\bibitem{coulson2019deepc}
J.~Coulson, J.~Lygeros, and F.~D{\"o}rfler, ``Data-enabled predictive control:
  in the shallows of the {DeePC},'' in \emph{Proc. European Control
  Conference}, 2019, pp. 307--312.

\bibitem{berberich2021guarantees}
J.~Berberich, J.~K{\"o}hler, M.~A. M{\"u}ller, and F.~Allg{\"o}wer,
  ``Data-driven model predictive control with stability and robustness
  guarantees,'' \emph{IEEE Trans. Automatic Control}, vol.~66, no.~4, pp.
  1702--1717, 2021.

\bibitem{persis2020formulas}
C.~{De Persis} and P.~Tesi, ``Formulas for data-driven control: Stabilization,
  optimality and robustness,'' \emph{IEEE Trans. Automatic Control}, vol.~65,
  no.~3, pp. 909--924, 2020.

\bibitem{berberich2020design}
J.~Berberich, A.~Koch, C.~W. Scherer, and F.~Allg{\"o}wer, ``Robust data-driven
  state-feedback design,'' in \emph{Proc. American Control Conference}, 2020,
  pp. 1532--1538.

\bibitem{waarde2020from}
H.~van Waarde, M.~K. Camlibel, and M.~Mesbahi, ``From noisy data to feedback
  controllers: non-conservative design via a matrix {S}-lemma,'' \emph{IEEE
  Trans. Automatic Control}, vol.~67, no.~1, pp. 162--175, 2022.

\bibitem{berberich2020combining}
J.~Berberich, C.~W. Scherer, and F.~Allg{\"o}wer, ``Combining prior knowledge
  and data for robust controller design,'' \emph{arXiv:2009.05253}, 2020.

\bibitem{berberich2021aper_samp}
J.~Berberich, S.~Wildhagen, M.~Hertneck, and F.~Allg{\"o}wer, ``Data-driven
  analysis and control of continuous-time systems under aperiodic sampling,''
  in \emph{Proc. IFAC Symp. System Identification}, 2021, pp. 210--215.

\bibitem{rueda2021delay}
J.~G. Rueda-Escobedo, E.~Fridman, and J.~Schiffer, ``Data-driven control for
  linear discrete-time delay systems,'' \emph{IEEE Trans. Automatic Control},
  2021.

\bibitem{seuret2012novel}
A.~Seuret, ``A novel stability analysis of linear systems under asynchronous
  samplings,'' \emph{Automatica}, vol.~48, pp. 177--182, 2012.

\bibitem{matni2019self}
N.~Matni, A.~Proutiere, A.~Rantzer, and S.~Tu, ``From self-tuning regulators to
  reinforcement learning and back again,'' in \emph{Proc. Conf. Decision and
  Control}, 2019, pp. 3724--3740.

\bibitem{milanese1991optimal}
M.~Milanese and A.~Vicino, ``Optimal estimation theory for dynamic systems with
  set membership uncertainty: an overview,'' \emph{Automatica}, vol.~27, no.~6,
  pp. 997--1009, 1991.

\bibitem{belforte1990parameter}
G.~Belforte, B.~Bona, and V.~Cerone, ``Parameter estimation algorithms for a
  set-membership description of uncertainty,'' \emph{Automatica}, vol.~26,
  no.~5, pp. 887--898, 1990.

\bibitem{hespanha2007survey}
J.~P. Hespanha, P.~Naghshtabrizi, and Y.~Xu, ``A survey of recent results in
  networked control systems,'' \emph{Proc. IEEE}, vol.~95, no.~1, pp. 138--162,
  2007.

\bibitem{zhang2016survey}
X.~{Zhang}, Q.~{Han}, and X.~{Yu}, ``Survey on recent advances in networked
  control systems,'' \emph{IEEE Trans. Industrial Informatics}, vol.~12, no.~5,
  pp. 1740--1752, 2016.

\bibitem{heemels2012introduction}
W.~P. M.~H. {Heemels}, K.~H. {Johansson}, and P.~{Tabuada}, ``An introduction
  to event-triggered and self-triggered control,'' in \emph{Proc. Conf.
  Decision and Control}, 2012, pp. 3270--3285.

\bibitem{heemels2012periodic}
W.~P. M.~H. Heemels, M.~C.~F. Donkers, and A.~R. Teel, ``Periodic
  event-triggered control for linear systems,'' \emph{IEEE Trans. Automatic
  Control}, vol.~58, no.~4, pp. 847--861, 2012.

\bibitem{postoyan2019inter}
R.~Postoyan, R.~G. Sanfelice, and W.~Heemels, ``Inter-event times analysis for
  planar linear event-triggered controlled systems,'' in \emph{Proc. Conf.
  Decision and Control}, 2019, pp. 1662--1667.

\bibitem{gleizer2021bisimulation}
G.~Gleizer and M.~Mazo, ``Towards traffic bisimulation of linear periodic
  event-triggered controllers,'' \emph{IEEE Control Systems Letters}, vol.~5,
  no.~1, pp. 25--30, 2021.

\bibitem{bisoffi2021trade}
A.~Bisoffi, C.~De~Persis, and P.~Tesi, ``Trade-offs in learning controllers
  from noisy data,'' \emph{Systems \& Control Letters}, vol. 154, p. 104985,
  2021.

\bibitem{knuth1997art}
D.~E. Knuth, \emph{The art of computer programming}.\hskip 1em plus 0.5em minus
  0.4em\relax Pearson Education, 1997, vol.~3.

\bibitem{bentley1976almost}
J.~L. Bentley and A.~C.-C. Yao, ``An almost optimal algorithm for unbounded
  searching,'' \emph{Information Processing Letters}, vol.~5, no.~3, pp.
  82--87, 1976.

\bibitem{kao2004stability_delay}
C.-Y. Kao and B.~Lincoln, ``Simple stability criteria for systems with
  time-varying delays,'' \emph{Automatica}, vol.~40, no.~8, pp. 1429--1434,
  2004.

\bibitem{kao2012discrete_delayed_IQC}
C.~{Kao}, ``On stability of discrete-time {LTI} systems with varying time
  delays,'' \emph{IEEE Trans. Automatic Control}, vol.~57, no.~5, pp.
  1243--1248, 2012.

\bibitem{megretski1997system}
A.~Megretski and A.~Rantzer, ``System analysis via {Integral Quadratic
  Constraints},'' \emph{IEEE Trans. Automatic Control}, vol.~42, no.~6, pp.
  819--830, 1997.

\bibitem{veenman2016IQC}
J.~Veenman, C.~W. Scherer, and H.~Köroğlu, ``Robust stability and performance
  analysis based on integral quadratic constraints,'' \emph{European Journal of
  Control}, vol.~31, pp. 1--32, 2016.

\bibitem{hu2017discreteIQC}
B.~Hu, M.~J. Lacerda, and P.~Seiler, ``Robustness analysis of uncertain
  discrete-time systems with dissipation inequalities and integral quadratic
  constraints,'' \emph{Intl. Journal of Robust and Nonlinear Control}, vol.~27,
  no.~11, pp. 1940--1962, 2017.

\bibitem{scherer2021dissipativity}
C.~Scherer, ``Dissipativity and integral quadratic constraints: Tailored
  computational robustness tests for complex interconnections,'' \emph{IEEE
  Control Systems Magazine}, vol.~42, no.~3, pp. 115--139, 2022.

\bibitem{scherer2000linear}
C.~Scherer and S.~Weiland, \emph{Linear Matrix Inequalities in Control},
  3rd~ed.\hskip 1em plus 0.5em minus 0.4em\relax New York: Springer-Verlag,
  2000.

\bibitem{kottenstette2014relationships}
N.~Kottenstette, M.~J. McCourt, M.~Xia, V.~Gupta, and P.~J. Antsaklis, ``On
  relationships among passivity, positive realness, and dissipativity in linear
  systems,'' \emph{Automatica}, vol.~50, no.~4, pp. 1003--1016, 2014.

\bibitem{zames1968stability}
G.~Zames and P.~L. Falb, ``Stability conditions for systems with monotone and
  slope-restricted nonlinearities,'' \emph{SIAM Journal on Control}, vol.~6,
  no.~1, pp. 89--108, 1968.

\bibitem{scherer2000robust}
C.~Scherer, ``Robust mixed control and linear parameter-varying control with
  full-block scalings,'' in \emph{Advances in Linear Matrix Inequality Methods
  in Control}.\hskip 1em plus 0.5em minus 0.4em\relax SIAM: Philadelphia, 2000,
  pp. 187--207.

\bibitem{nesic1999formulas}
D.~Ne{\v{s}}i{\'c}, A.~R. Teel, and E.~D. Sontag, ``Formulas relating
  $\mathcal{KL}$ stability estimates of discrete-time and sampled-data
  nonlinear systems,'' \emph{Systems \& Control Letters}, vol.~38, no.~1, pp.
  49--60, 1999.

\bibitem{liberzon1999basic}
D.~Liberzon and A.~S. Morse, ``Basic problems in stability and design of
  switched systems,'' \emph{IEEE Control Systems Magazine}, vol.~19, no.~5, pp.
  59--70, 1999.

\bibitem{daafouz2002switched}
J.~{Daafouz}, P.~{Riedinger}, and C.~{Iung}, ``Stability analysis and control
  synthesis for switched systems: a switched {L}yapunov function approach,''
  \emph{IEEE Trans. Automatic Control}, vol.~47, no.~11, pp. 1883--1887, 2002.

\bibitem{liberzon2003switched_book}
D.~Liberzon, \emph{Switching in Systems and Control}.\hskip 1em plus 0.5em
  minus 0.4em\relax Birkhäuser, Boston, MA, 2003.

\bibitem{xin2021self}
X.~Wang, J.~Berberich, J.~Sun, G.~Wang, F.~Allg\"{o}wer, and J.~Chen,
  ``Data-driven control of event- and self-triggered discrete-time systems,''
  \emph{arXiv 2202.08019}, 2022.

\bibitem{zhang2001stability}
W.~Zhang, M.~Branicky, and S.~Phillips, ``Stability of networked control
  systems,'' \emph{IEEE Control Systems Magazine}, vol.~21, pp. 84--99, 2001.

\bibitem{YALMIP}
J.~L{\"{o}}fberg, ``Yalmip : A toolbox for modeling and optimization in
  {MATLAB},'' in \emph{Proc. CACSD Conference}, 2004.

\bibitem{MOSEK15}
{MOSEK ApS}. (2015) The {MOSEK} optimization toolbox for {MATLAB} manual,
  version 7.1 (revision 28).

\bibitem{martin2021dissipativity}
T.~Martin and F.~Allgöwer, ``Dissipativity verification with guarantees for
  polynomial systems from noisy input-state data,'' \emph{IEEE Control Systems
  Letters}, vol.~5, no.~4, pp. 1399--1404, 2021.

\bibitem{guo2021polynomial}
M.~Guo, C.~De~Persis, and P.~Tesi, ``Data-driven stabilization of nonlinear
  polynomial systems with noisy data,'' \emph{IEEE Trans. Automatic Control},
  2021.

\bibitem{martin2021datadriven}
T.~Martin and F.~Allgöwer, ``Data-driven system analysis of nonlinear systems
  using polynomial approximation,'' \emph{arXiv 2108.11298}, 2021.

\bibitem{sedghi2020machine}
L.~Sedghi, Z.~Ijaz, M.~Noor-A-Rahim, K.~Witheephanich, and D.~Pesch, ``Machine
  learning in event-triggered control: Recent advances and open issues,''
  \emph{arXiv:2009.12783}, 2020.

\bibitem{gatsis2018sample}
K.~Gatsis and G.~J. Pappas, ``Sample complexity of networked control systems
  over unknown channels,'' in \emph{Proc. Conf. Decision and Control}, 2018,
  pp. 6067--6072.

\bibitem{leong2020deep}
A.~S. Leong, A.~Ramaswamy, D.~E. Quevedo, H.~Karl, and L.~Shi, ``Deep
  reinforcement learning for wireless sensor scheduling in cyber–physical
  systems,'' \emph{Automatica}, vol. 113, p. 108759, 2020.

\end{thebibliography}

\appendix
\section{Appendix}
\subsection{Proof of Lemma \ref{lem:L2_gain}} \label{app:proof_lem_L2}

As done in the continuous-time case \cite{mirkin2007some}, we handle the operator $\Delta$ in the lifted domain. For a signal $g\in\ell_{2e}^n$, the corresponding \emph{lifted signal} is defined as
\begin{equation*}
\underline{g} \coloneqq \left\{ \begin{bmatrix} g(0) \\ \vdots \\ g(t_1-1) \end{bmatrix},\begin{bmatrix} g(t_1) \\ \vdots \\ g(t_2-1) \end{bmatrix}, \ldots
\right\}.
\end{equation*}
Further, for a $T\in\N_0$, we define the \textit{lifted truncated signal} corresponding to $g_T$ as
\begin{equation*}
\underline{g}_T \coloneqq \left\{ \begin{bmatrix} g(0) \\ \vdots \\ g(t_1-1) \end{bmatrix}, \ldots,
\begin{bmatrix} g(t_K) \\ \vdots \\ g(T) \\ 0 \end{bmatrix},\begin{bmatrix} 0 \\ \vdots \\ 0 \end{bmatrix},\ldots
\right\},
\end{equation*}
where $K\coloneqq\max\{k\;\vert\; t_k\le T\}$. As the $t_k$ are not equidistant in time, the time axis in the lifted signal is split non-uniformly as well. Nonetheless, the $\ell_2$ norm of the original signal is preserved in the lifted domain $\lVert \underline{g}_T \rVert_{\ell_2}^2 = \lVert g_T \rVert_{\ell_2}^2$ \cite{chen1995optimal}. Further, we introduce a lifted operator $\underline{\Delta}$ mapping $\underline{y}\to\underline{e}$ via
\begin{equation*}
\underline{e}(k)=(\underline{\Delta}\underline{y})(k)\coloneqq C_k \underline{y}(k), \; k\in\N_0
\end{equation*}
with
\begin{equation*}
C_{k} \coloneqq \begin{bmatrix} 0 & \cdots & 0 & 0 \\ I & \ddots & \vdots & \vdots \\ \vdots & \ddots & 0 & 0 \\ I & \cdots & I & 0 \end{bmatrix}\in\R^{h_k\times h_k}.
\end{equation*}

Now, let us rewrite $\Delta$ using $\tau(t) = t- t_k$ as
\begin{equation*} \label{eq:def_delta_alt}
(\Delta y)(t) \coloneqq \sum_{i= t_k}^{t-1}  y(i), \; t\in\N_{[t_k,t_{k}+h_k-1]}, \; k\in\N_{0},
\end{equation*}
from which it is easy to recognize that $\underline{\Delta}\underline{y}_T=\underline{(\Delta y)}_T$ for any $T\in\N_0$. Since lifting preserves the signal norms, it clearly holds that $\lVert\underline{\Delta}\underline{y}_T\rVert_{\ell_2}=\lVert\Delta y_T\rVert_{\ell_2}$. Since both $\underline{\Delta}$ and $\Delta$ are causal and $\lVert \underline{y}_T \rVert_{\ell_2}^2 = \lVert y_T \rVert_{\ell_2}^2$, we conclude that the $\ell_2$ gain of $\underline{\Delta}$ is equal to that of $\Delta$.

The crucial property of $\underline{\Delta}$ is that it is static, i.e., $\underline{e}(k)$ depends on $\underline{y}(k)$ only. Furthermore, it merely sums the inputs $y(t)$ in between sampling instants. As a result, its $\ell_2$ gain is the maximum $\ell_2$ gain of the summation operator over the intervals $[0,h_k-1]$, $k\in\N_0$, which is clearly attained in the longest possible interval $\overline{h}-1$. To summarize, the $\ell_2$ gain of $\underline{\Delta}$ is given by that of $D:\ell_{2e}^{n}[0,\overline{h}-1]\to\ell_{2e}^{n}[0,\overline{h}-1]$, $y\mapsto Dy$, $(Dy)(t)\coloneqq\sum_{i=0}^{t-1}y(i)$, where $\ell_{2}[0,\overline{h}-1]$ denotes the space of bounded signals of length $\overline{h}$.

We will now bound $\lVert D\rVert_{\ell_2}=\sup_{y\in\ell_{2e}^n,y\neq 0} \frac{\lVert Dy \rVert_{\ell_2}}{\lVert y \rVert_{\ell_2}}$ explicitly. Let us first take a look at the square of the fraction
\begin{equation}
\frac{\lVert Dy \rVert^2_{\ell_2}}{\lVert y \rVert^2_{\ell_2}} = \frac{\sum_{t=0}^{\overline{h}-1} \left(\sum_{i=0}^{t-1}y(i)^\top\right)\cdot \left(\sum_{i=0}^{t-1}y(i)\right)}{\sum_{t=0}^{\overline{h}-1}y(t)^\top y(t)}. \label{eq:L2_fraction}
\end{equation}
Next, we turn our attention to the product of sums in the numerator, which can be rewritten and upper bounded for all $t\in\N_{[0,\overline{h}-1]}$ as follows (denoting the maximum eigenvalue of a symmetric matrix $A\in\R^{n\times n}$ by $\lambda_\text{max}(A)$):
\begin{align}
&\left(\textstyle\sum_{i=0}^{t-1}y(i)^\top\right) \cdot \left(\textstyle\sum_{i=0}^{t-1}y(i)\right) \nonumber \\
&=	\begin{bmatrix} y(0) \\ \vdots \\ y(t-1) \\ y(t)	\end{bmatrix}^\top \Bigg(
\underbrace{\begin{bmatrix} 1 & \cdots & 1 & 0 \\
	\vdots & \ddots & \vdots & \vdots \\
	1 & \cdots & 1 & 0 \\
	0 & \cdots & 0 & 0 	\end{bmatrix}}_{\eqqcolon E_t}\otimes I\Bigg)
\begin{bmatrix} y(0) \\ \vdots \\ y(t-1) \\ y(t) \end{bmatrix} \nonumber \\
&\le \lambda_\text{max}(E_t\otimes I) \big\lVert \hspace{-1.1pt} \left[ y(0) \; \cdots \; y(t) \right]\hspace{-1.1pt} \big\rVert_2^2 = t \sum_{i=0}^{t}y(i)^\top y(i). \label{eq:L2_estimate}
\end{align}
Inequality \eqref{eq:L2_estimate} follows directly from the facts that $E_t\otimes I$ is symmetric and that $\lambda_\text{max}(E_t\otimes I)=\lambda_\text{max}(E_t)\lambda_\text{max}(I)=t$. Since it clearly holds that $\lambda_\text{max}(I)=1$, it remains to prove that $\lambda_\text{max}(E_t)=t$: First, we check that $t$ is indeed an eigenvalue of $E_t$ with the corresponding eigenvector $[1 \; \cdots \; 1 \; 0	]^\top$. Second, we note that $E_t$ is positive semidefinite, since it can be factorized as $E_t = [1 \; \cdots \; 1 \; 0]^\top [*]$, i.e., all eigenvalues are real and greater than or equal to zero. Lastly, we note that $\text{tr}(E_t) = t = \sum_i \lambda_i(E_t)$ and combine this finding with the first two to conclude that $\lambda_\text{max}(E_t)=t$.

We plug the estimate \eqref{eq:L2_estimate} into \eqref{eq:L2_fraction} to obtain
\begin{align*}
\frac{\lVert Dy \rVert^2_{\ell_2}}{\lVert y \rVert^2_{\ell_2}}
&\le	\frac{\sum_{t=0}^{\overline{h}-1} t \sum_{i=0}^{t}y(i)^\top y(i)}{\sum_{t=0}^{\overline{h}-1}y(t)^\top y(t)} \\
&\le \frac{\sum_{t=0}^{\overline{h}-1} t \sum_{i=0}^{\overline{h}-1}y(i)^\top y(i)}{\sum_{t=0}^{\overline{h}-1}y(t)^\top y(t)}
= \sum_{t=0}^{\overline{h}-1} t = \frac{\overline{h}}{2}(\overline{h}-1),
\end{align*}
which concludes the proof.

\begin{IEEEbiography}[{\includegraphics[width=1in,height=1.25in,clip,keepaspectratio]{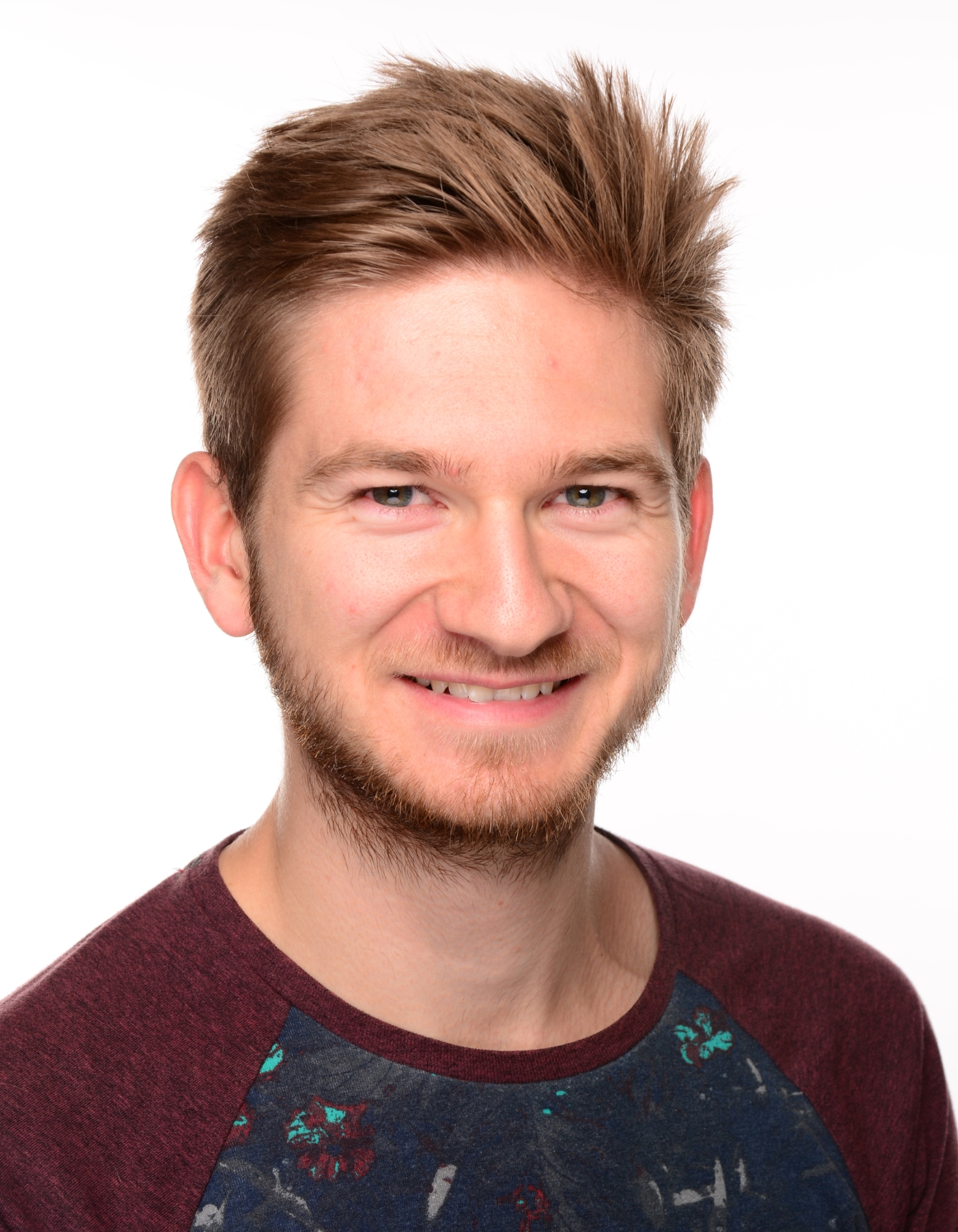}}]{Stefan Wildhagen} received the Master’s degree in Engineering Cybernetics from the University of Stuttgart, Germany, in 2018. He has since been a doctoral student at the Institute for Systems	Theory and Automatic Control under supervision of Prof. Allg\"ower and a member of the Graduate School Simulation Technology at the University of Stuttgart. His research interests are in the area of Networked Control Systems, with a focus on optimization-based scheduling and control as well as on data-driven methods.
\end{IEEEbiography}

\begin{IEEEbiography}[{\includegraphics[width=1in,height=1.25in,clip,keepaspectratio]{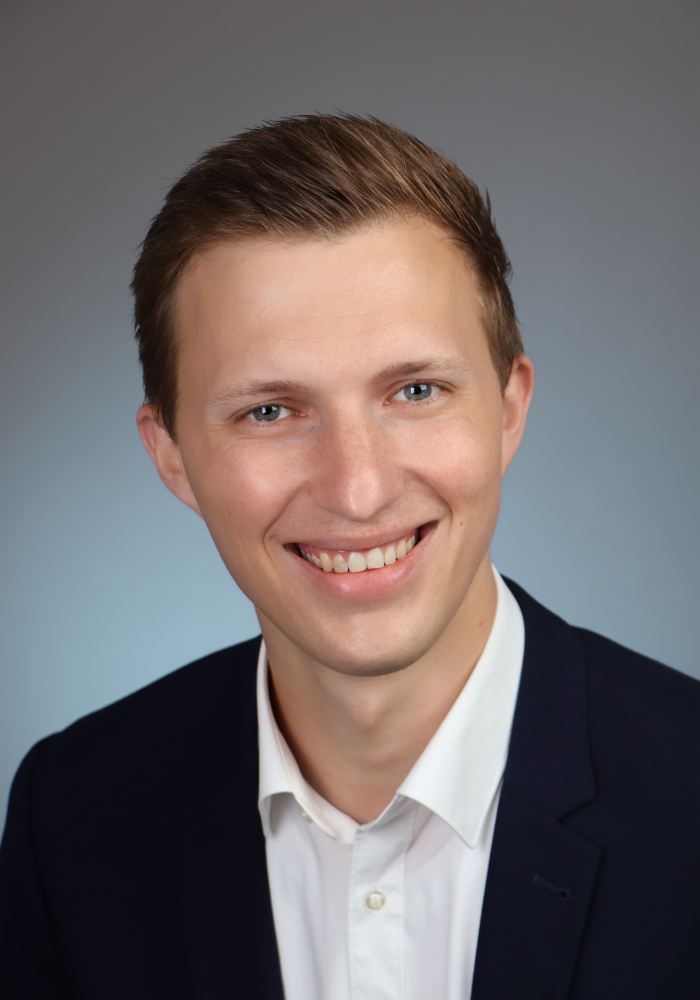}}]{Julian Berberich} received the Master’s degree in Engineering Cybernetics from the University of Stuttgart, Germany, in 2018. Since 2018, he has been	a Ph.D. student at the Institute for Systems Theory and Automatic Control under supervision of Prof.	Frank Allg\"ower and a member of the International Max-Planck Research School (IMPRS) at the University of Stuttgart. He has received the Outstanding Student Paper Award at the 59th Conference on Decision and Control in 2020. His research interests are in the area of data-driven analysis and control.
\end{IEEEbiography}

\begin{IEEEbiography}[{\includegraphics[width=1in,height=1.25in,clip,keepaspectratio]{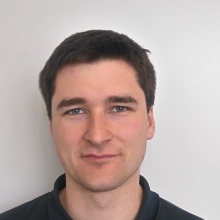}}]{Michael Hertneck} received the Master’s degree in Mechatronics from the University of Stuttgart, Stuttgart, Germany, in 2019. He has since been a research and teaching assistant at the Institute for Systems Theory and Automatic Control and a member of the Graduate School Simulation Technology at the University of Stuttgart. His research interests include Networked Control Systems with a focus on time- and event-triggered sampling strategies.
\end{IEEEbiography}

\begin{IEEEbiography}[{\includegraphics[width=1in,height=1.25in,clip,keepaspectratio]{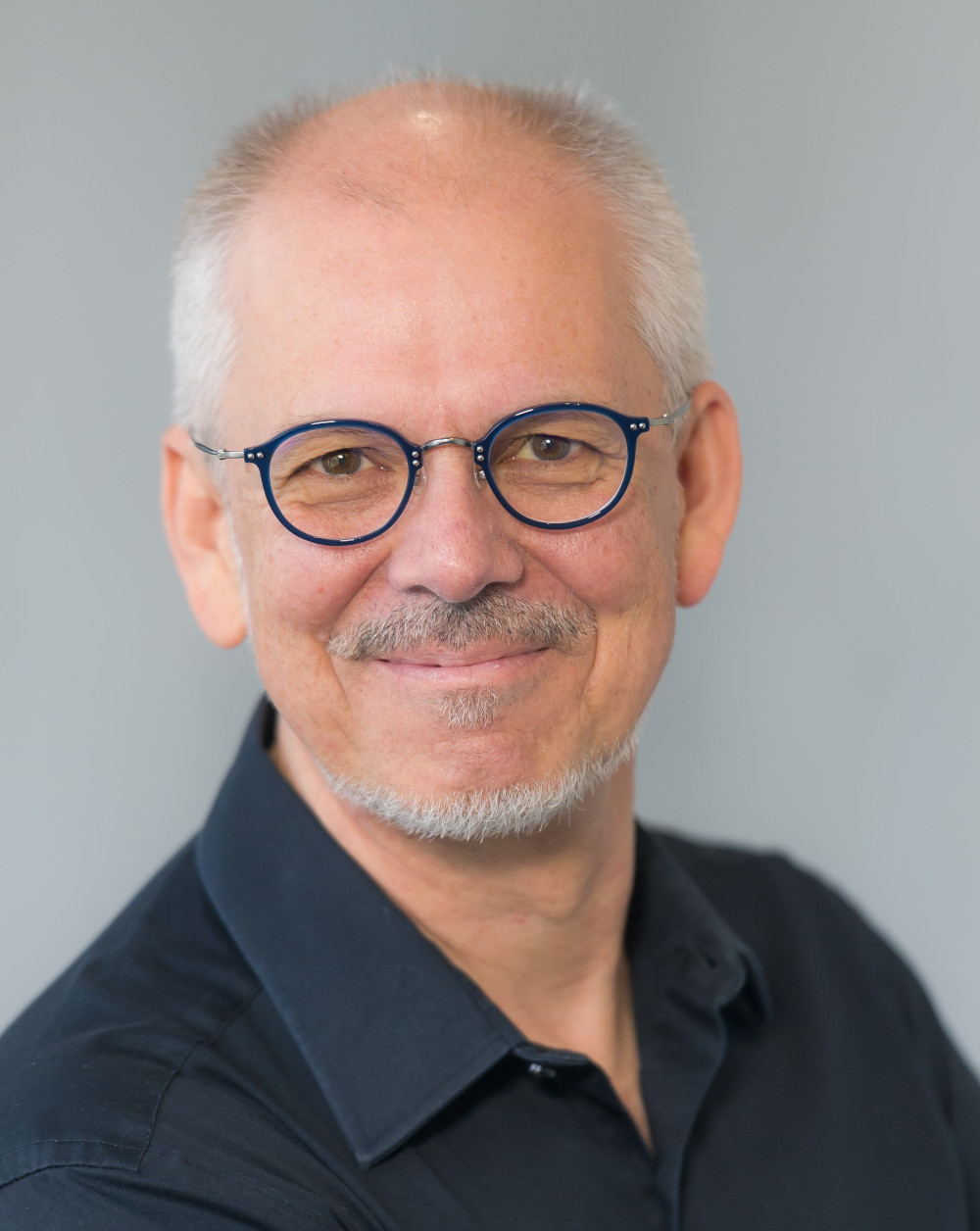}}]{Frank Allg\"ower} is professor of mechanical engineering at the University of Stuttgart, Germany, and Director of the Institute for Systems Theory and Automatic Control (IST) there.
	
He is active in serving the community in several roles: Among others he has been President of the International Federation of Automatic Control (IFAC) for the years 2017-2020, Vicepresident for Technical Activities of the IEEE Control Systems Society for 2013/14, and Editor of the journal Automatica from 2001 until 2015. From 2012 until 2020 he served in addition as Vice-president for the German Research Foundation (DFG), which is Germany’s most important research funding organization.
	
His research interests include predictive control, data-based control, networked control, cooperative control, and nonlinear control with application	to a wide range of fields including systems biology.
\end{IEEEbiography}

\end{document}